\renewcommand{\aa}{a}
\definecolor{gr}{gray}{0.8}
\newcommand\redsout{\bgroup\markoverwith{\textcolor{red}{\rule[0.5ex]{2pt}{1.0pt}}}\ULon}
\newcommand\bluesout{\bgroup\markoverwith{\textcolor{blue}{\rule[0.5ex]{2pt}{1.0pt}}}\ULon}
	\theoremstyle{plain}
	\newtheorem{lemma}{Lemma}
	 \newtheorem{assumption}{Assumption}
	 \newtheorem{proof of proposition}{Proof of Proposition}
	\newtheorem{proposition}{Proposition}
	\theoremstyle{definition}
  \newtheorem{remark}{Remark}
\renewcommand{\r}{{\mathbb R}}
\newcommand{\beqn}{\begin{eqnarray*}}
\newcommand{\eeqn}{\end{eqnarray*}}
\newcommand{\be}[1]{\begin{equation}\label{#1}}
\newcommand{\ee}{\end{equation}}
\newcommand{\bi}{\begin{itemize}}
\newcommand{\ei}{\end{itemize}}
\newcommand{\ben}{\begin{enumerate}}
\newcommand{\een}{\end{enumerate}}
\title{A Mathematical Description of \\Bacterial Chemotaxis  in Response to Two Stimuli}
\author{Jeungeun Park 
\thanks{Department of Mathematical Sciences, University of Cincinnati, OH, USA.
	{\tt\small  park4ju@ucmail.uc.edu}}
\and Zahra Aminzare
\thanks{Department of Mathematics,  University of Iowa, IA, USA.
	{\tt\small zahra-aminzare@uiowa.edu}}
	}
\date{}
\begin{document}
\maketitle

\begin{abstract}
\noindent
Bacteria are often exposed to multiple stimuli in complex environments, and their efficient chemotactic decisions are critical to survive and grow in their native environments. 
Bacterial responses to the environmental stimuli depend on the ratio of their corresponding chemoreceptors. 
By incorporating the signaling machinery of individual cells, 
we analyze the collective motion of 
a population of \emph{Escherichia coli} bacteria in response to two stimuli, mainly serine and methyl-aspartate
(MeAsp), in a one-dimensional and a two-dimensional environment, which is inspired by experimental results in Y. Kalinin \emph{et al.}, J. Bacteriol. 192(7):1796--1800, 2010.
Under suitable conditions, we show that if the ratio of the main chemoreceptors of individual cells, namely Tar/Tsr is less than a specific threshold, the bacteria move to the gradient of serine, and if the ratio is greater than the threshold, the group of bacteria move toward the gradient of MeAsp.
Finally, we examine the theory with Monte-Carlo agent-based simulations, and verify that our results qualitatively agree well with the experimental results in Y. Kalinin \emph{et al.} (2010).
\end{abstract}
\textbf{Key words.} 
Chemotaxis, 
Multi-scale dynamics, 
Population dynamics,
Intracellular decision making, 
Fokker-Planck equations,
Advection-diffusion equations, Monte-Carlo simulations.

\textbf{Mathematics Subject Classification (2020). } 35Q92, 58J55, 60J75, 92B05, 92C17, 92D25

\section{Introduction} \label{introduction}

The preferred movement of a bacterium along the gradient of chemical substances, the so-called chemotaxis,  includes a directed movement (run) and a relatively short random turning (tumble). 
See e.g., \cite{berg1972chemotaxis} and \cite{macnab1972gradient} for 
{\em Escherichia coli} (\emph{E.~coli}) and
{\em Salmonella typhimurium} chemotaxis.
Each bacterium carries an internal state which may be modeled by a system of ordinary differential equations. In the presence of a stimulus in the environment, each cell changes its direction at random, with a tumbling rate which depends on the internal state, biasing moves toward more favorable environments or away from noxious substances.

In natural environments, bacteria are often exposed to multiple chemical stimuli. To navigate toward a favorable environment, they choose their directions of movement based on environmental perception, individual preferences, and interaction with others. Also, each individual's decision characterizes the behavior of a group of bacteria. Thus, understanding how bacterium chooses between multiple stimuli is essential to study bacterial chemotaxis at the population level.

In the case of \emph{E.~coli}, chemical signals are often detected via five main chemoreceptors, namely Tar, Tsr, Tap, Trg, and Aer \cite{vladimirov2009chemotaxis}.
In \cite{kalinin2010responses}, where responses of \emph{E.~coli} to two chemoattractant signals are demonstrated, it is shown that 
the expression levels of the most abundant receptors, Tar and Tsr, are determined by the bacterial density in a batch-mode culture within the growth phase; in turn, the ratio of these receptors differentiates their chemical preferences.

Inspired by the experimental results of  \cite{kalinin2010responses}, our goal of this work is to incorporate the bacterial decision-making process into a mathematical model and investigate  the corresponding collective behavior observed in  \cite{kalinin2010responses}. 
To this end, we consider a population of bacteria in a one-dimensional and a two-dimensional spatial domain occupied by two stimuli that their temporal rates are assumed to be zero. 
First, we employ a Fokker-Planck type master equation (also known as balance equation \cite{alt1980biased}) 
to describe the bacterial chemotaxis.  This (microscopic) model enables us to incorporate the internal dynamics of \emph{E.~coli} representing the chemotaxis signaling pathway \cite{tu2008modeling,edgington2018mathematical}. Then, we describe the \emph{E.~coli} population dynamics by a (macroscopic) advection-diffusion equation, which is analogous to the classic Keller-Segel model \cite{keller1971model}, and can be derived from the microscopic model by the tools developed in \cite{erban2004individual}.

Mathematical modeling aiming to understand the behavior of bacteria population in response to external signals has been extensively studied (see \cite{tindall2008overview} for a review on multi-scaling model approaches for chemotaxi). 
In \cite{erban2004individual, erban2005signal}, the authors studied \emph{E.~coli} chemotaxis in response to a single stimulus in a one-dimensional and an arbitrary dimensional
space, respectively. 
These studies were generalized in \cite{xue2009multiscale}  to 
multiple space- and time-dependent signals by applying a  general type of receptor based-response laws
\cite{othmer1997aggregation, painter2000development}. 
These works considered a {\it toy} model for the internal dynamics. 
In \cite{aminzare2013remarks, menolascina2017logarithmic}, the authors allow arbitrary one-dimensional internal dynamics in response to a time-independent 
signal and more realistic models for \emph{E.~coli} internal dynamics given in  \cite{tu2008modeling, kalinin2009logarithmic}. The theory was further generalized to higher dimensional  space and multiple signals in \cite{xue2015macroscopic, xue2016moment}. 
The authors in \cite{hu2014behaviors} 
incorporated \emph{E.~coli} signalling pathway from \cite{tu2008modeling} into a one-dimensional macroscopic equation in order to understand various taxis behaviors in \cite{salman2007concentration, demir2011effects, yang2012opposite}. The macroscopic model was also validated by comparing with available experimental data that show the ratio of Tar and Tsr affects bacterial thermotaxis and pH taxis.

Our contributions towards understanding the dynamics of a population of bacteria in response to two stimuli are as follows.
First, we incorporate a relatively \textit{general} class of  one-dimensional internal dynamics into a one- and a two-dimensional microscopic equation
from which derives a macroscopic equation.  
Second, we use the macroscopic model for a population of \emph{E.~coli} with a mechanistically realistic, while a mathematically tractable, model of internal dynamics and analyze the response of \emph{E.~coli} to two stimuli in a one- and a two-dimensional environment. By analyzing the steady state solution of the macroscopic equation, we further show that there is a critical ratio of receptors that determines bacterial movement toward their favored chemical.
Finally, we demonstrate some Monte-Carlo agent-based simulations for different types of stimuli and compare them with numerical solutions of the model. We also explain that  the Monte-Carlo simulations results
agree well with the experimental results of \cite{kalinin2010responses}.

The remainder of the paper is organized as follows. 
In Section \ref{Microscopic}, we first review the internal dynamics of \emph{E. coli} which describe how the cells can produce runs and tumbles. 
Then, given a general internal dynamics of bacteria, we introduce a (forward) Fokker-Planck equation which describes the dynamics of a probability distribution of a population of bacteria.
In Section \ref{AD:1dim} (respectively, Section \ref{AD:2dim}), we first derive a
one-dimensional (respectively, two-dimensional) advection-diffusion equation which approximates the 
Fokker-Planck equation with a general internal dynamics. Then, we focus on a population of \emph{E. coli} with a specific internal dynamics. 
Also, a bifurcation parameter and its value of bacterial chemical preferences are identified. It is further verified by comparing the solutions to the advection-diffusion equations with those of Monte-Carlo agent-based simulations in
 Section \ref{simulation:1dim} (respectively, Section \ref{simulation:2dim}) for different combinations of stimuli. 
In Section \ref{Discussion}, we conclude with a brief summary and discussion of future directions.
In Appendix \ref{Appendix}, we summarize the models with parameter values that we use for the internal dynamics in Section \ref{Microscopic} and for the derivation of the macroscopic equation in Sections \ref{AD:1dim} and \ref{AD:2dim}. The appendix also  provides a brief description of the Monte-Carlo agent-based simulation and an overview of our numerical simulations with input data.

\section{Microscopic behavior of a population of \emph{E.\ coli} bacteria}\label{Microscopic}

We briefly review the internal dynamics of \emph{E.\ coli} which transfer a signal of the environment into a motor rotation for a run or a tumble (see \cite{tu2008modeling, kalinin2009logarithmic, jiang2010quantitative} for more details).  Then, following \cite{erban2004individual, aminzare2013remarks, Othmer1988ModelsOD},  we derive a probabilistic equation which describes  \textit{microscopic} dynamics of a population of bacteria with a given internal dynamics.
Later, in the following section, we use the microscopic equation to derive a \textit{macroscopic} 
equation which approximates the dynamics of a population of bacteria by integrating   the internal dynamics of all the bacteria. 

\subsection{The internal dynamics of \emph{E.\ coli}: A brief review}
\label{internal:review}

\emph{E.~coli} bacteria use four to six helical flagella that are connected to rotary motors in their cell wall to swim. Their swimming patterns are characterized as a random walk, consisting of long runs ($\sim 1$ sec) and short tumbles ($\sim 0.1$ sec). When a cell senses an increasing of external attractant gradient, the run length is extended \cite{berg1972chemotaxis,berg1990chemotaxis}.
The receptors on the membrane of the cells, which receive the signals, 
and the flagella motors, which produce runs and tumbles, are connected by a signaling pathway within the cell, as shown in Figure~\ref{integral:feedback}(left),  \cite{wadhams2004making}. 
Each receptor is linked to a histidine kinase CheA, through a linker protein CheW. 

In the absence of an attractant gradient, CheA autophosphorylates and  produces CheA-P.  
Phosphoryl group of CheA-P transfers to either CheY or CheB.
 Phosphorylated CheY (denoted by CheY-P) increases the probability of tumbles by rotating the motor clockwise \cite{welch1993phosphorylation,bren1996signal, lipkow2005simulated}. 
CheZ accelerates the dephosphorylation of CheY-P, which quickly modulates the motion of flagella \cite{lipkow2006changing}.
 
In the presence of an attractant gradient, a ligand binds to a receptor and inhibits the activity of CheA, followed by decreasing the CheY-P and CheB-P levels. The reduction in CheY-P levels lengthens the run with a counter-clockwise motor rotation.

To respond to further changes in the concentration of a gradient, 
CheR and CheB-P mediate adaptation. 
On the one hand, CheR methylates the receptors and hence enhances CheA activity \cite{springer1977identification}. 
On the other hand, CheB-P demethylates the receptors and consequently inhibits the activity of CheA \cite{stock1978protein}. Therefore, when an attractant gradient is sensed, the CheA-P level, and thus the CheB-P level decrease. 
While the CheB-P level decreases, the receptors are methylated by CheR, and they return to their pre-stimulus state, followed by the pre-stimulus values of CheA activity, CheA-P and CheY-P levels, and motor bias. This process is called an adaptation of methylation.

The intracellular chemotaxis signaling pathway, which contains three main phosphorylation groups and the receptor methylation level, can be mathematically modeled by four coupled ordinary differential equations (ODEs) that consist of three biochemical equations for CheA-P, CheB-P, and CheY-P, and one equation for the  methylation level of receptors. 
However, the phosphorylation processes and the methylation process occur at different time scales, and one can reduce the 4-dimensional system into a 3-, 2-, or even a 1-dimensional system. In \cite{edgington2018mathematical}, the authors explained these reductions in detail. 

It is known that the adaptation process of methylation is much slower
than the other dynamics in the signaling pathway  \cite{erban2005signal, bray1995computer,terwilliger1986kinetics,simms1987purification}. 
Therefore, assuming quasi-equilibrium approximations for CheA-P, CheB-P, and CheY-P,
we consider a one-dimensional reduction model for the methylation level of receptors, as developed in \cite{tu2008modeling}.  

Consider the following input-output dynamics for the chemotaxis signaling pathway, as shown in Figure~\ref{integral:feedback}(right). 
The ligand concentration, denoted by $S$, and the tumbling rate, denoted by $\lambda$, represent the input and the output, respectively. 
As explained above, binding the ligand to the receptor inhibits the activity of CheA, denoted by $\aa$. On the other hand, the methyl group (denoted by $m$) in the receptors enhances the activity of $\aa$. Therefore, $\aa= G(S,m)$ can be described as an increasing function of $m$ and a decreasing function of $S$. 

\medskip

\begin{figure}[ht]
\floatbox[{\capbeside\thisfloatsetup{capbesideposition={right,top},capbesidewidth=6cm}}]{figure}[\FBwidth]
{\includegraphics[width=8cm]{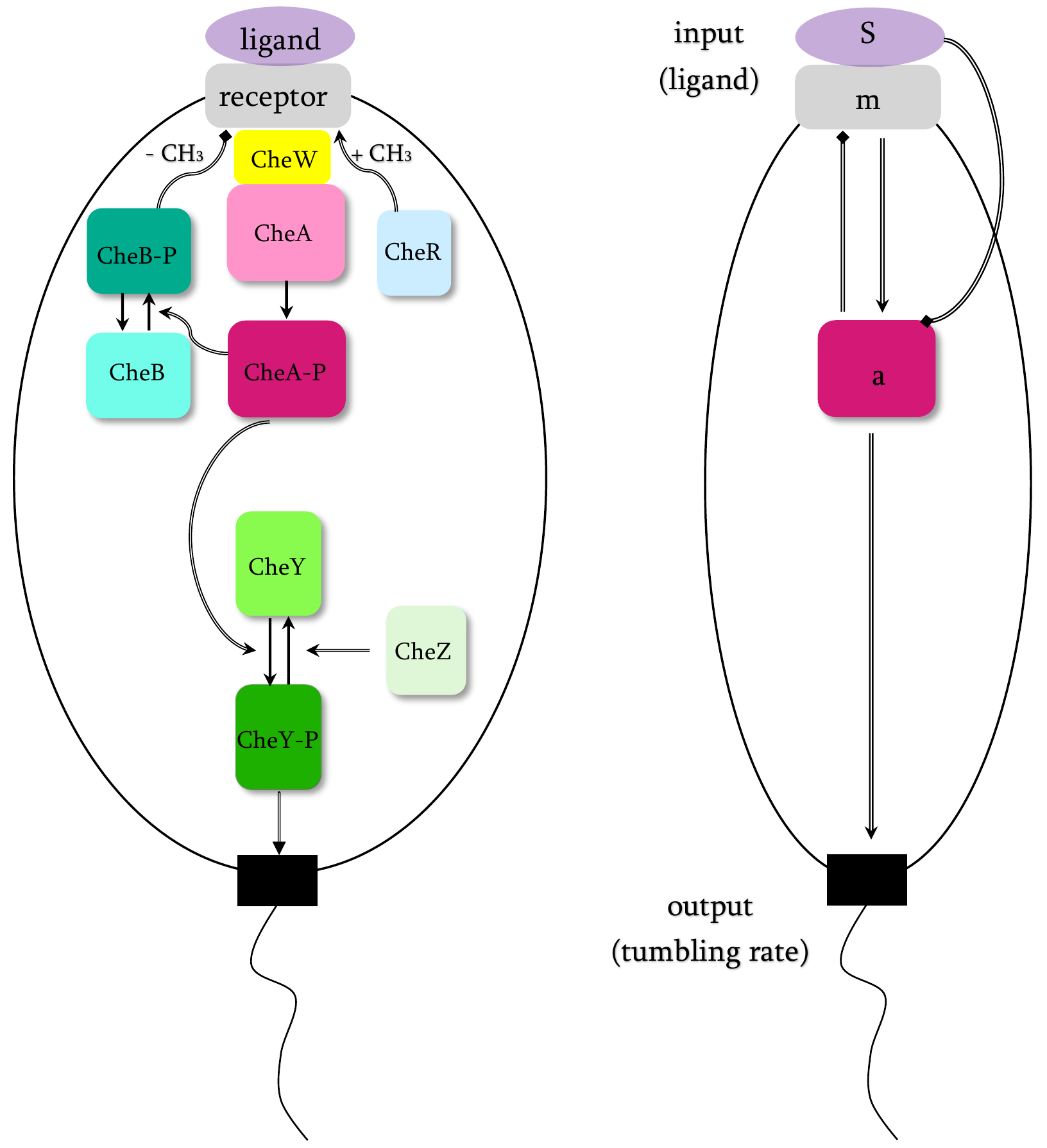}}
{ \caption{\small{{Left:{
\emph{E.~coli} signaling pathway}.
Binding ligands to receptors, the signal is transduced to the flagellar motor via six cytoplasmic chemotaxis proteins. Right: An input-output representation of \emph{E.~coli} signaling pathway. The internal signaling pathway shown in left  is reduced to the interaction between the methylation level $m$ and the kinase activity $\aa$. This interaction, which depends on ligand concentration $S$ (input), controls the motor rotation by changing the tumbling rate (output). See Section \ref{internal:review} for detailed description.}}
}\label{integral:feedback}}
\end{figure}
As described earlier, the kinase activity of CheA enhances the CheB-P level, and CheB-P reduces the methylation level of the receptors. Consequently, the kinase activity $\aa$ reduces the methylation level $m$, indirectly. So, the dynamics of $m$ can be described by ${dm}/{dt} = F(\aa)$, where $F$ is a decreasing function of $\aa$.

Several models for methylation dynamics ($F$) and kinase activity function ($G$) have been developed. See e.g., \cite{tu2008modeling,  edgington2018mathematical, vladimirov2008dependence, clausznitzer2010chemotactic}. For ease of calculation, we choose models for $F$ and $G$ as described in \eqref{output:a} and \eqref{methylation:m} below. Despite of the simplicity, the models capture the essential features such as receptor cooperativity, methylation on kinase activity and adaptation for \emph{E.~coli} signaling pathway, and they were verified by existing experiments. More details are discussed below.

Note that the tumbling rate $\lambda$ is controlled by the level of CheY-P, which is affected by the kinase activity. Therefore, $\lambda$ can be modeled by an increasing function of $\aa$, as described in \eqref{tumble_rate} below. 

Following the experimental set up in Kalinin {\em et al.} \cite{kalinin2010responses}, we consider two stimuli: $S_1$ and $S_2$, which, respectively, stand for methyl-aspartate (MeAsp) and serine, and can be sensed by chemoreceptors Tar and Tsr. 
Furthermore, since the experiments in \cite{kalinin2010responses} are designed 
to keep the external signals $S_{1}$ and $S_{2}$ constant in time, we assume that $S_1$ and $S_2$ only depend on the spatial variable $\mathbf{x}$ and are independent of time $t$: $S_1= S_1(\mathbf{x})$ and $S_2=S_2(\mathbf{x}).$

Following \cite{hu2014behaviors,mello2005allosteric, neumann2010differences}, we let a heterogeneous Monod-Wyman-Chageux (MWC) model \cite{monod1965nature} describe 
the kinase activity $\aa$:
\begin{equation}\label{output:a}
G(S_{1}, S_{2}, m) = \dfrac{1}{1+\eta_0(m) \eta_1(S_1) \eta_2(S_2)},
 \end{equation}
 where $\eta_{0}(m)\eta_{1}(S_{1})\eta_{2}(S_{2})$ is derived from the total free energy difference between the active and inactive states. 
 According to \cite{ tu2008modeling, jiang2010quantitative, sourjik2002receptor, shimizu2006monitoring, mello2007effects}, the methylation-dependent free energy gives 
 \[\eta_0(m) = \exp(N\alpha (m_0-m)),\] 
 where $N$ is the number of the responding receptor dimers in the cluster, and $\alpha$ and $m_{0}$ denote the free-energy per added methylation group and a reference methylation level, respectively.
 The ligand-depdent free-energy obtains 
\[
\eta_i(S_i) = \left(\dfrac{1+{S_i}/{K_I^i}}{1+{S_i}/{K_A^i}}\right)^{N r_i}, 
\]
where $K_I^i$ and $K_A^i$
are the dissociation constants of the corresponding ligand ($i=1$ for MeAsp, $i=2$ for serine) to the inactive and the active receptor ($i=1$ for Tar, $i=2$ for Tsr).
The constant parameters 
$r_1$ and $r_2$ 
are the fraction of receptors Tar and Tsr in the receptor cluster, respectively. We assume that $r_{1} + r_{2} =1$ and $r_1N$ and $r_2N$ are the number of the receptors binding to the corresponding ligand. 

 The \textit{average} methylation level, $m$,  of receptors evolves slowly and can be described by the following equation \cite{tu2008modeling, jiang2010quantitative}:
 \begin{equation}\label{methylation:m}
 \dfrac{dm}{dt} = F(a) = \dfrac{a_0 - a}{\tau_a},
 \end{equation}
where $\tau_a\gg1$ is the time scale and $a_0$ is a constant which represents the adaptation level of $\aa$, i.e., when $\aa>a_0$, ${dm}/{dt}<0$ and hence $m$ and consequently $\aa$ decrease.  When $\aa<a_0$,  ${dm}/{dt}>0$ and hence $m$ and consequently $\aa$ increase. 

It is more convenient to use $\aa$ as a state variable instead of the methylation level $m$.  
Taking time derivative of $a$ gives:
\begin{align}\label{dynamics:a}
 \dfrac{da}{dt} \;=\; \dfrac{\partial a}{\partial m}\;\dfrac{d m}{d t} +  \dfrac{\partial a}{\partial S_1}\;\nabla_{\mathbf{x}}{S_1}\cdot\dfrac{d\mathbf{x}}{dt} +  \dfrac{\partial a}{\partial S_2}\;\nabla_{\mathbf{x}}{S_2}\cdot\dfrac{d\mathbf{x}}{dt}.  
  \end{align}
 Using \eqref{output:a} for $\aa=G(S,m)$, we obtain
 \begin{align*}
\dfrac{\partial a}{\partial m}
\;=\; \alpha N a(1-a),
\quad
\dfrac{\partial a}{\partial S_i}
= Na(a-1) r_i\;  \dfrac{1/K_I^i - 1/K_A^i}{(1+S_i/K_I^i)(1+S_i/K_A^i)} 
\;. 
 \end{align*}
For $i=1,2$, we assume that for any $\mathbf{x}$,
\begin{equation*}\label{assumption:KI:KA}
K_I^i\ll S_i (\mathbf{x})\ll K_A^i, 
\end{equation*}
as in \cite{tu2008modeling, jiang2010quantitative}.
This assumption guarantees scale-invariant behavior of \emph{E.~coli} in response to external signals, which was mathematically predicted in \cite{shoval2010fold} 
and experimentally verified in \cite{lazova2011response}.
Scale-invariance property of a system means that the system does not distinguish between an input (here, $S_1$ or $S_2$) and its scaled version (e.g., $p_1S_1$ or $p_2 S_2$). For more details, see
\cite{edgington2018mathematical} and \cite{shoval2011symmetry}. 
Using this assumption, we make the following approximation 
\[
 \dfrac{1/K_I^i - 1/K_A^i}{(1+S_i/K_I^i)(1+S_i/K_A^i)} \approx \dfrac{1}{S_i}. 
\]
Therefore, 
 \begin{align} \label{dynamics:a:part}
 \dfrac{da}{dt} \;=\;\alpha N a(1-a)\; \dfrac{a_0 - a}{\tau_a} +
Na(a-1)\Big( r_1\dfrac{\nabla_{\mathbf{x}}{S_1}\cdot\;{d\mathbf{x}}/{dt}}{S_1} +r_2\dfrac{\nabla_{\mathbf{x}}{S_2}\cdot\;{d\mathbf{x}}/{dt}}{S_2}  \Big).
 \end{align}
 
n the case of an one-dimensional space, we use the notation $S_i'=dS_i/dx$ ($i=1,2$) and denote $\gamma:=r_1/r_2$  denotes the ratio Tar/Tsr. Recall that $r_1+r_2=1$, so indeed $r_1=\frac{\gamma}{1+\gamma}$ and $r_2=\frac{1}{1+\gamma}$. 
Experimental data on the parameters used in this section are listed in Table \ref{Table:Parameters}.

 \begin{remark}
 \emph{E.~coli} bacteria can also sense pH changes, and their internal dynamics during pH taxis is analogous to that during chemotaxis. For example, according to \cite{hu2014behaviors,demir2011effects,yang2012opposite}, Tar receptors are attracted to a decrease of pH, but Tsr receptors show the opposite response. 
 Taking into account two chemical stimuli with different pH levels, we can apply the heterogeneous MWC model and use the following assumptions to derive the internal dynamics for pH:
 \begin{align*}
K_I^1\ll S_1 (\mathbf{x})\ll K_A^1,
\quad\mbox {and}  \quad
K_A^2\ll S_2 (\mathbf{x})\ll K_I^2, 
\end{align*}
which yield
\begin{align*}
 \dfrac{1/K_I^1 - 1/K_A^1}{(1+S_1/K_I^1)(1+S_1/K_A^1)} \approx \dfrac{1}{S_1}, 
\quad\mbox {and}  \quad
\dfrac{1/K_I^2 - 1/K_A^2}{(1+S_2/K_I^2)(1+S_2/K_A^2)} \approx \dfrac{-1}{S_2}. 
\end{align*}
 \end{remark}

\begin{remark}
In this work, we are interested in the total receptor kinase activity of the entire receptor cluster. Thus, 
we do not consider two different methylation dynamics for two different type of receptors as in \cite{hu2014behaviors}. 
\end{remark}

 As a result of the slow adaptation process \eqref{methylation:m}, 
bacteria use their methylation state as a short-term memory store to compare changes of stimuli temporarily during a run. 
This process helps the bacteria to run or tumble effectively toward their preferred location. {According to experimental observations and measurements,}
the tumbling rate function can be described as
\begin{align}\label{tumble_rate}
    \lambda (\aa ) = \lambda_{0} +  \frac{1}{\tau} \Big( \frac{a}{a_{0}} \Big)^{H},
\end{align}
where $\lambda_{0}, H,$ and $\tau$ denote the rotational diffusion, the Hill coefficient of flagellar motor's response curve, and the average run time, respectively, and $\aa_{0}$ is as given in \eqref{methylation:m}. 
Note that since $\aa$ depends on $S$, we may write $\lambda=\lambda(\aa,S)$ (see Section \ref{Fokker-Planck} below). 
More details about the physical meaning of these parameters can be found in
\cite{hu2014behaviors,jiang2010quantitative,sourjik2002receptor}.  The parameter values are shown in Table \ref{Table:Parameters}.

\subsection{Deriving a Fokker-Planck equation describing a population of bacteria}\label{Fokker-Planck}

In what follows, we describe the motion of a population of bacteria by incorporating their internal dynamics.

Let $p(\mathbf{x}, \bm{\aa},\bm{\nu}, t)$ be a probability density function describing a population of bacteria, modeled in a $2\mathcal{N}+\mathcal{M}+1$ dimensional phase space, where 
time $t\in\r$, $\mathbf{x}=(x_1,\ldots,x_{\mathcal{N}})\in \r^\mathcal{N}$ (we will specialize to
$\mathcal{N}=1, 2$) denotes the position of a cell centroid, 
$\bm{\aa}=(\aa_1, \ldots,\aa_{\mathcal{M}})\in A\subset \r^\mathcal{M}$ (we will specialize to
$\mathcal{M}=1$) denotes the internal dynamics of the
cell, and $\bm{\nu} = (\nu_1,\ldots, \nu_{\mathcal{N}})\in V\subset\r^\mathcal{N}$ denotes its velocity, $d\mathbf{x}/dt ={ \bm{\nu} }$.
The vector $\bm{S}(\mathbf{x}, t)=(S_1(\mathbf{x},t), \ldots, S_{\mathcal{K}}(\mathbf{x},t))\in \r^{\mathcal{K}}$ represents the concentration of extracellular signals in the environment (we will assume that $\bm{S}$ only depends on $\mathbf{x}$ as in Section \ref{internal:review} and \cite{kalinin2010responses}). 

Let the following system of ODEs
describe the evolution of the intracellular state, in the presence of the
extracellular signal $\bm{S}$:
\begin{equation}
\label{deterministic_model}
\frac{d \bm{\aa}}{dt}\;=\; f(\bm{\aa}, \bm{S}),
\end{equation}
where $f\colon \r^{\mathcal{M}}\times\r^{\mathcal{K}}\to \r^{\mathcal{M}}$ is a continuously differentiable
function with respect to each component, i.e.,  $f\in C^1(\r^{\mathcal{M}} \times \r^{\mathcal{K}})$. 

Assuming constant velocity, $d\nu_i/dt=0$, the evolution of $p=p(\mathbf{x}, \bm{\aa}, \bm{\nu}, t)$ with turning rate $\lambda=\lambda(\bm{\aa}, \bm{S})$ is governed by
the following forward Fokker-Planck 
equation describing a velocity-jump process \cite{alt1980biased, Othmer1988ModelsOD}: 
\begin{equation}
\label{transport}
\frac{\partial p}{\partial t}+\nabla_{\mathbf{x}}\cdot \bm{\nu} p +\nabla_{\bm{\aa}}\cdot f p = -\lambda(\bm{\aa}, \bm{S})p +\displaystyle\int_V \lambda(\bm{\aa}, \bm{S}) T(\bm{\aa}, \bm{\nu}, \bm{\nu}')p(\mathbf{x}, \bm{\aa}, \bm{\nu}', t)\;d \bm{\nu}',
\end{equation}
where the non-negative kernel $T(\bm{\aa}, \bm{\nu}, \bm{\nu}')$ is the probability that the bacteria changes the velocity from $\bm{\nu}'$ to $\bm{\nu}$, and 
\[\displaystyle\int_V T(\bm{\aa}, \bm{\nu}, \bm{\nu}') \;d \bm{\nu}'=1.\]

Equation \eqref{transport} is not tractable mathematically and is hard to be validated by typical experimental techniques. 
The goal is to use the \textit{microscopic} model (\ref{transport}), and  derive a \textit{macroscopic} model  for chemotaxis in a one-dimensional space (in Section \ref{AD:1dim}) and a two-dimensional space (in Section \ref{AD:2dim}), i.e., an equation for the marginal density
\begin{equation*}
n(\mathbf{x}, t)\;=\;\displaystyle\int_V\displaystyle\int_A p(\mathbf{x}, \aa, \bm{\nu}, t)\;d\aa\; d\bm{\nu}, 
\end{equation*}
with 
$\mathcal{N}=1$ or 2, $\mathcal{M}=1$, and $\mathcal{K}=2$;
$n({\bf x},t)$ is the number  of individuals which at time $t$ are located at position $\bf{x}$, whatever their internal dynamics and  velocity are.

Note that our theory works for any arbitrary $\mathcal{K}$. However, we are interested in 
two extracellular signals, so we only consider 
$\mathcal{K}=2$.

\section{Advection-diffusion equation for chemotaxis in response to two stimuli in a one dimensional space}\label{AD:1dim}

In this section, we assume that the bacteria move in a one-dimensional space, i.e., a finite interval $[0,L]$ where we assume $L$ is sufficiently large. We let $p^{\pm}(x, \aa, t)= p(x, \aa,\pm\nu,  t)$ denote the density of the bacteria, located at $x\in[0,L]$,  moving to the right and left, respectively;  and let 
$f^{\pm} = f_0\pm \nu f_1$ describe their corresponding internal state. Here, $\nu>0$ represents the speed of the bacteria, and we assume that $\nu$ is constant. Then the Fokker-Planck equation (\ref{transport}) becomes 
\begin{align}
\frac{\partial p^+}{\partial t}\;+\;&\nu\frac{\partial p^+}{\partial x} \;+\; \frac{\partial }{\partial \aa} \left[f^{+}(\aa, \bm{S})\; p^+\right]\label{transport_one_dim_general:1}
\;=\; \dfrac{1}{2}\;\lambda(\aa, \bm{S}) (p^--p^+),\\
\frac{\partial p^-}{\partial t}\;-\;&\nu\frac{\partial p^-}{\partial x} \;+\; \frac{\partial }{\partial \aa} \left[f^{-}(\aa, \bm{S})\; p^-\right]\label{transport_one_dim_general:2}
\;=\;\dfrac{1}{2}\; \lambda(\aa, \bm{S}) (p^+-p^-). 
\end{align}
Following  \cite{aminzare2013remarks}, under a decay condition for $p^{\pm}$, some conditions on the internal dynamics (for example, shallow conditions for the stimuli-- see Proposition \ref{Prop:AD:1D} below), moment closure techniques, and parabolic scaling, a general advection-diffusion equation for the marginal density
\begin{equation*}
n(x, t)\;=\;\displaystyle\displaystyle\int_A (p^+(x, \aa, t)+p^-(x, \aa, t))\;d\aa
\end{equation*}
can be derived from Equations \eqref{transport_one_dim_general:1}-\eqref{transport_one_dim_general:2} as follows
\begin{equation}\label{advection_diffusion_1D}
\frac{\partial n}{\partial t}\;=\; \frac{\partial}{\partial x}\left(\frac{\nu^2}{\alpha_0}\frac{\partial n}{\partial x}-\frac{\alpha_1B_0\nu^2}{\alpha_0(A_1-\alpha_0)}\; n\right).
\end{equation}

Here, $\alpha_i$,  $A_i$, and $B_i$ are the Taylor constants of $\lambda$, $f_0$,  and $f_1$, respectively: 
\begin{align*}
\lambda \;=\; \alpha_0 +\alpha_1 \aa + \cdots,\\ 
f_0\;=\;A_0+A_1 \aa+\cdots,
\\
f_1\;=\;B_0+B_1 \aa+\cdots. 
\end{align*}
All the Taylor constants depend on $\bm{S}= \bm{S}(x)$ and we assume that $A_0=0$, $A_1\neq0$, $a_0\neq0$, $A_1\neq a_0$, and $B_0\neq0$. 
We omit the derivation of the one-dimensional advection-diffusion equation~\eqref{advection_diffusion_1D}, since the derivation is very similar to (and easier than) the two-dimensional advection-diffusion equation~\eqref{advection_diffusion_2D}, which is given in Section~\ref{AD:2dim} below. 

\begin{remark}
In \cite{aminzare2013remarks}, the authors assumed that the non-negative kernel $T(\aa, \nu, \nu')$ is the probability that the bacteria changes the velocity from $\nu'$ to $\nu$, if a change of direction occurs.  Therefore, in a one-dimensional space, $T(\aa, \nu, \nu')=1$, and hence the right hand side of  \eqref{transport_one_dim_general:1}-\eqref{transport_one_dim_general:2} 
 for \cite{aminzare2013remarks}
has no factor $1/2$. In this work, we do not assume such an assumption; therefore  $T(\aa, \nu, \nu')=1/2$. The assumption in \cite{aminzare2013remarks} leads to the following equation instead of  \eqref{advection_diffusion_1D}:
\begin{equation*}\label{advection_diffusion_1D_old}
\frac{\partial n}{\partial t}\;=\; \frac{\partial}{\partial x}\left(\frac{\nu^2}{2\alpha_0}\frac{\partial n}{\partial x}-\frac{\alpha_1B_0\nu^2}{\alpha_0(A_1-2\alpha_0)}\; n\right). 
\end{equation*}

In \cite{erban2004individual, xue2009multiscale}, Equation \eqref{advection_diffusion_1D} is derived for a toy model
that captures the essential excitation and adaptation components.
Here,  \eqref{advection_diffusion_1D} can be used for any continuous tumbling function $\lambda$ and a larger class of internal dynamics $f^\pm= f_0\pm \nu f_1$, (see the following section for more details).
\end{remark}

\subsection{Application to a population of \emph{E.\ coli} bacteria}
\label{AD-application:1dim}

In what follows, we determine the terms in the advection-diffusion equation \eqref{advection_diffusion_1D} for a population of \emph{E.\ coli} bacteria in a spatial domain $[0,L]$ equipped with two chemical gradients MeAsp, denoted by $S_1(x)$,  and serine, denoted by $S_2(x)$. We further assume that $S_1$ and $S_2$ are respectively  increasing and decreasing functions on $[0,L]$, i.e., MeAsp accumulates near $x=L$ and serine accumulates near $x=0$. 
As we discussed in Section \ref{internal:review}, in a one-dimensional space,  the internal state of \emph{E.\ coli} evolves according to the
following {ODE}:
\begin{equation}
\label{Example:internal:a:1D}
\frac{d\aa}{dt}\;=\; f^\pm(\aa, S_1,S_2) = f_0(\aa,S_1,S_2)\pm \nu f_1(\aa,S_1,S_2),
\end{equation}
where, as described in \eqref{dynamics:a:part},  
\begin{align}\label{Example:internal:a:coeff:1D}
\begin{aligned}
f_0 (\aa,S_1,S_2)&= \dfrac{\alpha}{\tau_a}N \aa(\aa-\aa_0)(\aa-1), \\
f_1(\aa,S_1,S_2)&=N \aa(\aa-1)\left(\dfrac{\gamma}{1+\gamma} \dfrac{S_1'}{S_1}+\dfrac{1}{1+\gamma} \dfrac{S_2'}{S_2}\right).
\end{aligned}
\end{align}
Here, $S_i'=dS_i/dx$ and  $\gamma:=r_1/r_2$ denotes the ratio Tar/Tsr. Recall that $r_1+r_2=1$, so indeed $r_1=\frac{\gamma}{1+\gamma}$ and $r_2=\frac{1}{1+\gamma}$. 
All the parameters used in this section are described in Section \ref{internal:review}.

\begin{proposition}\label{Prop:AD:1D}
 Assume that the density functions $p^\pm$ satisfy the decay condition
\begin{align*}
    p^{\pm} (x,\aa, t) \leq C(x,t) e^{-c (x,t)  \aa} 
\end{align*}
for some functions $C, c : \mathbb{R} \times [0,\infty) \rightarrow \mathbb{R}_{>0}$
and the stimuli $S_1$ and $S_2$ satisfy the shallow condition
\begin{equation}\label{Example:shallow:condition:1D}
\left|\dfrac{\gamma}{1+\gamma} \dfrac{S_1'(x)}{S_1(x)}+\dfrac{1}{1+\gamma} \dfrac{S_2'(x)}{S_2(x)}\right|\;\leq\; \min\{q, 1-q\}\;\dfrac{p}{\nu}, \qquad \forall x\in[0,L], 
\end{equation}
 where $q=a_0$ and $p=\frac{\alpha}{\tau_a}$ represent the adapted value and the the speed of adaptation, respectively. 
Then, for the given internal dynamics \eqref{Example:internal:a:1D}, the dynamics of a population of \emph{E. coli}, $n(x,t)$, can be approximated by the advection-diffusion
\begin{equation}\label{main_equation_example}
\frac{\partial n}{\partial t}\;=\;\frac{\partial }{\partial x}\left(D\;\frac{\partial n}{\partial x}-\chi\;\left(\dfrac{\gamma}{1+\gamma} \dfrac{S_1'}{S_1}+\dfrac{1}{1+\gamma} \dfrac{S_2'}{S_2}\right)n\right),
\end{equation}
where the diffusion coefficient $D$ and the advection constant $\chi$ are as follows:
\begin{align}\label{main_equation_example_coeff}
    D = \frac{\nu^{2}}{\lambda_{0} + r q^{H}},
 \qquad \chi = \frac{r NH q^{H}(q-1)\nu^{2}}{(\lambda_{0} + rq^{H})(Npq(q-1) - \lambda_{0} - r q^{H})}\;.
\end{align}
\end{proposition}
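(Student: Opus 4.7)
The plan is to derive Proposition~\ref{Prop:AD:1D} by specializing the general advection-diffusion equation \eqref{advection_diffusion_1D} to the \emph{E.\ coli} internal dynamics \eqref{Example:internal:a:1D}--\eqref{Example:internal:a:coeff:1D} together with the tumbling rate \eqref{tumble_rate}. First I would confirm that the hypotheses needed to pass from the microscopic equations \eqref{transport_one_dim_general:1}--\eqref{transport_one_dim_general:2} to the macroscopic equation \eqref{advection_diffusion_1D} are in force: the prescribed exponential decay $p^{\pm}(x,a,t)\le C(x,t)e^{-c(x,t)a}$ ensures that all boundary terms arising from integration by parts in $a$ vanish, so the moment hierarchy closes; the shallow condition \eqref{Example:shallow:condition:1D} is precisely the uniform bound $|f_1(a,S_1,S_2)|\le \min\{q,1-q\}\,p/\nu$ needed for the smallness of the drift relative to the adaptation speed, which legitimizes the parabolic scaling in the sense of \cite{aminzare2013remarks}.

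Next I would identify the Taylor coefficients $\alpha_0,\alpha_1,A_0,A_1,B_0$ by expanding $\lambda$, $f_0$, and $f_1$ about the adapted value $a=a_0=q$. Since $f_0(a)=pNa(a-q)(a-1)$ vanishes at $a=q$, automatically $A_0=0$, and one derivative gives $A_1=Npq(q-1)$; this is nonzero whenever $q\notin\{0,1\}$, matching the standing hypothesis $A_1\neq 0$ and $A_1\neq a_0$. For the drift term, evaluation at $a=q$ gives
\[
B_0 \;=\; Nq(q-1)\left(\frac{\gamma}{1+\gamma}\frac{S_1'}{S_1}+\frac{1}{1+\gamma}\frac{S_2'}{S_2}\right),
\]
so the entire stimulus-dependent factor is carried by $B_0$. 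From \eqref{tumble_rate} I obtain $\alpha_0=\lambda(q)=\lambda_0+rq^H$ and $\alpha_1=\lambda'(q)=rHq^{H-1}$, with $r=1/\tau$ (absorbing the normalization implicit in the definition of $\lambda$). I would also verify the sign/nondegeneracy conditions $\alpha_0\neq 0$, $B_0\neq 0$, and $A_1\neq \alpha_0$ used to invert the closure relations.

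The last step is a direct substitution into \eqref{advection_diffusion_1D}. The diffusion coefficient becomes $\nu^2/\alpha_0=\nu^2/(\lambda_0+rq^H)=D$ as claimed, and the advection coefficient reads
\[
\frac{\alpha_1 B_0\,\nu^2}{\alpha_0(A_1-\alpha_0)}
\;=\;\frac{rHq^{H-1}\cdot Nq(q-1)\,\nu^2}{(\lambda_0+rq^H)\bigl(Npq(q-1)-\lambda_0-rq^H\bigr)}\left(\frac{\gamma}{1+\gamma}\frac{S_1'}{S_1}+\frac{1}{1+\gamma}\frac{S_2'}{S_2}\right),
\]
and combining $q^{H-1}\cdot q=q^{H}$ reproduces exactly the constant $\chi$ in \eqref{main_equation_example_coeff}, multiplied by the stimulus factor appearing in \eqref{main_equation_example}.

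The main obstacle I anticipate is the first step rather than the substitution: one must carefully check that the shallow condition \eqref{Example:shallow:condition:1D}, through the identity $f_1(a,S_1,S_2)=Na(a-1)\bigl(\frac{\gamma}{1+\gamma}S_1'/S_1+\frac{1}{1+\gamma}S_2'/S_2\bigr)$ and the bound $|a(a-1)|\le\max\{q(1-q), \ldots\}$ on the support of the moments, yields exactly the inequality $|f_1|\le \min\{q,1-q\}p/\nu$ needed in the general derivation, and that higher-order terms in the Taylor expansions of $f_0,f_1,\lambda$ contribute only $o(1)$ corrections after parabolic scaling. Since the two-dimensional derivation \eqref{advection_diffusion_2D} in Section~\ref{AD:2dim} mirrors the one-dimensional moment-closure structure, I would ultimately reduce the verification to checking those two estimates, after which the proposition follows by algebraic substitution.
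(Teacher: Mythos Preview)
Your plan is correct and is exactly the route the paper takes: its proof simply points to the two-dimensional derivation in Section~\ref{AD:2dim}, where one shifts to $\hat a=a-q$ (equivalently, expands about $a=q$) so that $A_0=0$ and $B_0\neq0$, reads off $A_1=Npq(q-1)$, $B_0=Nq(q-1)V(x)$, $\alpha_0=\lambda_0+rq^{H}$, $\alpha_1=rHq^{H-1}$, and substitutes into \eqref{advection_diffusion_1D}. Two small corrections for when you write it up: the paper's normalization is $r=1/(\tau a_0^{H})$, not $r=1/\tau$ (your displayed formulas for $\alpha_0,\alpha_1,\chi$ are nonetheless the correct ones); and the shallow condition \eqref{Example:shallow:condition:1D} is not used as a pointwise bound on $|f_1|$ but, as in Lemma~\ref{shallow_condition_example_2d}, as the hypothesis that makes the interval $\{|\hat a|\le \min\{q,1-q\}\}$ forward-invariant for the shifted dynamics, which is precisely what feeds the moment-closure estimate of Lemma~\ref{moment_closure}.
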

\begin{proof}
The proof is similar to the case of one stimulus, see \cite{aminzare2013remarks, xue2015macroscopic},
and the case of two-dimensional space which is given in Section \ref{AD:2dim} below. 
\end{proof}

Note that the condition \eqref{Example:shallow:condition:1D} holds if either the adaptation rate $p$ is large or $\gamma$, $S_1$ and $S_2$ are chosen so that the left hand side (LHS) of \eqref{Example:shallow:condition:1D} is small, i.e., the shallow condition is equivalent to either small changes in the environment or fast adaptation. See the examples given in Section \ref{simulation:1dim} for more details. 

Now we determine the boundary conditions of  \eqref{main_equation_example}.  Following the experimental set up in \cite{kalinin2010responses}, we want the population of the bacteria to be conserved in time, i.e., for any $t \geq 0$, 
\begin{align}\label{n-constant}
   0=  \dfrac{d}{dt} \int_{0}^{L} n(x,t) dx = D\Big( \dfrac{\partial n}{\partial x}(L,t) - \dfrac{\partial n}{\partial x}(0,t) \Big) - \chi \big( V(L) n(L,t) - V(0) n(0,t) \big),
\end{align}
where 
\begin{align}\label{Robin_condition_V}
    V(x) =   \dfrac{\gamma}{1+\gamma} \dfrac{S_{1}'(x)}{S_{1}(x)} + \dfrac{1}{1+\gamma} \dfrac{S_{2}'(x)}{S_{2}(x)}.
\end{align} 
 The following zero flux boundary conditions at $x =0$ and $x= L$ guarantee \eqref{n-constant}. For any $t \geq 0$, 
\begin{align}\label{Robin_condition}
 \dfrac{\partial n}{\partial x}(0,t) =\dfrac{\chi}{D}  V(0) n(0,t) \quad \mbox{and} \quad 
 \dfrac{\partial n}{\partial x} (L,t) = \dfrac{\chi}{D}  V(L) n(L,t). 
\end{align}

In the following lemma, we provide sufficient conditions which guarantee existence and uniqueness of solutions of  \eqref{main_equation_example} with boundary conditions \eqref{Robin_condition}. 

\begin{lemma}\label{lem:existence 1D}
Let $V(x)$ be continuous on $[0,L]$ and  $n_{0}(x)$ be a smooth non-negative function. Then, \eqref{main_equation_example} with boundary condition \eqref{Robin_condition} and initial condition $n_{0}(x)$ admits a unique solution of the form  
$n(x,t) = \sum_{n=1}^{\infty} X_{n}(x) T_{n}(t)$. Moreover, $n(x,t)$ is uniformly bounded in $x$ and $t.$
\end{lemma}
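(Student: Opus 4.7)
The plan is to reduce the problem to a standard self-adjoint parabolic equation with homogeneous Neumann boundary data, then appeal to Sturm--Liouville theory. The key observation is that the boundary condition \eqref{Robin_condition} is exactly the vanishing of the flux $J := D\,n_x - \chi V\,n$. This suggests factoring out the strictly positive function $\phi(x) := \exp\!\left(\tfrac{\chi}{D}\int_0^x V(s)\,ds\right)$ satisfying $D\phi' - \chi V\phi = 0$, which is $C^1$ and strictly positive on $[0,L]$ by continuity of $V$. Substituting $n(x,t) = \phi(x)\,u(x,t)$ and using the identity $D n_x - \chi V n = D\phi\, u_x$, equation \eqref{main_equation_example} becomes
\begin{equation*}
\phi(x)\, u_t \;=\; \bigl(D\phi(x)\, u_x\bigr)_x,
\end{equation*}
while the Robin conditions \eqref{Robin_condition} collapse to homogeneous Neumann conditions $u_x(0,t) = u_x(L,t) = 0$, and the new initial datum is $u_0 := n_0/\phi$, which is smooth and non-negative.

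Next I would invoke classical regular Sturm--Liouville theory for the transformed problem. The spatial operator $\mathcal{A}u := -\phi^{-1}\bigl(D\phi\,u_x\bigr)_x$ with Neumann conditions is self-adjoint and non-negative on the weighted Hilbert space $L^2_\phi([0,L])$, since integration by parts yields $\langle \mathcal{A}u, u\rangle_\phi = \int_0^L D\phi\,u_x^2\,dx \ge 0$. This produces a discrete spectrum $0 = \mu_1 < \mu_2 < \cdots \to \infty$ with simple eigenvalues and a complete orthonormal basis $\{\Phi_k\}_{k\ge 1}$ of eigenfunctions in $L^2_\phi$ (with $\Phi_1$ a non-zero constant). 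Expanding $u_0 = \sum_k c_k \Phi_k$ and solving the decoupled ODEs $T_k' + \mu_k T_k = 0$ gives
\begin{equation*}
n(x,t) \;=\; \phi(x)\sum_{k=1}^{\infty} c_k\, e^{-\mu_k t}\, \Phi_k(x) \;=\; \sum_{k=1}^\infty X_k(x)\,T_k(t),
\end{equation*}
with $X_k := \phi\,\Phi_k$ and $T_k(t) := c_k e^{-\mu_k t}$, which is the asserted series representation. Smoothness of $n_0$ forces the coefficients $c_k$ to decay rapidly, so the series and the derivatives appearing in \eqref{main_equation_example} converge uniformly on $[0,L]\times[0,\infty)$.

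Uniqueness follows from a weighted energy estimate: if $n^{(1)}, n^{(2)}$ are two solutions, their difference $w = n^{(1)} - n^{(2)}$ satisfies the same equation with zero initial data and zero-flux boundary conditions; writing $w = \phi v$ gives
\begin{equation*}
\tfrac{1}{2}\,\tfrac{d}{dt}\!\int_0^L \phi(x)\, v(x,t)^2\,dx \;=\; -\int_0^L D\phi(x)\,v_x(x,t)^2\,dx \;\le\; 0,
\end{equation*}
so $\int_0^L \phi v^2\,dx \equiv 0$ and hence $w \equiv 0$. The uniform bound on $n$ then follows from $n = \phi u$ together with the weak maximum principle for the equivalent non-conservative form $u_t = D u_{xx} + \chi V\, u_x$ with Neumann data, which yields $\min_{[0,L]} u_0 \le u(x,t) \le \max_{[0,L]} u_0$ for all $t \ge 0$ and hence $\|n\|_{L^\infty([0,L]\times[0,\infty))} \le \|\phi\|_{L^\infty}\|u_0\|_{L^\infty}$. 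The main delicate point is verifying that the Sturm--Liouville expansion converges in a strong enough sense to deliver a classical (rather than merely weak) solution; this is routine given the continuity of $V$ and smoothness of $n_0$, but requires care at the boundary where the eigenfunctions must exactly realize the transformed Neumann condition.
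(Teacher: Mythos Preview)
Your argument is correct and, at its core, rests on the same Sturm--Liouville machinery as the paper's proof, but you organize it differently and, arguably, more cleanly. The paper keeps the original unknown $n$ and puts the eigenvalue problem $D\varphi'' - \chi V\varphi' - \chi V'\varphi = -\lambda\varphi$ into self-adjoint form via the integrating factor $p(x)=\exp\bigl(-\tfrac{\chi}{D}\int V\bigr)$, retaining the Robin boundary conditions; it then has to carry out a somewhat intricate integration by parts (splitting into two pieces $\mathcal{I}_1,\mathcal{I}_2$) to show that $\lambda_0\ge 0$, which is the key step for uniform boundedness. Your substitution $n=\phi u$ with $\phi=1/p$ collapses the Robin conditions to Neumann and reduces the quadratic form to $\int D\phi\,u_x^2\,dx$, so non-negativity of the spectrum (indeed, $\mu_1=0$ with constant eigenfunction) is immediate. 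You also supply two ingredients the paper leaves implicit: a direct energy argument for uniqueness, and a maximum-principle bound for $u$ that gives the uniform $L^\infty$ estimate independently of the spectral decomposition. The trade-off is that the paper's route stays closer to the original variable and makes the weighted Sturm--Liouville structure explicit, while yours front-loads the insight into the substitution and thereby shortens the analysis.
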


This lemma can be proved by the method of separation of variables in a standard way: We can apply Sturm-Liouville theory \cite{kapitula2013spectral} to solve the eigenproblem in which the first eigenvalue can be also explicitly estimated to guarantee the uniform boundedness of the solution in time. 
For a proof see Appendix \ref{Appendix0}.

\subsection{Steady state solution of advection-diffusion equation with zero flux boundary conditions}\label{steady-state-1D}

The bacterial responses to MeAsp and serine depend on the ratio of their chemoreceptors Tar and Tsr, i.e.,  $\gamma=$ Tar/Tsr. 
The goal is to find a positive $\gamma^*$ and show that for $\gamma>\gamma^*$ the bacteria tend to move toward a gradient of increasing MeAsp  (i.e., accumulate near $x=L$) and  for $\gamma<\gamma^*$ they move toward a gradient of increasing serine (i.e., accumulate near $x=0$). 
To determined such a $\gamma^{*}$, we look at a steady state of advection-diffusion equation \eqref{main_equation_example} with boundary condition \eqref{Robin_condition}.

Let $\Phi(x)$ be the steady state solution of the advection-diffusion equation \eqref{main_equation_example} with boundary condition \eqref{Robin_condition}.
If $S_{1},S_{2},$ and $\gamma$ are chosen such that $V(x)$ satisfies the condition in Lemma \ref{lem:existence 1D}, 
then the solution of \eqref{main_equation_example} converges to  $\Phi(x)$ as $t \rightarrow \infty.$ Indeed, in the following examples, $V(x)$ satisfies the condition in Lemma \ref{lem:existence 1D}.

Assuming that the bacteria start from a point $x_0\in(0,L)$, they move toward a gradient of increasing MeAsp (respectively, serine) and accumulate near $x=L$ (respectively, $x=0$), if the steady state solution of the advection-diffusion equation \eqref{main_equation_example} admits a maximum on the right (respectively, left) sub-interval $(x_0,L]$ (respectively, $[0, x_0)$). 
Therefore, in what follows, we find conditions that $\Phi(x)$ admits a maximum on the right  sub-interval $(x_0,L]$ or the left sub-interval $[0, x_0)$. 
 
To compute the steady state solution of  \eqref{main_equation_example}, we let ${\partial n}/{\partial t} = 0$, which gives a constant flux, i.e.,
$D\;{\partial n}/{\partial x}-\chi V(x) n=\text{constant}.$
Assuming zero flux boundary conditions \eqref{Robin_condition}, the constant becomes zero and a simple calculation shows that the steady state solution satisfies
\begin{align}\label{steady_state}
\Phi(x) = \Phi(c_{0}) \exp\left\{\frac{\chi}{D}\displaystyle\int_{c_{0}}^{x} V(y) dy\right\}.
\end{align}
We choose $c_0$ such that $\Phi(c_0)>0$. 
Indeed, there is such a $c_{0}$  by \eqref{n-constant}:
\begin{align*}
\frac{d}{dt} \int_0^L n(x,t)dx =0\quad&\Rightarrow\quad
\int_0^L n(x,t)dx = \text{constant}>0\\
&\Rightarrow\quad\lim_{t\to\infty}\int_0^L n(x,t)dx = \int_0^L \Phi(x) dx = \text{constant}>0\\
&\Rightarrow\quad\text{there exists $c_0$ such that $\Phi(c_0)>0$. }
\end{align*}

 In what follows, we write $V$ as a function of both $x$ and $\gamma$, $V=V(x,\gamma)$. 
Considering the fact that  $\Phi'(x)= \frac{\chi}{D}V(x,\gamma)\Phi(x)$ and $\Phi(x)>0$,  $\Phi$ takes a unique maximum at $x^*\in[0,L]$ if, for any $\gamma>0$, either $V$ does not change sign or $V$ is a non-increasing function of $x$ and $V(x^*,\gamma)=0$. 
Now we are ready to find $\gamma^*$ in the following lemma. 

\begin{lemma}\label{bifurcation_value}
Assume the bacteria start at $x_0\in[0,L]$ and for any $\gamma>0$, $\partial V/\partial x\leq0$. Also, assume that $S_1$ and $S_2$ are respectively increasing and decreasing functions on $[0,L]$. 
Then there exists $\gamma^*>0$ such that $V(x_0,\gamma^*) =0$ and for $\gamma>\gamma^*$ the bacteria accumulate on the right side of $x_0$ and for $\gamma<\gamma^*$ they accumulate on the left side. 
\end{lemma}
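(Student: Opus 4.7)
The plan is to exploit the explicit form
\begin{equation*}
V(x,\gamma) \;=\; \frac{\gamma}{1+\gamma}\,\frac{S_1'(x)}{S_1(x)} \;+\; \frac{1}{1+\gamma}\,\frac{S_2'(x)}{S_2(x)}
\end{equation*}
together with the sign information furnished by the monotonicity of $S_1,S_2$, and then to read off the location of the maximum of $\Phi$ from the identity $\Phi'(x)=(\chi/D)\,V(x,\gamma)\,\Phi(x)$, which follows by differentiating \eqref{steady_state}. Since $S_1$ is increasing, $S_2$ is decreasing, and both are positive on $[0,L]$, the quantities $A(x):=S_1'(x)/S_1(x)$ and $B(x):=S_2'(x)/S_2(x)$ satisfy $A>0>B$ pointwise.

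First I would produce the threshold. Fixing $x=x_0$, the map $\gamma\mapsto V(x_0,\gamma)$ is continuous on $(0,\infty)$, equals $B(x_0)<0$ at $\gamma=0$, and approaches $A(x_0)>0$ as $\gamma\to\infty$; the intermediate value theorem yields at least one $\gamma^*>0$ with $V(x_0,\gamma^*)=0$. A direct calculation gives
\begin{equation*}
\frac{\partial V}{\partial \gamma}(x,\gamma) \;=\; \frac{A(x)-B(x)}{(1+\gamma)^2} \;>\; 0,
\end{equation*}
so $\gamma^*$ is in fact unique and can be written explicitly as $\gamma^*=-B(x_0)/A(x_0)>0$.

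Next I would locate the maximum of $\Phi$ as a function of $\gamma$. By hypothesis $\partial V/\partial x\le 0$, so $V(\cdot,\gamma)$ is non-increasing on $[0,L]$ for every $\gamma>0$. For $\gamma=\gamma^*$, $V(x_0,\gamma^*)=0$ together with this $x$-monotonicity gives $V(\cdot,\gamma^*)\ge 0$ on $[0,x_0]$ and $V(\cdot,\gamma^*)\le 0$ on $[x_0,L]$; combined with $\Phi'=(\chi/D)\,V\,\Phi$ and $\Phi>0$ from \eqref{steady_state}, this shows $\Phi$ is non-decreasing then non-increasing and attains its maximum exactly at $x_0$. For $\gamma>\gamma^*$, the strict $\gamma$-monotonicity of $V$ forces $V(x_0,\gamma)>V(x_0,\gamma^*)=0$, so any sign change of $V(\cdot,\gamma)$ on $[0,L]$ must occur at some $x^*>x_0$; the same ODE argument puts the maximum of $\Phi$ strictly to the right of $x_0$ (at $x^*$ if it exists, and otherwise at the endpoint $x=L$, where $\Phi$ is then strictly increasing on $[0,L]$). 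The case $\gamma<\gamma^*$ is entirely symmetric and places the maximum on $[0,x_0)$.

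The main obstacle is bookkeeping rather than analysis: one must treat uniformly the interior-maximum regime (when $V(\cdot,\gamma)$ changes sign inside $[0,L]$) and the monotone regime (when $\Phi$ is pushed against an endpoint). The joint monotonicity of $V$ in $x$ and in $\gamma$ established above handles both cases without further estimates, and Lemma~\ref{lem:existence 1D} already guarantees that the steady state $\Phi$ to which $n(\cdot,t)$ converges is well-defined and positive, so the argument reduces to the elementary sign analysis sketched above.
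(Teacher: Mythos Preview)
Your proposal is correct and follows essentially the same route as the paper: solve $V(x_0,\gamma^*)=0$ explicitly, then use the monotonicity of $V$ in $x$ together with $\Phi'=(\chi/D)V\Phi$ to place the maximum of $\Phi$ on the appropriate side of $x_0$. Your version is in fact more careful---you justify existence and uniqueness of $\gamma^*$ via the strict $\gamma$-monotonicity $\partial V/\partial\gamma=(A-B)/(1+\gamma)^2>0$ and distinguish the interior-maximum and endpoint regimes explicitly, whereas the paper simply writes down $\gamma^*$ and asserts the location of the maximum; your explicit formula $\gamma^*=-B(x_0)/A(x_0)$ also has the correct sign, while the paper's displayed formula omits the minus sign.
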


\begin{proof}
A simple calculation shows that
$V(x,\gamma) = 0$ if and only if \[\gamma(x) = \dfrac{S'_2(x)/S_2(x)}{S'_1(x)/S_1(x)}.\] 
Let $\gamma^*:=\gamma(x_0).$ 
For $\gamma>\gamma^*$, $V(x_0, \gamma)>0$, therefore, since  $\partial V/\partial x\leq0$, $\Phi$ takes its maximum (either $x=L$ or $x=x^*<L$) on the right side of $x_0$, and hence the bacteria accumulates toward the right side of $x_0$. Similarly, if $\gamma<\gamma^*$, $V(x_0, \gamma)<0$, and hence $\Phi$ takes its maximum (either $x=0$ or $x=x^*>0$) on the left side of $x_0$, and hence the bacteria accumulates toward the left side of $x_0$.
\end{proof}

We refer to $\gamma$ and $\gamma^*$ as the bifurcation parameter and bifurcation value, respectively, since at $\gamma=\gamma^*$ the direction of the bacterial changes. See Figure \ref{fig:V} below.  

Note that if the bacteria are initially distributed on  $[0,L]$ instead of locating on a single point $x_0$, we consider $\gamma^*=\gamma^*(L/2)$ as the bifurcation value. 

In the following section, we consider two  sets of stimuli: (i) $S_1$ linear and increasing, $S_2$ linear and decreasing; (ii)  $S_1$ exponential and increasing, $S_2$ exponential and decreasing. We also assume that the bacteria are located at $x_0=L/2$ initially.  In both cases, $V(x,\gamma)$ is a decreasing function on $[0,L]$. Hence, the conditions of Lemma  \ref{bifurcation_value} hold and, therefore, $\gamma^*$ can be determined based on the initial location of the bacteria, i.e.,  $x_0=L/2$.

\section{Monte-Carlo agent-based simulations in a one-dimensional space}\label{simulation:1dim}

To show that the advection-diffusion equation \eqref{main_equation_example} with boundary condition \eqref{Robin_condition} is a good approximation for the microscopic description of \emph{E. coli} chemotaxis, 
 we run a Monte-Carlo agent-based simulation. {A detailed description of the Monte-Carlo simulation is given in  Appendix \ref{Appendix2}}. 

The following computational setting of our Monte-Carlo agent-based simulation is motivated by the experimental set up in \cite{kalinin2010responses}.

\begin{description}[leftmargin=*]
\item[Spatial Domain.] A one-dimensional channel of length of $400 \mu m$ ($x \in [0,400]$). 

\item[Stimuli.]
Along the two sides of the channel 
two opposing chemical signals, $S_{1}(x)$ and $S_{2}(x)$, 
flow and diffuse across the channel. Two opposing linear and two opposing exponential chemical signals are considered in Sections \ref{lin_ligands:1D} and \ref{exp_ligands:1D}, respectively.

\item[Initial Condition.] At $t=0$ (sec), an ensemble of 100,000 agents is located in the center of the channel ($x=200$).

\item[Boundary Conditions.] When a cell reaches a boundary, we relocate the cell to stay inside the domain, i.e., zero flux boundary condition is applied.

\item[Simulation Duration.]
We simulate the bacterial behavior for 200 sec, $t \in [0,200]$. 
It is observed that the solution of each simulation in this section becomes stationary at $t=200.$
\end{description}

To illustrate distributions of the cells, we display histograms with 100 equal-sized bins. 

We use an explicit finite difference method to numerically solve the advection-diffusion equation \eqref{main_equation_example} with the  boundary condition \eqref{Robin_condition}.  

In the following examples, we compare the solutions of the macroscopic equation \eqref{main_equation_example} with boundary conditions \eqref{Robin_condition} with results of the Monte-Carlo simulation. Further, for each case, we compute the bifurcation value $\gamma^*$ defined in Section \ref{steady-state-1D}.
To measure bacterial  preference, we define the chemotactic migration coefficient (CMC):
\begin{align}\label{CMC}
    \text{CMC}_x(t) = \frac{\text{mean}(x(t)) - 200}{200}.
\end{align}
In the Monte-Carlo simulation, mean($x(t)$) is  the average of individual positions $x_{i}$ at time $t$ across the channel, i.e.,  $\text{mean}_i(x_{i}(t))$. For a solution $n(x,t)$ of \eqref{main_equation_example}, mean($x(t)$) is the expectation value of the probability density $n(x,t)$, i.e., $\int_0^L xn(x,t) dx$. 
The absolute value of $\text{CMC}_x$ determines the displacement of the   bacteria in $x$-direction. 
The sign of $\text{CMC}_x$
 indicates their preference to the right or left. 
When $\text{CMC}_x>0$ (respectively, $\text{CMC}_x<0$), the bacteria tend to move to the right, i.e., above $x=200$ (respectively, left, i.e., below $x=200$). 

\subsection{Chemotaxis in response to two linear gradients}\label{lin_ligands:1D}

To demonstrate responses of \emph{E. coli} to two opposing linear gradients MeAsp and serine, and following the experimental set up in \cite{kalinin2010responses},  we let
\begin{equation}\label{linear_linear_gradients}
S_{1}(x)=0.5 x + 130 \quad \mbox{and}\quad  S_{2}(x)=-0.03x + 20
\end{equation}
 represent the concentrations of
MeAsp and serine at each point $x\in[0,400]$, respectively.

As we discussed in Section \ref{steady-state-1D}, since for any $\gamma>0$,
\begin{align}\label{V_1d_lin_lin}
V(x,\gamma)=\dfrac{\gamma}{1+\gamma} \dfrac{0.5}{0.5x+130}+\dfrac{1}{1+\gamma} \dfrac{-0.03}{-0.03x+20}
\end{align}
is decreasing  on $[0,400]$,  $V(200,\gamma)$ is an increasing function of $\gamma$, and $V(200,\gamma^*)=0$ for $\gamma^*\approx 0.985$, by Lemma \ref{bifurcation_value}, for $\gamma>0.985$ (respectively, $\gamma<0.985$)
the bacteria move to the right (respectively, left), toward the gradient of
MeAsp (respectively, serine). 

\begin{remark}
In this example, for any $x \in [0,L]$, $S_{1} > S_{2}$, $|S_{1}'| > |S_{2}'|$ and $|S_{1}'/S_{1}| > |S_{2}'/S_{2}|$. Therefore, one may expect that the bacterial always choose to move towards MeAsp ($S_{1}$). However, as we proved in Lemma \ref{bifurcation_value},  when  the ratio Tar/Tsr is small enough ($\gamma < \gamma^{*}$),  the bacteria move toward the gradient of serine. 
Figure \ref{fig:V} displays the relation between $\gamma$ and the initial position of the bacteria, $x_{0}.$ The dotted curve $\gamma^{*}(x_{0}) = \big(\frac{S_{2}'/S_{2} } { S_{1}'/S_{1} }\big)(x_{0}) = \frac{780 + 3 x_{0}}{2000-3x_{0}}$ satisfying $V(x_{0}, \gamma^{*})=0$ represents the bifurcation values in which the bacterial direction changes. As it is shown in Figure \ref{fig:V}, $\gamma^{*}$ is an increasing function in $x_{0}$, that is, $|S_{1}'/S_{1}|$ increases faster than $|S_{2}'/S_{2}|$ as $x_{0}$ increases. This means that if the bacteria start from near the right end point, a stronger force (a larger $\gamma^{*}$) is needed to drag them toward the gradient of serine ($S_{2}$). In the following section with exponential gradients, although $S_{1}>S_{2}$ and $S_{1}' > S_{2}'$ everywhere, the needed force $\gamma^{*}$ to drag the bacteria to the gradient of serine is always equal to 1. The  reason is that $|S_{1}'/S_{1}| \equiv |S_{2}'/S_{2}|$, in that case.

\begin{figure}[ht]
\floatbox[{\capbeside\thisfloatsetup{capbesideposition={right,top},capbesidewidth=6cm}}]{figure}[\FBwidth]
{\includegraphics[width=7.55cm]{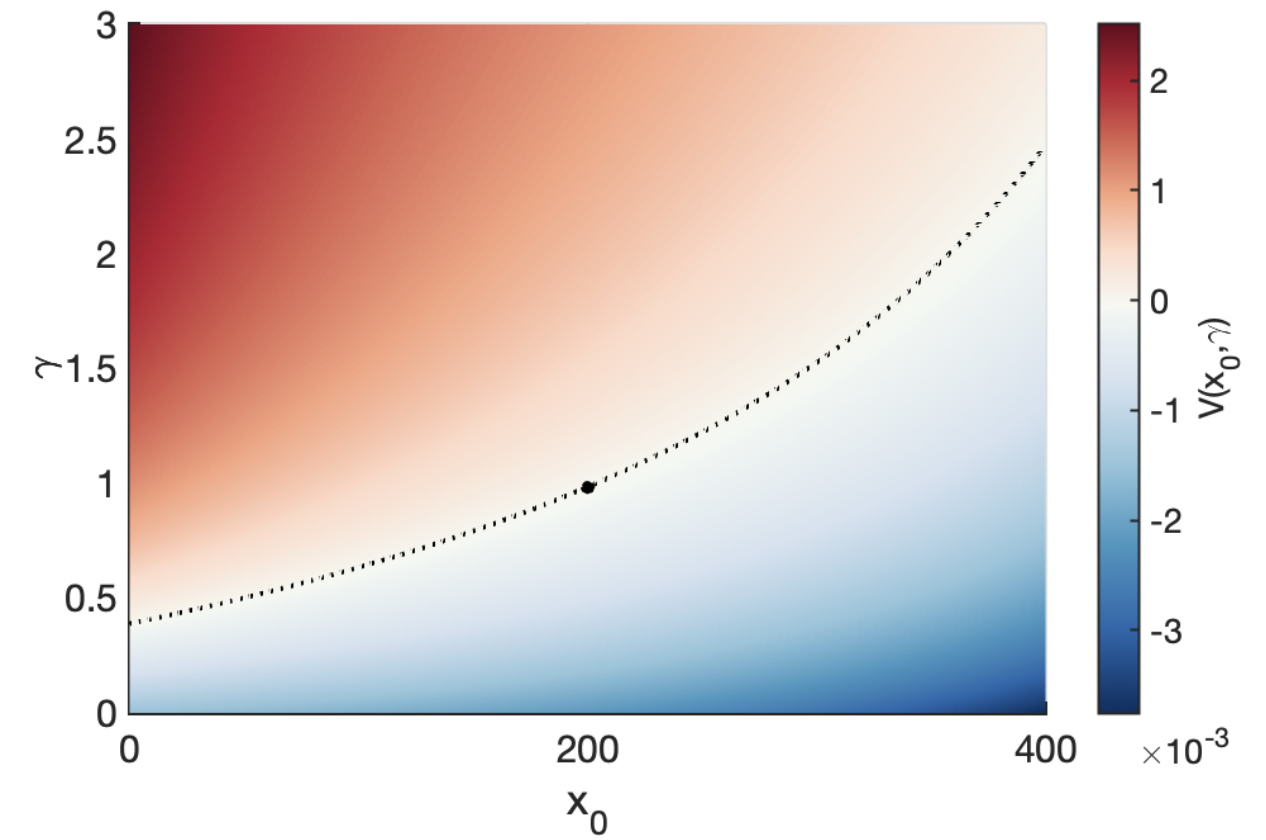}}
{\hspace{0.7cm} \caption{\small{Change of signs of $V$ in (\ref{V_1d_lin_lin}) as $x_0$ and $\gamma$ vary.
For $(x_{0},\gamma)$ in the dark red (respectively, blue) region, 
$V$ becomes positive (respectively, negative) as shown in the color bar. The dotted curve is a set of $(x_{0},\gamma^{*})$ where $V=0$.
The solid point at $(200,0.985)$ indicates the bifurcation value for the simulation in Section \ref{lin_ligands:1D}.}
}\label{fig:V}}
\end{figure}
\end{remark}

To examine the result of Lemma \ref{bifurcation_value}, we choose two values for $\gamma$, $\gamma=1.5>\gamma^*\approx 0.985$ and $\gamma=0.5<\gamma^*\approx 0.985$, and in 
Figures~\ref{1d_lin_1p5_0p5}(a, c) display distributions of the normalized density of \emph{E. coli} obtained from the Monte-Carlo agent-based simulation and numerical solution of the advection-diffusion \eqref{main_equation_example}. 
Three snapshots at times $t=10, 60,200$ (sec) are shown. 
As expected, the snapshots of  a solution of  \eqref{main_equation_example} 
and 
 the snapshots of a solution of Monte-Carlo simulation move to the right  
when $\gamma>\gamma^*$, as shown in Figure~\ref{1d_lin_1p5_0p5}(c),  
and they move to the left 
when $\gamma<\gamma^*$, as shown in  Figure~\ref{1d_lin_1p5_0p5}(a). 

Figures~\ref{1d_lin_1p5_0p5}(b, d) display 
the corresponding $\text{CMC}_x$ which, as expected, is positive when $\gamma>\gamma^*$ and the bacteria accumulates on the right and is negative  when $\gamma<\gamma^*$ and the bacteria accumulates on the left. 

In Figure~\ref{1d_lin_1p5_0p5}, the adaptation speed rate $p$ is $0.4$ and other parameters are as given in Table \ref{Table:Parameters} (see Appendix \ref{Appendix1}). 
For the given linear stimuli, the  values of $\gamma$ and $p$ are chosen such that the shallow condition \eqref{Example:shallow:condition:1D} holds. 
Therefore, by Proposition \ref{Prop:AD:1D},  the advection-diffusion equation  \eqref{main_equation_example}
approximates the Fokker-Planck equations \eqref{transport_one_dim_general:1}-\eqref{transport_one_dim_general:2}. A comparison between numerical solutions of  \eqref{main_equation_example} and 
the solutions of Monte-Carlo simulations in Figure~\ref{1d_lin_1p5_0p5} confirms this result.
\begin{figure}[ht]
	\begin{minipage}[c][1\width]{
	   0.45\textwidth}
	   \centering
	   \includegraphics[width=1\textwidth]{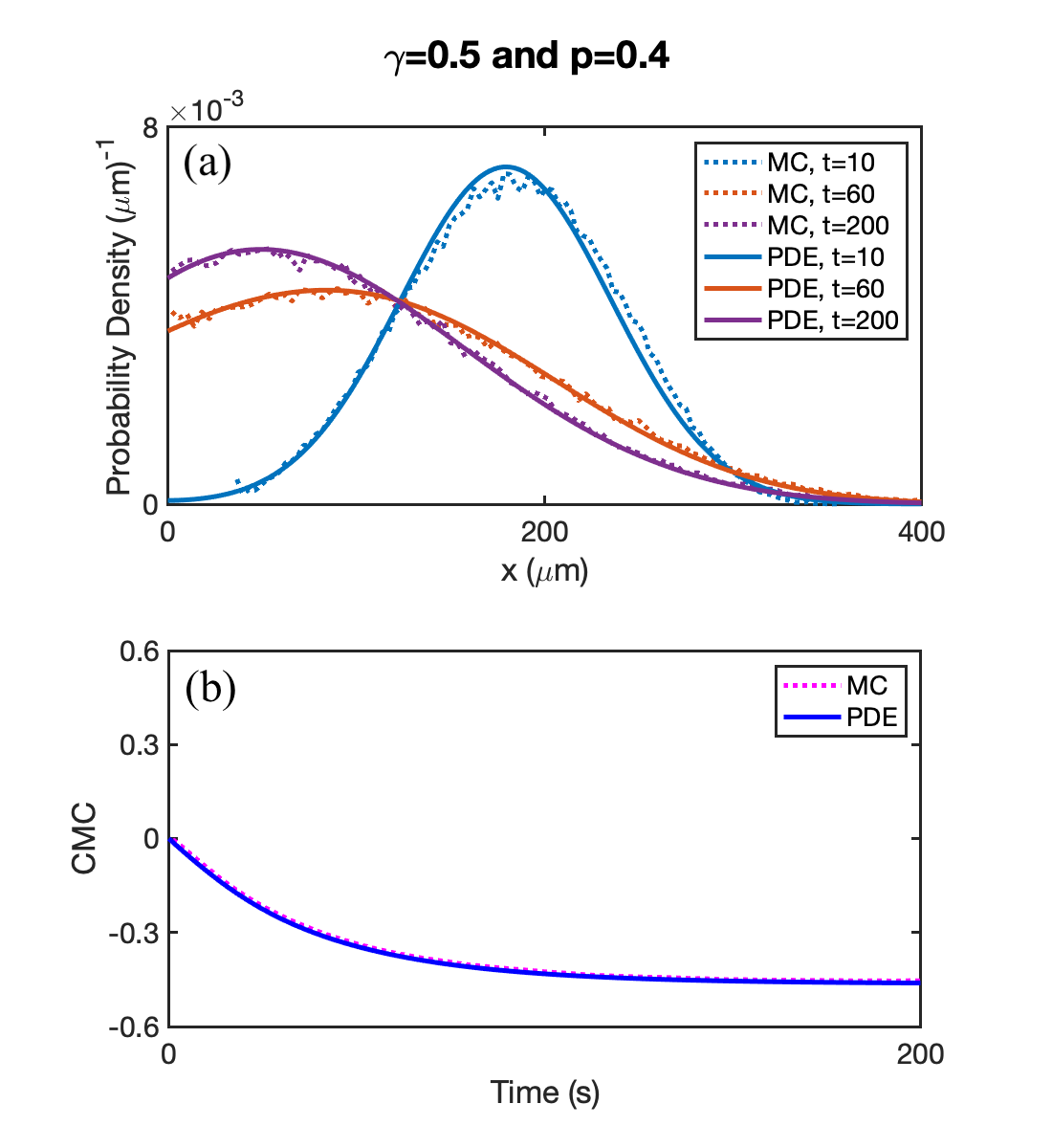} 
	\end{minipage}
	\begin{minipage}[c][1\width]{
	   0.45\textwidth}
	   \centering
	   \includegraphics[width=1\textwidth]{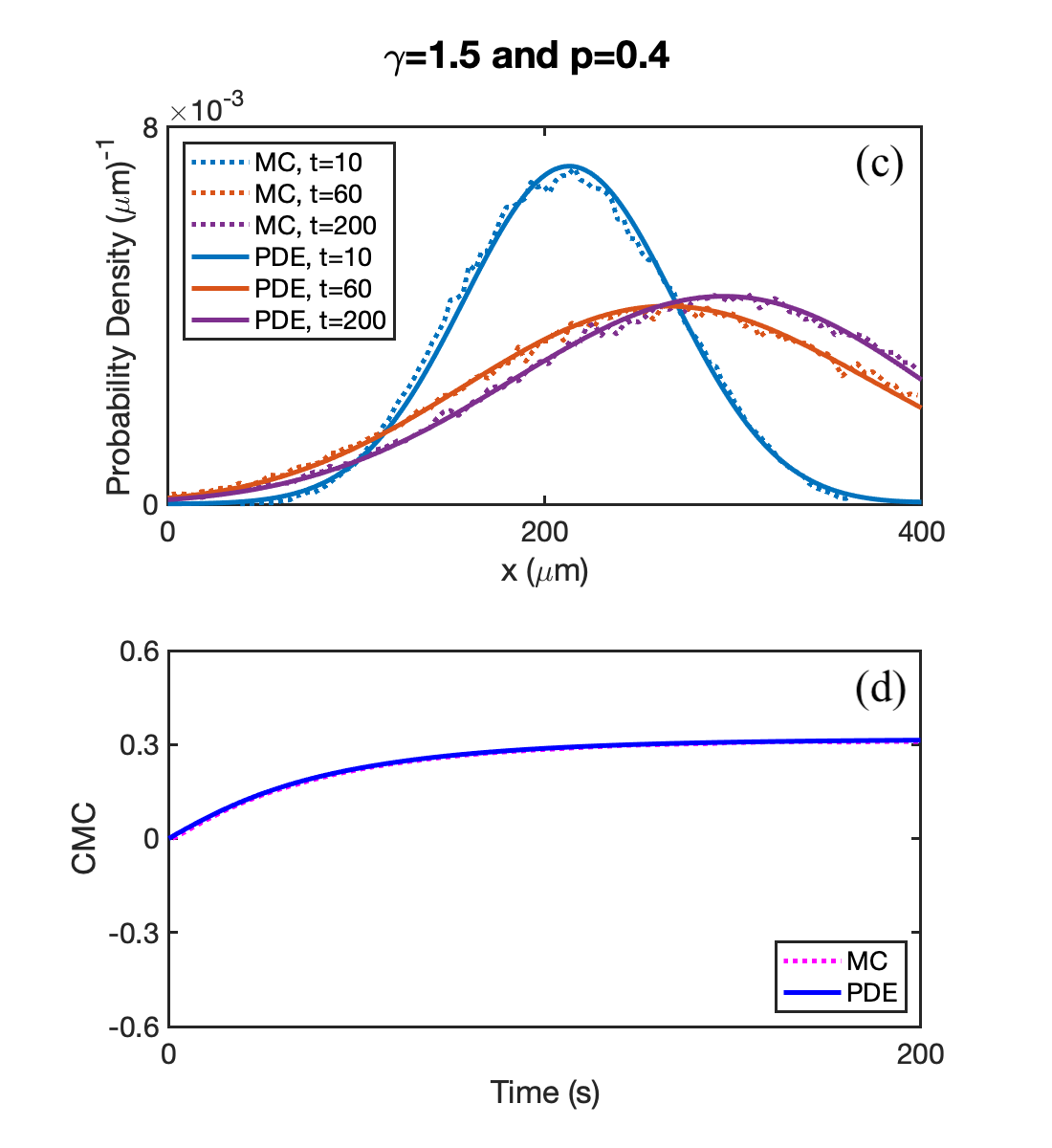}
	\end{minipage}
    \caption{
    (a) and (c): Comparisons of Monte-Carlo simulation and numerical solutions of  \eqref{main_equation_example} for two linear gradients \eqref{linear_linear_gradients} at times $t=10, 60,200$ (sec) with  $\gamma=0.5<\gamma^*$ and  $\gamma=1.5>\gamma^*$, in which the snapshots move to the left and right, respectively. (b) and (d): Comparisons of the corresponding $\text{CMC}_x$. }
    \label{1d_lin_1p5_0p5}
\end{figure}

\subsection{Chemotaxis  in response to two exponential gradients}\label{exp_ligands:1D}

We now repeat the discussion of Section \ref{lin_ligands:1D} for two opposing exponential gradients MeAsp and serine. We let
\begin{align}\label{exp_exp_gradients}
    S_{1}(x) = 130 e^{0.0023 x} \quad \mbox{and}\quad S_{2}(x) = 8 e^{-0.0023(x-400)}
\end{align}
 represent the concentrations of
MeAsp and serine at $x\in[0,400]$, respectively. Exponential gradients have been used for various chemotaxis environments (e.g., \cite{kalinin2009logarithmic}).

To find the bifurcation value $\gamma^*$, which determines the direction of bacteria, we apply Lemma \ref{bifurcation_value}.  
A simple calculation shows that $V$ of this example is equal to
\[V(x,\gamma)=0.0023\; \frac{\gamma-1}{\gamma+1}\;.\]
For any $\gamma>0$, $V(x,\gamma)$ is non-increasing  on $[0,400]$. Further,  $V(200,\gamma)$ is an increasing function of $\gamma$ and $V(200,\gamma^*)=0$ for $\gamma^*=1$. Therefore,  by Lemma \ref{bifurcation_value}, for $\gamma>1$
 the bacteria move to the right, toward the gradient of MeAsp, and for  $\gamma<1$
they move to the left, toward the gradient of serine. 

To examine this result, we choose two values for $\gamma$,  $\gamma=1.1>\gamma^*=1$ and $\gamma=0.9<\gamma^*=1$. 
Figures~\ref{1d_exp_1p1_0p9}(a, c) 
display distributions of the normalized density of \emph{E. coli} obtained from the Monte-Carlo agent-based simulation and numerical solution of the advection-diffusion \eqref{main_equation_example}.
Three snapshots at times $t=10, 60,200$ (sec) are shown. 
As expected, the snapshots of  a solution of  \eqref{main_equation_example} 
and the snapshots of a solution of Monte-Carlo simulation move to the right 
when $\gamma>\gamma^*$, as shown in Figures~\ref{1d_exp_1p1_0p9}(c)
  and they move to the left when $\gamma<\gamma^*$, as shown in 
  Figures~\ref{1d_exp_1p1_0p9}(a).
  
  Figures~\ref{1d_exp_1p1_0p9}(b, d) display 
  the corresponding $\text{CMC}_x$ which, as expected, is positive when $\gamma>\gamma^*$ and the bacteria accumulates on the right and is negative  when $\gamma<\gamma^*$ and the bacteria accumulates on the left. 
  
  Note that for the given exponential stimuli, the  values of $\gamma$ and $p=0.05$ are chosen such that the shallow condition \eqref{Example:shallow:condition:1D} holds. 
  As discussed in Section \ref{lin_ligands:1D},  Proposition~\ref{Prop:AD:1D} and Figure~\ref{1d_exp_1p1_0p9} confirm that the numerical solutions of \eqref{main_equation_example}
agree well with  
the solutions of Monte-Carlo simulations.
\begin{figure}[ht]
	\begin{minipage}[c][1\width]{
	   0.45\textwidth}
	   \centering
	   \includegraphics[width=1\textwidth]{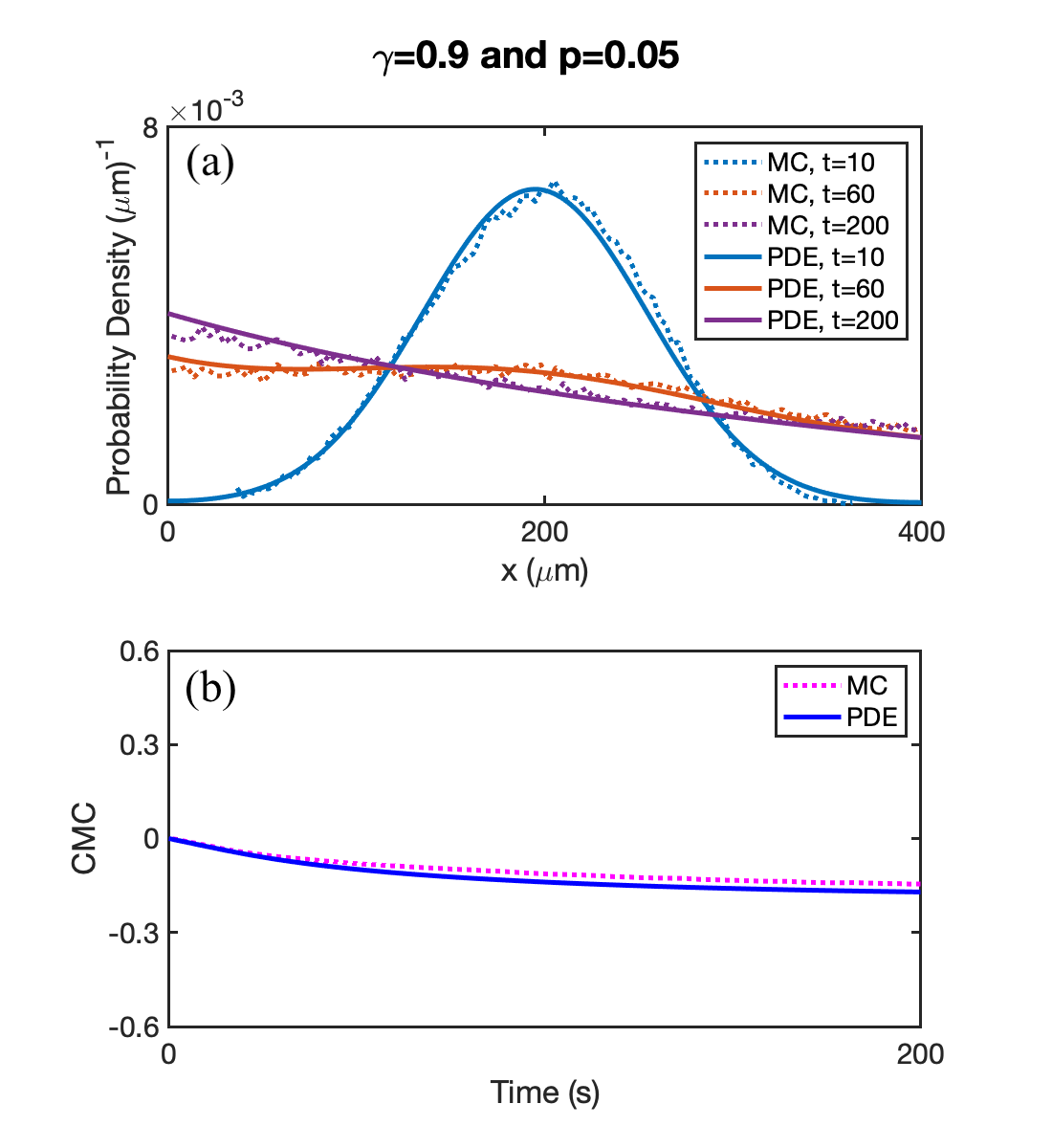}
	\end{minipage}
	\begin{minipage}[c][1\width]{
	   0.45\textwidth}
	   \centering
	   \includegraphics[width=1\textwidth]{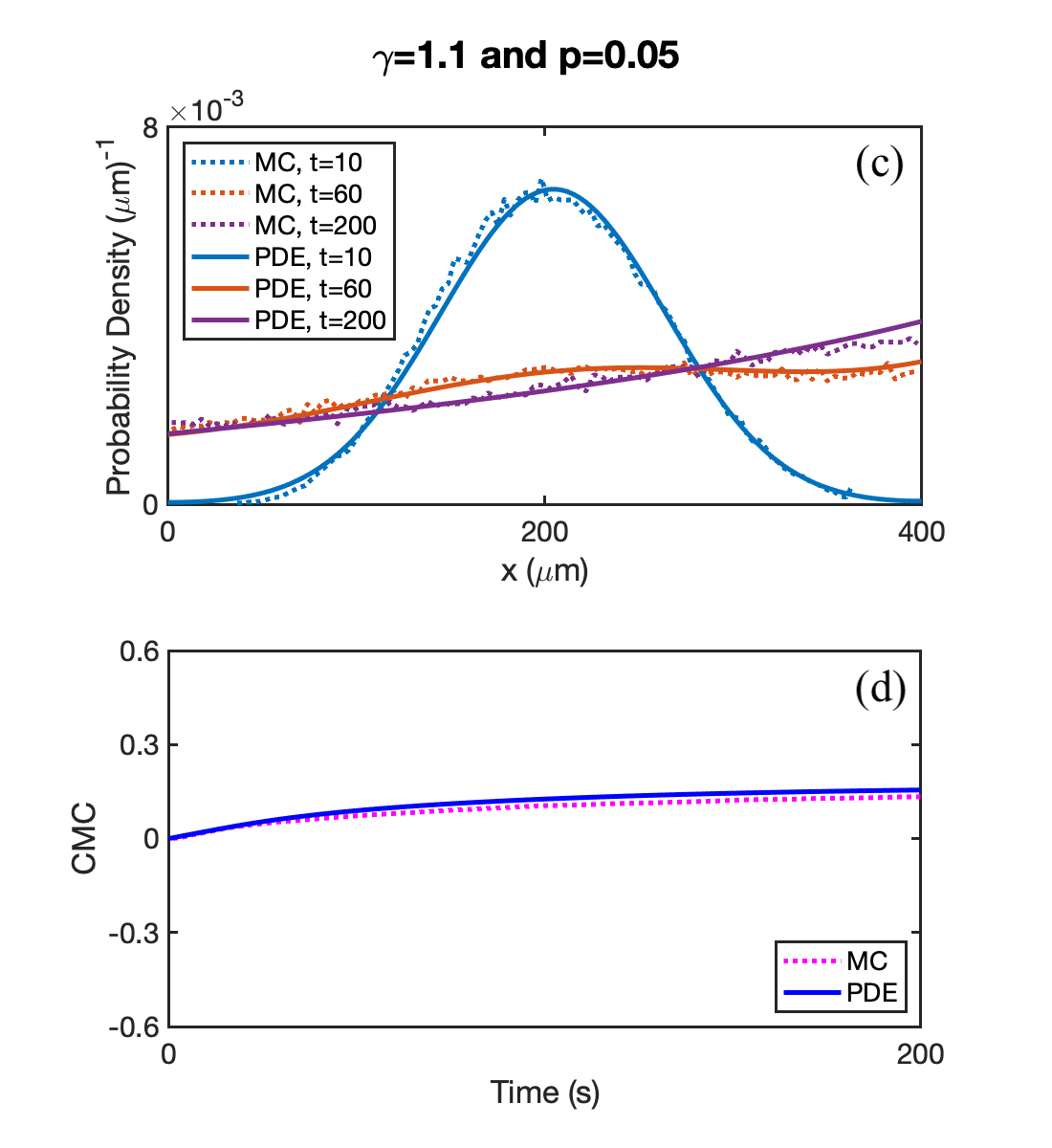}
	\end{minipage}
    \caption{(a) and (c): Comparisons of Monte-Carlo simulation and numerical solutions of  \eqref{main_equation_example} for two exponential gradients \eqref{exp_exp_gradients} at times $t=10, 60,200$ (sec) with $\gamma=0.9<\gamma^*$  and $\gamma=1.1>\gamma^*$, in which the snapshots move to the left and right, respectively. (b) and (d):
    Comparisons of the corresponding $\text{CMC}_x$.}
    \label{1d_exp_1p1_0p9}
\end{figure}

\section{Advection-diffusion equation for chemotaxis in response to two stimuli in a two-dimensional space}\label{AD:2dim}

In this section, we assume that the bacteria move in a two-dimensional space. Applying moment closure techniques and parabolic scaling \cite{erban2004individual,erban2005signal,xue2009multiscale,aminzare2013remarks, xue2015macroscopic}, 
we derive an
equation for the density of cells at the population level that  carries the description of an internal state of individuals in response to the extracellular signals. 

As introduced in Section \ref{Fokker-Planck}, let $p({\bf x}, \aa, \nu, \theta, t)$ be a density function that describes a population of agents at time $t$ and location ${\bf x}=(x,y)^\top$ with velocity $(\nu_1,\nu_2)=(\nu \cos \theta, \nu \sin \theta)$ and an internal state $\aa.$ For the sake of simplicity, by fixing a constant speed $\nu$, we let 
$p_{\theta}(x,y, \aa, t)$ denotes the density of bacteria centered at $(x,y)^\top$ which move to the direction $(\cos(\theta),\sin(\theta))^\top, \theta\in[0,2\pi)$, with the speed $\nu$.

 According to the forward Fokker-Planck  equation~(\ref{transport}), for $\theta \in [0, 2\pi)$, $p_{\theta} (x,y,\aa,t)$ satisfies
\begin{align}\label{transport_two_dim_general}
\begin{aligned}
\frac{\partial {p_{\theta}}}{\partial t}\;+&\;\;\dfrac{\partial }{\partial x} \left(\nu\cos(\theta){p_{\theta}}\right)\;+\;\;\frac{\partial}{\partial y} \left(\nu\sin(\theta){p_{\theta}}\right) \;+\; \frac{\partial }{\partial \aa} \left(f_{\theta}(\aa, S_{1}, S_{2})\; {p_{\theta}}\right)\\
= \; & \dfrac{1}{2\pi} \lambda(\aa, S_{1}, S_{2})\int_0^{2\pi} ({p_{\eta}}(x,y, \aa, t)-{p_{\theta}}(x,y, \aa,  t))\;d\eta, 
\end{aligned}
\end{align}
where $f_{\theta}$ and $\lambda$ describe the internal dynamics and tumbling rate, respectively. 

In the presence of two extracellular signals $S_{1}(x,y,t)$ and $S_{2}(x,y,t)$,
the evolution (\ref{deterministic_model}) of the internal state of the bacteria 
that move to the direction $(\cos(\theta),\sin(\theta))^\top$ with the speed $\nu$ is governed by the following ordinary differential equation. 
\begin{align}\label{internal dyn in 2d}
\begin{aligned}
    \frac{d \aa}{dt} &=    f_{\theta}(\aa ,S_{1}, S_{2})\\
    &= f_0(\aa ,S_{1}, S_{2})+\nu \cos(\theta) f_1^1(\aa ,S_{1}, S_{2})+\nu\sin(\theta)f_1^2(\aa ,S_{1}, S_{2}),
\end{aligned}
\end{align}
where the real-valued functions $f_{\theta}, f_{0}, f_{1}^{1}$, and $f_{1}^{2}$ are continuously differentiable.
We assume that $f_{0}, f_{1}^{1}$ and  $f_{1}^{2}$ have the Taylor expansions with respect to $\aa$ as follows:
\begin{align*}\label{Taylor_f}
    \begin{aligned}
    f_{0}&= A_{0} + A_{1}\aa +A_{2} \aa^{2} +\cdots,\\
    f_{1}^{1} &= B_{0}^{1}  + B_{1}^{1} \aa+ B_{2}^{1}\aa^{2}+ \cdots,\\
    f_{1}^{2} &= B_{0}^{2}  + B_{1}^{2} \aa+ B_{2}^{2}\aa^{2}+ \cdots.\\
    \end{aligned}
\end{align*}
Also, we assume that the tumbling rate $\lambda=\lambda(\aa, {S}_{1}, S_{2})$ has the Taylor expansion  
\[\lambda  = \alpha_{0} + \alpha_{1} \aa + \alpha_{2} \aa^{2} + \cdots. \] 
All the Taylor constants are functions of $S_1$ and $S_2$.

At a fixed time $t$, consider a population of bacteria with internal dynamics \eqref{internal dyn in 2d} and tumbling rate $\lambda$ that are located in $(x,y)$. We want to show that, under some conditions,  the population of bacteria,  which can be described by 
\begin{align*}\label{density_2d}
    n(x,y,t) = \int_{\mathbb{R}} \int_{0}^{2 \pi} p_{\theta}(x,y,\aa, t) d\theta d\aa,
\end{align*}
solves an advection-diffusion equation: 
\begin{equation*}
\frac{\partial n}{\partial t}\;=\; \dfrac{1}{2}\frac{\partial}{\partial x}\left(\frac{\nu^2}{\alpha_0}\frac{\partial n}{\partial x}-\frac{\nu^2\alpha_1B_0^1}{\alpha_0(A_1-\alpha_0)}\; n\right)+
\dfrac{1}{2}\frac{\partial}{\partial y}\left(\frac{\nu^2}{\alpha_0}\frac{\partial n}{\partial y}-\frac{\nu^2\alpha_1B_0^2}{\alpha_0(A_1-\alpha_0)}\; n\right).
\end{equation*}

Following the techniques from \cite{erban2004individual} and \cite{aminzare2013remarks}, we define 
the fluxes as
\begin{align*}
    \begin{aligned}
    j^{(1)} (x,y,t)& = \int_{\r} \int_{0}^{2\pi} \nu \cos(\theta)p_{\theta}(x,y,\aa,t) d\theta d\aa,\\
    j^{(2)} (x,y,t)& = \int_{\r} \int_{0}^{2\pi} \nu \sin(\theta) p_{\theta}(x,y,\aa,t) d\theta d\aa,
    \end{aligned}
\end{align*}
and the higher moments of the density and the fluxes as
\begin{align}\label{moments_2d}
    \begin{aligned}
    n_{i}(x,y,t)& = \int_{\r} \int_{0}^{2 \pi}\aa^{i} p_{\theta}(x,y,\aa,t) d \theta d\aa,&& \qquad i = 1,2,\ldots,\\
    j_{i}^{(1)}(x,y,t)&=\int_{\r} \int_{0}^{2 \pi}\aa^{i} \nu \cos(\theta)p_{\theta}(x,y,\aa,t) d\theta d\aa, &&\qquad i = 1,2, \ldots,\\
    j_{i}^{(2)}(x,y,t)&=\int_{\r} \int_{0}^{2 \pi}\aa^{i} \nu \sin(\theta)p_{\theta}(x,y,\aa,t) d\theta d\aa, &&\qquad i = 1,2, \ldots.
    \end{aligned}
\end{align}

\begin{assumption} \label{assumption:decay}
\normalfont
For any $\theta\in[0,2\pi)$, the density functions $p_{\theta}$ satisfy the decay condition \begin{equation}\label{decaying_cond}
    p_{\theta} (x,y,\aa,t) \leq C(x,y,t) e^{-c(x,y,t)\aa} 
\end{equation}
for some functions $C, c: \mathbb{R}^{2} \times [0, \infty) \rightarrow \mathbb{R}_{>0}.$ 
\end{assumption}

This assumption guarantees that the higher moments of the density and fluxes are well-defined. 

\begin{assumption} \label{assumption:moments}
For any $i \geq 2$, 
we assume that $n_i$, $j_{i}^{(1)}$, and $j_{i}^{(2)}$ are negligible compared to $n_{0},$ $j_{0}^{(1)},$ $j_{0}^{(2)}$, $n_{1},$ $j_{1}^{(1)}$ and $j_{1}^{(2)}$. 
\end{assumption}
This assumption is made for the purpose of more tractable calculations. 

\begin{assumption}\label{assumption:coeff}
$A_0=0$, $A_1\neq0$, $a_0\neq0$, $A_1\neq a_0$, $B_0^1\neq0$, and $B_0^2\neq0$.
\end{assumption}
This assumption guarantees the existence of unique solutions for some equations (see \eqref{eps_1st}-\eqref{eps_2nd} below). 

Multiplying (\ref{transport_two_dim_general}) by 1, $\nu \cos(\theta)$, $\nu \sin(\theta)$, and/or $\aa$ and integrating the resulting equations with respect to $\aa$ and $\theta$ over $\r$ and $[0, 2 \pi)$, respectively, we obtain the following six equations:
\begin{align}\label{2d:six eq}
    \begin{aligned}
    &\frac{\partial n}{\partial t} + \frac{\partial j^{(1)}}{\partial x} + \frac{\partial j^{(2)}}{\partial y}  =0,\\
    &\frac{\partial j^{(1)}}{\partial t} + \frac{\nu^{2}}{2} \frac{\partial n}{\partial x} = -\alpha_{0}j^{(1)} - \alpha_{1} j_{1}^{(1)}-\sum_{k\geq 2}\alpha_{k}j_{k}^{(1)},\\
    &\frac{\partial j^{(2)}}{\partial t} + \frac{\nu^{2}}{2} \frac{\partial n}{\partial y} = -\alpha_{0}j^{(2)} - \alpha_{1} j_{1}^{(2)}-\sum_{k\geq 2}\alpha_{k}j_{k}^{(2)},\\
   &\frac{\partial n_{1}}{\partial t} + \frac{\partial j_{1}^{(1)}}{\partial x} + \frac{\partial j_{1}^{(2)
    }}{\partial y} = A_{0}n + A_{1}n_{1} +    B_{0}^{1} j^{(1)} + B_{1}^{1} j_{1}^{(1)} +  B_{0}^{2}j^{(2)} + B_{1}^{2} j_{1}^{(2)} \\
    & \qquad \qquad \qquad \qquad \quad 
    +  \sum_{k \geq 2} A_{k}n_{k} + \sum_{k \geq 2} B_{k}^{1} j_{k}^{(1)} + \sum_{k \geq 2} B_{k}^{2} j_{k}^{(2)},\\
    &\frac{\partial j_{1}^{(1)}}{\partial t} + \frac{\nu^{2}}{2} \frac{\partial n_{1}}{\partial x}  = A_{0 } j^{(1)} + (A_{1}- \alpha_{0}) j_{1}^{(1)} + \frac{\nu^{2}}{2} B_{0}^{1} n + \frac{\nu^{2}}{2} B_{1}^{1}n_{1} 
   +\sum_{k \geq 2} (A_{k}- \alpha_{k-1}) j_{k}^{(1)} + \frac{\nu^{2}}{2} \sum_{k \geq 2} B_{k}^{1} n_{k} ,\\
    &\frac{\partial j_{1}^{(2)}}{\partial t} + \frac{\nu^{2}}{2} \frac{\partial n_{1}}{\partial y}  = A_{0 } j^{(2)} + (A_{1}- \alpha_{0}) j_{1}^{(2)} + \frac{\nu^{2}}{2} B_{0}^{2} n + \frac{\nu^{2}}{2} B_{1}^{2}n_{1}
  +\sum_{k \geq 2} (A_{k}- \alpha_{k-1}) j_{k}^{(2)} + \frac{\nu^{2}}{2} \sum_{k \geq 2} B_{k}^{2} n_{k} .
    \end{aligned}
\end{align}
Here, we used the decaying condition (\ref{decaying_cond}) which, for $i =0, 1, 2, \ldots$, yields  
\begin{align*}
    &\int_{\r} \int_{0}^{2 \pi} \aa^{i} \nu^{2}\cos(\theta) \sin(\theta) p_{\theta}(x,y, \aa, t) d \theta d\aa = 0,\\
    & \int_{\r} \int_{0}^{2\pi} \aa^{i} \nu^{2} \cos^{2}(\theta) p_{\theta}(x,y, \aa, t)  d\theta d\aa= \int_{\r} \int_{0}^{2 \pi} \aa^{i} \nu^{2} \sin^{2}(\theta) p_{\theta}(x,y,\aa, t) d\theta d\aa = \frac{\nu^{2}}{2} n_{i}(x,y,\aa,t),
\end{align*}
 where $n_{0}= n, j_{0}^{(1) }=j^{(1)}$ and $j_{0}^{(2)} = j^{(2)}$.

In what follows, we apply the parabolic scaling of space and time to the moment equation (\ref{moments_2d}), to derive a set of non-dimensional equations. 
Let $L, T, \nu_{0}$ and $N_{0}$ be scale factors for the length, time, velocity and the particle density, respectively. 
The parabolic scales of space and time are given by
\begin{align}\label{parabolic_space_time}
    \hat{x} = \Big( \frac{\varepsilon L}{\nu_{0} T}\Big) \frac{x}{L},  \quad   \hat{y} = \Big( \frac{\varepsilon L}{\nu_{0} T}\Big) \frac{y}{L}, \quad \hat{t} = \varepsilon^{2}\frac{t}{T},
\end{align}
for arbitrary small $\varepsilon>0$. Then, 
the dimensionless parameters are as follows. 
\begin{align}\label{dimensionless_parameters}
    \begin{aligned}
   &\qquad \qquad \hat{\nu} = \frac{\nu}{\nu_{0}},\quad \hat{n}= \frac{n}{N_{0}}, \quad \hat{j}^{(1)} = \frac{j^{(1)}}{N_{0} \nu_{0}}, \quad \hat{j}^{(2)}=\frac{j^{(2)}}{N_{0} \nu_{0}},\\
&    \hat{n_{i}}= \frac{n_{i}}{N_{0}}, \quad \hat{j_{i}}^{(1)} = \frac{j_{i}^{(1)}}{N_{0} \nu_{0}}, \quad \hat{j_{i}}^{(2)}=\frac{j_{i}^{(2)}}{N_{0} \nu_{0}}, \qquad \qquad \qquad i = 1,2,\ldots,\\
 &   \hat{\alpha}_{i} = T\alpha_{i}, \quad \hat{A}_{i} =T A_{i}, \quad \hat{B}_{i}^{1} = L B_{i}^{1}, \quad \hat{B}_{i}^{2} = L B_{i}^{2}, \qquad i = 1,2, \ldots.
    \end{aligned}
\end{align}
Denoting 
\begin{align*}
    \hat{w} = (\hat{n}, \hat{j}^{(1)}, \hat{j}^{(2)}, \hat{n}_{1}, \hat{j}_{1}^{(1)}, \hat{j}_{1}^{(2)})^{\top},
\end{align*}
and by Assumption \ref{assumption:moments},  we derive the following system of dimensionless moments from the dimensional equations (\ref{2d:six eq}):
\begin{align}\label{2d_dimensionless}
    \begin{aligned}
    \varepsilon^{2} \frac{\partial \hat{w}}{\partial \hat{t}} + \varepsilon \frac{\partial}{\partial \hat{x}} {\mathcal{P}}_{1} \hat{w}+ \varepsilon \frac{\partial}{\partial \hat{y}} {\mathcal{P}}_{2} \hat{w} = \varepsilon {\mathcal{Q}} \hat{w} + {\mathcal{R}} \hat{w}.
    \end{aligned}
\end{align}
Here, 
the matrices $\mathcal{P}_{1}, \mathcal{P}_{2},\mathcal{Q}$ and $\mathcal{R}$ are defined by partitioning into four $3\times3$ blocks such as
\begin{gather*}
    {\mathcal{P}}_{1} = 
    \begin{pmatrix}
   {P}_{1} & \mathbf{0}\\
    \mathbf{0} & {P}_{1} 
    \end{pmatrix}, \quad
{\mathcal{P}}_{2} = 
    \begin{pmatrix}
    {P}_{2} & \mathbf{0} \\
    \mathbf{0} & {P}_{2} 
    \end{pmatrix}, \quad
{\mathcal{Q}} = 
    \begin{pmatrix}
    \mathbf{0} & \mathbf{0} \\
    {Q}_{0} & {Q}_{1} 
    \end{pmatrix}, \quad
    {\mathcal{R}}= 
    \begin{pmatrix}
    {R}_{0} & {R}_{1} \\
    {S}_{0} & {S}_{1} +{R}_{0} 
    \end{pmatrix},
\end{gather*}
where $\mathbf{0}$ is a zero matrix of dimension $3$ and 
\begin{gather*}
   {P}_{1}= \begin{pmatrix}
    0 & 1 & 0\\
    \frac{\hat{\nu}^{2}}{2} & 0 & 0\\
    0 & 0 & 0
    \end{pmatrix},
    \hspace{0.2cm}
    {P}_{2} = \begin{pmatrix}
    0 & 0 & 1\\
    0 & 0 & 0\\
    \frac{\hat{\nu}^{2}}{2} & 0 & 0
    \end{pmatrix}, 
\hspace{0.2cm}
{Q}_{i} = 
\begingroup
\setlength\arraycolsep{2pt}
\begin{pmatrix}
    0 & \hat{B}_{i}^{1} & \hat{B}_{i}^{2}\\
    \frac{\hat{\nu}^{2}}{2}  \hat{B}_{i}^{1} & 0&0\\
    \frac{\hat{\nu}^{2}}{2}  \hat{B}_{i}^{2}  & 0 & 0
    \end{pmatrix}
    \endgroup,\\
      {R}_{i} =
      \begingroup
\setlength\arraycolsep{2pt}
      \begin{pmatrix}
    0 & 0 & 0\\
    0 & -\hat{\alpha}_{i}& 0\\
    0  & 0 & -\hat{\alpha}_{i}
    \end{pmatrix}
    \endgroup,  \hspace{0.2cm}
{S}_{i} =
\begingroup 
\setlength\arraycolsep{2pt}
\begin{pmatrix}
\hat{A}_{i} & 0 & 0\\
0 & \hat{A}_{i} &0\\
0 & 0 & \hat{A}_{i}
    \end{pmatrix}
    \endgroup, \qquad i=0,1.
\end{gather*}

To apply the regular perturbation method for $w,$ we set
\[ \hat{w} = \hat{w}^{0} + \varepsilon\hat{w}^{1} + \varepsilon^{2} \hat{w}^{2} + \cdots,\]
where 
\begin{align*}
    \hat{w}^{i} = \left(\hat{n}^{i}, \hat{j}^{(1)i}, \hat{j}^{(2)i}, \hat{n}_{1}^{i}, \hat{j}_{1}^{(1)i}, \hat{j}_{1}^{(2)i}\right)^\top, \qquad i=0,1,2,\ldots.
\end{align*}
Substituting $\hat{w}$ into the dimensionless moment system (\ref{2d_dimensionless})   
and collecting $\epsilon^i$ terms, for $i=0,1, 2$,
\begin{align}  
    &\varepsilon^{0} : \qquad {\mathcal{R}} \hat{w}^{0}=0 \label{eps_1st}\\
    &\varepsilon^{1} : \qquad {\mathcal{R}} \hat{w}^{1} = - {\mathcal{Q}} \hat{w}^{0}  + \frac{\partial}{\partial \hat x} {\mathcal{P}}_{1} \hat{w}^{0} + \frac{\partial }{\partial \hat y} {\mathcal{P}}_{2} \hat{w}^{0} \label{eps_2nd} \\
    & \varepsilon^{2} : \qquad {\mathcal{R}}\hat{w}^{2} = -{\mathcal{Q}} \hat{w}^{1} + \frac{\partial}{\partial \hat{x}} {\mathcal{P}}_{1}\hat{w}^{1} + \frac{\partial}{\partial \hat{y}} {\mathcal{P}}_{2} \hat{w}^{1} + \frac{\partial }{\partial \hat{t}} \hat{w}^{0}. \label{eps_3rd} 
\end{align}

By Assumption \ref{assumption:coeff},  (\ref{eps_1st})
has a unique solution $\hat{w}^{0}$ of the form
\begin{align*}
    \hat{w}^{0} = (\hat{n}^{0}, 0,0,0,0,0)^\top,
\end{align*}
where $\hat n^{0}$ is nonzero. 
The second equation (\ref{eps_2nd}) yields 
\begin{align}\label{2d:eq1}
    \begin{pmatrix}
    0\\
    -\hat{\alpha}_{0}\hat{j}^{(1)1} - \hat{\alpha}_{1} \hat{j}_{1}^{(1)1}\\
    -\hat{\alpha}_{0}\hat{j}^{(2)1} - \hat{\alpha}_{1}\hat{j}_{1}^{(2)1}\\
    \hat{A}_{1}\hat{n}_{1}^{1}\\
    (\hat{A}_{1} - \hat{\alpha}_{0})\hat{j}_{1}^{(1)1} + \frac{\hat{\nu}^{2}}{2} \hat{B}_{0}^{1} \hat{n}^{0}\\
    (\hat{A}_{1} - \hat{\alpha}_{0})\hat{j}_{1}^{(2)1} + \frac{\hat{\nu}^{2}}{2}\hat{B}_{0}^{2} \hat{n}^{0}
    \end{pmatrix}
     = \begin{pmatrix}
     0\\
     \frac{\hat{\nu}^{2}}{2} \frac{\partial \hat{n}^{0}}{\partial \hat{x}}\\
     \frac{\hat{\nu}^{2}}{2} \frac{\partial \hat{n}^{0}}{\partial \hat{y}}\\
     0\\ 0\\ 0
     \end{pmatrix};
\end{align}
and from the last two equalities of (\ref{2d:eq1}), it follows
\begin{align*}
    \hat{j}_{1}^{(1)1} = -  \frac{\hat{\nu}^{2} \hat{B}_{0}^{1} }{2 (\hat{A}_{1} - \hat{\alpha}_{0})}\hat{n}^{0}
    \qquad \text{and}
\qquad
\hat{j}_{1}^{(2)1} = -  \frac{\hat{\nu}^{2} \hat{B}_{0}^{2} }{2 (\hat{A}_{1} - \hat{\alpha}_{0})}\hat{n}^{0}.
\end{align*}
Moreover, plugging $\hat{j}_{1}^{(1)1}$ and $\hat{j}_{2}^{(2)1}$ into the second and third equalities in (\ref{2d:eq1}), we obtain 
\begin{align}\label{2d:eq3}
    \hat{j}_{1}^{(1)1} = - \frac{\hat{\nu}^{2}}{2 \hat{\alpha}_{0}}\frac{\partial \hat{n}^{0}}{\partial \hat{x}} + \frac{\hat{\nu}^{2}\hat{\alpha}_{1} \hat{B}_{0}^{1} }{2\hat{\alpha}_{0} (\hat{A}_{1} - \hat{\alpha}_{0})} \qquad \text{and}
\qquad
\hat{j}_{1}^{(2)1} = - \frac{\hat{\nu}^{2}}{2 \hat{\alpha}_{0}}\frac{\partial \hat{n}^{0}}{\partial \hat{y}} + \frac{\hat{\nu}^{2}\hat{\alpha}_{1} \hat{B}_{0}^{2} }{2\hat{\alpha}_{0} (\hat{A}_{1} - \hat{\alpha}_{0})}\; .
\end{align}
Noticing that the right hand side of (\ref{eps_3rd}) is in the image of ${\mathcal{R}}$ and $(1,0,0,0,0,0)^\top$ is in the kernel of ${\mathcal{R}},$ the right hand side of  (\ref{eps_3rd}) must be orthogonal to $(1,0,0,0,0,0)^\top$ by the Fredholm Alternative Theorem, which yields
\begin{align}\label{2d:eq4}
\frac{\partial }{\partial \hat{t}} \hat{n}^{0} +
    \frac{\partial }{\partial \hat{x}} \hat{j}^{(1)1} + \frac{\partial }{\partial \hat{y}}\hat{j}^{(2)1}=0.
\end{align}
By substituting the results in (\ref{2d:eq3}) into  (\ref{2d:eq4}), we  obtain the following equation for $\hat{n}^0$:
\begin{equation}\label{dimless_eq_2D}
\frac{\partial \hat{n}^{0}}{\partial \hat{t}}\;=\; \dfrac{1}{2}\frac{\partial}{\partial \hat{x}}\left(\frac{\hat{\nu}^2}{\hat{\alpha}_0}\frac{\partial \hat{n}^{0}}{\partial \hat{x}}-\frac{\hat{\nu}^2\hat{\alpha}_1 \hat{B}_0^1}{\hat{\alpha}_0(\hat{A}_1-\hat{\alpha}_0)}\; \hat{n}^{0}\right)+
\dfrac{1}{2}\frac{\partial}{\partial \hat{y}}\left(\frac{\hat{\nu}^2}{\hat{\alpha}_0}\frac{\partial \hat{n}^{0}}{\partial \hat{y}}-\frac{\hat{\nu}^2\hat{\alpha}_1 \hat{B}_0^2 }{\hat{\alpha}_0(\hat{A}_1-\hat{\alpha}_0)}\; \hat{n}^{0}\right).
\end{equation}
Similarly, we can  derive  the evolution equation for $\hat{n}^{1}$ which solves \eqref{dimless_eq_2D}.

For $n(x,y,t) = n^{0}(x,y,t) + \varepsilon n^{1}(x,y,t) + \mathcal{O}(\varepsilon^{2}),$ if the terms in $\mathcal{O}(\varepsilon^{2})$ are ignored,  (\ref{dimless_eq_2D}) for the original (dimensional) variable $n$ is transformed into
\begin{equation}\label{advection_diffusion_2D}
\frac{\partial n}{\partial t}\;=\; \dfrac{1}{2}\frac{\partial}{\partial x}\left(\frac{\nu^2}{\alpha_0}\frac{\partial n}{\partial x}-\frac{\nu^2\alpha_1B_0^1}{\alpha_0(A_1-\alpha_0)}\; n\right)+
\dfrac{1}{2}\frac{\partial}{\partial y}\left(\frac{\nu^2}{\alpha_0}\frac{\partial n}{\partial y}-\frac{\nu^2\alpha_1B_0^2}{\alpha_0(A_1-\alpha_0)}\; n\right).
\end{equation}

For the spatial domain $[0,L_{1}] \times [0, L_{2}]$, assuming that the  population of bacteria is conserved in time and there is no flux along the boundary, we can impose the following boundary conditions:
\begin{align}\label{Robin_general}
    \begin{cases}
    \vspace{0.1cm}
\;    D \dfrac{\partial n}{\partial x} (0,y,t) = \chi_{1}(0, y)n(0,y,t) \quad \mbox{and} \quad D \dfrac{\partial n}{\partial x}(L_{1},y,t) = \chi_{1}(L_{1},y) n(L_{1},y,t)\\
 \;   D \dfrac{\partial n}{\partial y} (x,0,t) = \chi_{2}(x, 0) n(x,0,t) \quad \mbox{and} \quad D \dfrac{\partial n}{\partial y}(x,L_{2},t) = \chi_{2}(x,L_{2}) n(x,L_{2},t),
    \end{cases}
\end{align}
where
   $ D = \frac{\nu^{2}}{2 \alpha_{0}} \; \mbox{and for $i=1,2,$} \; \chi_{i}(x,y) = \frac{\nu^{2} \alpha_{1} B_{0}^{i}}{2 \alpha_{0}(A_{1}-\alpha_{0})}.$

\subsection{Application to a population of \emph{E. coli} bacteria}\label{AD-application:2dim}

We now compute the coefficients of the macroscopic equation \eqref{advection_diffusion_2D} for \emph{E.\ coli} bacteria. 
As we discussed in Section \ref{internal:review}, in a two-dimensional space,  the internal state of \emph{E.\ coli} evolves according to the
following ODE system:
\begin{equation*}
\label{Example:internal:a:2D}
\frac{d\aa}{dt}\;=\; f_0(\aa,S_1,S_2) +   \nu \cos(\theta) f_{1}^{1}(\aa, S_{1}, S_{2}) + \nu \sin(\theta) f_{1}^{2}(\aa, S_{1}, S_{2}),
\end{equation*}
where 
\begin{align*}
    f_0 (\aa,S_1,S_2)&= \dfrac{\alpha}{\tau_a}N \aa(\aa-\aa_0)(\aa-1),\\
    f_{1}^{1}(\aa, S_{1}, S_{2}) &= N \aa (\aa -1) \Big(\dfrac{\gamma}{1+\gamma} \dfrac{\partial_{x}S_{1}}{S_{1} } + \dfrac{1}{1+\gamma}\dfrac{\partial_{x}S_{2}}{S_{2}} \Big) ,\\
    f_{1}^{2}(\aa,S_{1},S_{2}) &=N \aa (\aa -1) \Big(\dfrac{\gamma}{1+\gamma} \dfrac{\partial_{y}S_{1}}{S_{1} } + \dfrac{1}{1+\gamma}\dfrac{\partial_{y}S_{2}}{S_{2}} \Big).
\end{align*}
The  constant terms of the Taylor expansions of $f_{1}^{1}$ and $f_{1}^{2}$ are zero.  However, in Assumption \ref{assumption:coeff}, we saw that these constant terms must be non-zero. To fix this issue, 
we make a change of coordinate, $\hat{\aa} = \aa - \aa_{0},$ and obtain the following new internal dynamics of $\hat{\aa}:$
\begin{equation}
\label{Example:internal:hat:a:2D}
\frac{d\hat\aa}{dt}\;=\;  \hat f_0(\hat\aa,S_1,S_2) + \nu \cos (\theta) \hat f_1^{1}(\hat\aa,S_1,S_2) + \nu \sin(\theta) \hat f_{1}^{2}(\hat\aa, S_{1}, S_{2}),
\end{equation}
where
\begin{align*}\label{Example:internal:hat:a:2D:details}
\begin{aligned}
   \hat f_{0} (\hat \aa, S_1, S_2)& = p N \hat \aa (\hat \aa + q) (\hat \aa + q -1),\\
    \hat f_{1}^{1}(\hat \aa, S_{1}, S_{2}) &= N (\hat \aa + q) (\hat \aa + q -1) \Big(\dfrac{\gamma}{1+\gamma} \dfrac{\partial_{x}S_{1}}{S_{1} } + \dfrac{1}{1+\gamma}\dfrac{\partial_{x}S_{2}}{S_{2}} \Big) ,\\
    \hat f_{1}^{2}(\hat \aa,S_{1},S_{2}) &=N (\hat \aa + q) (\hat \aa + q -1) \Big(\dfrac{\gamma}{1+\gamma} \dfrac{\partial_{y}S_{1}}{S_{1} } + \dfrac{1}{1+\gamma}\dfrac{\partial_{y}S_{2}}{S_{2}} \Big). 
    \end{aligned}
\end{align*}
We let 
\begin{equation}\label{Example:translation}
q=\aa_0\quad \mbox{and}\quad p=\dfrac{\alpha}{\tau_a}
\end{equation}
 represent the adapted value and the the speed of adaptation, respectively. 

Next, we transform the tumbling rate, discussed in \eqref{tumble_rate}, into the new coordinate $\hat\aa$ as follows:
 \begin{equation}
 \label{Exampletumbling:hat:2D}
 \lambda(\hat\aa)\;=\;\lambda_0+r (\hat\aa+q)^H,
\qquad \mbox{where} \quad r=\dfrac{1}{\tau \aa_{0}^{H}}.
\end{equation}

All the model parameters $N, p, q, r,$ and $H$ are assumed to be positive constants and are as given in Table \ref{Table:Parameters}.

Let Assumption \ref{assumption:decay} hold.
In the following two lemmas, we provide sufficient conditions that lead to Assumption \ref{assumption:moments}. 

\begin{lemma}[Shallow condition]\label{shallow_condition_example_2d}
Let $c = \min\{q, 1-q\}$. If for any $(x, y)\in[0,L_1]\times[0,L_2]$, and any $\theta\in[0,2\pi)$
\begin{align}\label{Example:shallow:condition:2D}
    \Big| \cos (\theta)\Big(\dfrac{\gamma}{1+\gamma} \dfrac{\partial_{x}S_{1}}{S_{1} } + \dfrac{1}{1+\gamma}\dfrac{\partial_{x}S_{2}}{S_{2}} \Big)  + \sin (\theta) \Big(\dfrac{1}{1+\gamma} \dfrac{\partial_{y}S_{1}}{S_{1} } + \dfrac{1}{1+\gamma}\dfrac{\partial_{y}S_{2}}{S_{2}} \Big)\Big| \leq \frac{c p}{\nu},
\end{align}
and  $|\hat \aa (0)| \leq c$,
then $|\hat \aa(t)| \leq c$ for all $t \geq 0$.
\end{lemma}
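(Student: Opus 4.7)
My plan is to prove this as a forward-invariance statement for the interval $[-c, c]$ under the scalar ODE \eqref{Example:internal:hat:a:2D}. The central observation is that all three components of the vector field share a common factor: writing
\begin{equation*}
\frac{d\hat a}{dt} \;=\; N(\hat a+q)(\hat a+q-1)\Bigl[p\hat a \;+\; \nu\cos\theta\,W_1(x,y) \;+\; \nu\sin\theta\,W_2(x,y)\Bigr],
\end{equation*}
where $W_1 = \frac{\gamma}{1+\gamma}\frac{\partial_x S_1}{S_1} + \frac{1}{1+\gamma}\frac{\partial_x S_2}{S_2}$ and $W_2$ is its $y$-analogue, reduces the problem to an analysis of the quadratic prefactor together with the bracket.

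The first step is to check the sign of the prefactor on $[-c,c]$. Since $c=\min\{q,1-q\}$ we have $0\le -c+q \le \hat a+q \le c+q \le 1$, so $(\hat a+q)\ge 0$ and $(\hat a+q-1)\le 0$ on the whole interval, giving a nonpositive product. The second step is to check the boundaries using the hypothesis \eqref{Example:shallow:condition:2D}. At $\hat a=c$ the bracket becomes $pc + \nu\cos\theta\,W_1+\nu\sin\theta\,W_2$, and the shallow condition bounds the oscillatory part by $cp$ in absolute value, so the bracket is nonnegative; combined with the nonpositive prefactor, this yields $d\hat a/dt\big|_{\hat a=c}\le 0$. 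At $\hat a=-c$ the bracket is $-pc+\nu\cos\theta\,W_1+\nu\sin\theta\,W_2\le 0$, so $d\hat a/dt\big|_{\hat a=-c}\ge 0$. Thus the flow points inward (or is tangential) on both faces of $[-c,c]$.

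The third step is to turn this boundary sign check into a genuine invariance statement. The cleanest route is a standard comparison/barrier argument: if $\hat a(0)\in[-c,c]$ but $\hat a$ leaves the interval at some first time $t_\ast>0$, then by continuity $\hat a(t_\ast)=c$ with $d\hat a/dt(t_\ast)>0$, or $\hat a(t_\ast)=-c$ with $d\hat a/dt(t_\ast)<0$; either contradicts the inequalities just derived. Alternatively one can consider $\varphi(t)=\hat a(t)^2-c^2$ and observe that $\dot\varphi\le 0$ whenever $\varphi=0$. Since $\theta$ and $(x(t),y(t))$ enter the vector field only through coefficients that have already been bounded uniformly via \eqref{Example:shallow:condition:2D}, the argument works uniformly in the trajectory taken in $(x,y,\theta)$-space.

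The only mildly delicate point is to make sure nothing in the argument assumes $0<q<1$ implicitly; this is guaranteed because $q=a_0\in(0,1)$ as an adapted kinase activity, so $c>0$ and the sign inequalities on $(\hat a+q)$ and $(\hat a+q-1)$ hold strictly in the interior. I do not expect any real obstacle beyond the bookkeeping of signs; the proof is essentially a two-line boundary check once the common factorization is exposed.
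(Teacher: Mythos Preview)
Your proof is correct and follows essentially the same approach as the paper: both arguments reduce to checking the sign of $d\hat a/dt$ at the endpoints $\hat a=\pm c$ via the factorization $N(\hat a+q)(\hat a+q-1)[\,p\hat a+\nu\cos\theta\,W_1+\nu\sin\theta\,W_2\,]$ together with the bound \eqref{Example:shallow:condition:2D}. Your write-up is in fact more careful than the paper's two-line sketch, since you make the common factorization explicit, verify the sign of the quadratic prefactor on all of $[-c,c]$, and supply an invariance argument rather than simply asserting that the boundary sign check suffices.
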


\begin{proof}
To show $|\hat \aa(t)| \leq c$, it suffices to show that ${d\hat a}/{dt}<0$ (respectively, $>0$) at $\hat a=c$ (respectively, $\hat a=-c$). 
Using \eqref{Example:shallow:condition:2D} and $c<1-q$ into  \eqref{Example:internal:hat:a:2D}, we obtain the desired result. 
\end{proof}

Note that the inequality \eqref{Example:shallow:condition:2D} holds if either the adaptation rate $p$ is sufficiently large or $\gamma$, $S_1$, and $S_2$ are chosen so that the LHS of \eqref{Example:shallow:condition:2D} is sufficiently small. See the examples given in Section \ref{simulation:2dim} for more details. 

\begin{lemma}\label{moment_closure}
Let $L, T, \nu_{0},$ and $N_{0}$ be scale factors for the length, time, velocity, and cell density, respectively, as introduced in  \eqref{parabolic_space_time} and \eqref{dimensionless_parameters}. We define dimensionless quantities as follows:
\begin{align*}
    \widehat{\Big(\dfrac{\nabla_{\bf{x}}S_{i}}{S_{i}}\Big)} = \dfrac{\nu_{0}}{\varepsilon} \dfrac{\nabla_{\bf{x}}S_{i}}{S_{i}}, (i=1,2), \quad \hat{N} = TN, \quad \hat{\gamma} = \gamma, \quad \hat{p}=p, \quad \hat{q} = q \quad  \mbox{and} \quad \hat{r} = Tr.
\end{align*}
Then, under the shallow condition \eqref{Example:shallow:condition:2D}, for any $i \geq 1$,
\begin{align*}
    \dfrac{\hat{j}_{i}^{(1)}}{\hat{n}} \leq \; \mathcal{C}^{(1)}_{i} \varepsilon^{i}, \quad
    \dfrac{\hat{j}_{i}^{(2)}}{\hat{n}}  \leq \; \mathcal{C}^{(2)}_{i} \varepsilon^{i}, \quad
    \mbox{and} \quad
    \dfrac{\hat{n}_{i}}{\hat{n}} \leq \; \mathcal{D}_{i} \varepsilon^{i},
\end{align*}
for some constants $\mathcal{C}^{(1)}_{i} = O(1),$ $\mathcal{C}^{(2)}_{i}=O(1)$, and $\mathcal{D}_{i}=O(1)$.
\end{lemma}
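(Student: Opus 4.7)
The plan is to promote the bound on the internal state $\hat{a}$ from the crude estimate $|\hat{a}|\leq c$ of Lemma \ref{shallow_condition_example_2d} to the sharper uniform estimate $|\hat{a}(t)|\leq K\varepsilon$, and then to read off the moment bounds directly from the integral definitions in \eqref{moments_2d}. Under the parabolic scaling \eqref{parabolic_space_time}--\eqref{dimensionless_parameters}, the transformed ODE \eqref{Example:internal:hat:a:2D} expressed in the slow time $\hat{t}$ takes the singularly perturbed form
\begin{align*}
\varepsilon^{2}\,\frac{d\hat{a}}{d\hat{t}} \;=\; \hat{p}\hat{N}\hat{a}(\hat{a}+q)(\hat{a}+q-1) \;+\; \varepsilon\,\hat{\nu}\hat{N}(\hat{a}+q)(\hat{a}+q-1)\,\widehat{G}(\theta,\hat{x},\hat{y}),
\end{align*}
where $\widehat{G}$ is the linear combination of the dimensionless gradients $\widehat{\nabla_{\mathbf{x}}S_{i}/S_{i}}$. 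The zeroth-order drift has a stable equilibrium $\hat{a}=0$ with linear relaxation rate $\hat{p}\hat{N}q(1-q)>0$, and a regular perturbation expansion $\hat{a}=\varepsilon\hat{a}^{(1)}+\varepsilon^{2}\hat{a}^{(2)}+\cdots$ yields coefficients that are $\mathcal{O}(1)$ in $\varepsilon$. After a transient that is short on the macroscopic timescale, $|\hat{a}(t)|\leq K\varepsilon$ uniformly in $t$, where $K$ depends only on $q,\hat{p},\hat{N},\hat{\nu}$ and the shallow-condition bound.

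Given the sharper bound, the moment estimates follow from \eqref{moments_2d}. Using Assumption \ref{assumption:decay} to justify integrability, one inserts $|\hat{a}|^{i}\leq (K\varepsilon)^{i}$ into the defining integrals to obtain
\begin{align*}
|\hat{n}_{i}| \;\leq\; (K\varepsilon)^{i}\,\hat{n}, \qquad |\hat{j}_{i}^{(1)}| \;\leq\; \hat{\nu}(K\varepsilon)^{i}\,\hat{n}, \qquad |\hat{j}_{i}^{(2)}| \;\leq\; \hat{\nu}(K\varepsilon)^{i}\,\hat{n},
\end{align*}
so taking $\mathcal{D}_{i}=K^{i}$ and $\mathcal{C}_{i}^{(1)}=\mathcal{C}_{i}^{(2)}=\hat{\nu}K^{i}$, all of which are $\mathcal{O}(1)$ in $\varepsilon$, yields the claim.

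The main obstacle is making the quasi-steady-state bound rigorous: Lemma \ref{shallow_condition_example_2d} gives only $|\hat{a}|\leq c$, independent of $\varepsilon$, so sharpening the estimate requires exploiting the time-scale separation between the fast internal relaxation and the slow parabolic time. A Lyapunov argument with $V(\hat{a})=\hat{a}^{2}/2$ is natural: the drift contributes at most $-\hat{p}\hat{N}q(1-q)\hat{a}^{2}+\mathcal{O}(\hat{a}^{3})$, which dominates the $\mathcal{O}(\varepsilon|\hat{a}|)$ forcing whenever $|\hat{a}|\gtrsim\varepsilon$ and thus pushes $\hat{a}$ into an $\mathcal{O}(\varepsilon)$-neighborhood of $0$; inside that neighborhood $V$ remains $\mathcal{O}(\varepsilon^{2})$. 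An alternative is to avoid the $\hat{a}$-ODE entirely and instead derive moment equations for $\hat{n}_{i},\hat{j}_{i}^{(1)},\hat{j}_{i}^{(2)}$ analogous to \eqref{2d:six eq}, solving them order-by-order in $\varepsilon$ via Assumption \ref{assumption:coeff} (which guarantees invertibility of the leading-order operator ${\mathcal{R}}$); this replaces the stability analysis by algebraic bookkeeping of the $\varepsilon$-powers introduced by each gradient factor and each angular integral $\int_{0}^{2\pi}\cos^{k}\theta\sin^{m}\theta\,d\theta$.
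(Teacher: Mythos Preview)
The paper does not actually give its own proof of this lemma; immediately after the statement it writes ``The proof of Lemma \ref{moment_closure} can be completed as proved in \cite{aminzare2013remarks, xue2015macroscopic}; thus we omit the proof.'' Your proposal---rewriting \eqref{Example:internal:hat:a:2D} in the scaled variables so that the forcing term carries an explicit factor of $\varepsilon$, using the stable linearisation at $\hat a=0$ to trap $|\hat a|$ in an $\mathcal{O}(\varepsilon)$ ball after a fast transient, and then inserting $|\hat a|^{i}\le (K\varepsilon)^{i}$ into the defining integrals \eqref{moments_2d}---is exactly the argument carried out in those references, so your approach is correct and coincides with the one the paper defers to.
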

The proof of Lemma \ref{moment_closure} can be completed as proved in \cite{aminzare2013remarks, xue2015macroscopic}; thus we omit the proof.

The shallow condition \eqref{Example:shallow:condition:2D}
guarantees that
the higher moments $n_{i},j_{i}^{(1)}$, and $j_{i}^{(2)}$, $i\geq 2$, 
are of order $\varepsilon^{2}$,  $O(\varepsilon^{2})$.
Indeed, we can close the moment equations \eqref{2d:six eq} by considering the higher moments as the error terms of $O(\varepsilon^{2}).$

Simple calculations show that the Taylor coefficients of $\hat f_0$, $\hat f_1^1$, and $\hat f_1^2$ are   
\begin{gather*}
    A_{0} = 0, \quad A_{1} = Npq (q-1), \quad
    B_{0}^{1} = N q (q-1) \Big ( \dfrac{\gamma}{1+\gamma} \dfrac{\partial_{x}S_{1}}{S_{1}} + \dfrac{1}{1+\gamma} \dfrac{\partial_{x} S_{2}}{S_{2}}  \Big),\\
    B_{0}^{2} = N q (q-1) \Big(\dfrac{\gamma}{1+\gamma}\dfrac{\partial_{y}S_{1}}{S_{1}} + \dfrac{1}{1+\gamma} \dfrac{\partial_{y} S_{2}}{S_{2}}  \Big),\quad
    \alpha_{0} =\lambda_{0} +  r q^{H},\quad \alpha_{1} = \dfrac{r H q^{H}}{q}\;, 
\end{gather*}
that satisfy Assumption \ref{assumption:coeff}. 
Then, with Assumption \ref{assumption:decay}, the shallow condition for the stimuli, and the
internal dynamics \eqref{Example:internal:hat:a:2D}, 
a population of \emph{E. coli}, $n(x,y,t)$, solves the following equation
\begin{align}\label{main_equation_example_2d}
    \dfrac{\partial n}{\partial t}  = \nabla \cdot \Big( D \nabla n - \chi \; \Big(\dfrac{\gamma}{1+\gamma} \dfrac{\nabla_{\bf x}S_{1}}{S_{1}} + \dfrac{1}{1+\gamma} \dfrac{\nabla_{\bf x} S_{2}}{S_{2}} \Big) \; n \Big),
\end{align}
where the diffusion coefficient $D$ and the advection constant $\chi$ are 
\begin{align*}
    D = \frac{\nu^{2}}{2(\lambda_{0} + r q^{H})} >0,
 \qquad \chi = \frac{r NH q^{H}(q-1)\nu^{2}}{2(\lambda_{0} + rq^{H})(Npq(q-1) - \lambda_{0} - r q^{H})} >0\;.
\end{align*}

For the spatial domain $[0, L_{1}] \times [0, L_{2}]$, the boundary conditions \eqref{Robin_general} become
\begin{align}\label{Robin_condition_2D}
    \begin{cases}
       \vspace{0.1cm}
\;    D \dfrac{\partial n}{\partial x} (0,y,t) = \chi V_{1}(0,y)n(0,y,t) \quad \mbox{and} \quad D \dfrac{\partial n}{\partial x} (L_{1},y,t) = \chi V_{1}(L_{1},y)n(L_{1},y,t)  , \\
\; D \dfrac{\partial n}{\partial y} (x,0,t) =\chi V_{2}(x,0)n(x,0,t) \quad \mbox{and} \quad
    D \dfrac{\partial n}{\partial y} (x,L_{2},t) =\chi V_{2}(x,L_{2})n(x,L_{2},t),
    \end{cases}
\end{align}
where
\begin{align*}
      V_{1}(x,y)= \dfrac{\gamma}{1+\gamma} \dfrac{\partial_{x} S_{1}}{S_{1}} + \dfrac{1}{1+\gamma} \dfrac{\partial_{x}S_{2}}{S_{2}} \quad \mbox{and} \quad     V_{2}(x,y)= \dfrac{\gamma}{1+\gamma} \dfrac{\partial_{y} S_{1}}{S_{1}} + \dfrac{1}{1+\gamma} \dfrac{\partial_{y}S_{2}}{S_{2}}.
\end{align*}

In the following lemma, we present sufficient  conditions that guarantee  the existence and uniqueness of solutions of \eqref{main_equation_example_2d} with the boundary condition \eqref{Robin_condition_2D}. 
The proof is followed by Lemma 1 due to the assumptions on $V_{1}$ and $V_{2}$.

\begin{lemma}\label{lem:existence 2D}
Let $V_{1}(x,y)$ and $V_{2}(x,y)$ be continuous on $\Omega:=[0,L_{1}]\times [0,L_{2}]$, 
and $n_{0}(x,y)$ be a smooth non-negative function. 
If $V_{1}(x,y) = V_{1}(x)$ and $V_{2}(x,y)=V_{2}(y),$ then  \eqref{main_equation_example_2d} with the boundary condition \eqref{Robin_condition_2D} and the initial condition $n(x,y,0) =n_{0}(x,y)$ admits a unique solution $n(x,y,t)$ in the form of $ \sum_{n=1}^{\infty} X_{n}(x)Y_{n}(y) T_{n}(t)$. Moreover, $n(x,y,t)$ is uniformly bounded in $x,y$ and $t.$
\end{lemma}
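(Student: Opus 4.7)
The plan is to exploit the separability assumption $V_1(x,y) = V_1(x)$, $V_2(x,y) = V_2(y)$ to decouple the 2D problem into two independent 1D problems of exactly the form handled in Lemma \ref{lem:existence 1D}, and then assemble the solution via a tensor-product eigenfunction expansion.

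First, I would seek product solutions $n(x,y,t) = X(x)Y(y)T(t)$ of \eqref{main_equation_example_2d}. Substituting and dividing by $XYT$, the PDE separates into
\begin{equation*}
\frac{T'(t)}{T(t)} \;=\; \frac{D X''(x) - \chi (V_1(x)X(x))'}{X(x)} + \frac{D Y''(y) - \chi (V_2(y)Y(y))'}{Y(y)}.
\end{equation*}
Since each term depends on only one variable, each is a constant, yielding two spatial eigenproblems $D X'' - \chi (V_1 X)' + \mu^x X = 0$ on $[0,L_1]$ and $D Y'' - \chi (V_2 Y)' + \mu^y Y = 0$ on $[0,L_2]$, together with $T' = -(\mu^x + \mu^y)T$. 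Under the separability assumption, the boundary condition \eqref{Robin_condition_2D} also factors cleanly: the $x$-boundaries reduce to $DX'(0) = \chi V_1(0)X(0)$, $DX'(L_1) = \chi V_1(L_1)X(L_1)$, and analogously for $Y$.

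Second, each resulting 1D eigenproblem is exactly the Robin Sturm--Liouville problem treated in the proof of Lemma \ref{lem:existence 1D}. Multiplying by the integrating factor $w_1(x) = \exp(-\chi\int_0^x V_1/D)$ (respectively $w_2$), the operator becomes self-adjoint with weight $w_i$, and the Sturm--Liouville theory \cite{kapitula2013spectral} gives discrete real eigenvalues $\{\mu^x_m\}_{m\geq 1}$, $\{\mu^y_n\}_{n\geq 1}$ bounded below, together with complete orthogonal eigenfunction systems $\{X_m\}$, $\{Y_n\}$ in the corresponding weighted $L^2$ spaces. The tensor products $\{X_m(x)Y_n(y)\}$ form a complete orthogonal basis of the weighted $L^2(\Omega)$.

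Third, I would expand $n_0(x,y) = \sum_{m,n} c_{mn} X_m(x) Y_n(y)$ and set
\begin{equation*}
n(x,y,t) \;=\; \sum_{m,n\geq 1} c_{mn}\, e^{-(\mu^x_m + \mu^y_n)t}\, X_m(x)\, Y_n(y).
\end{equation*}
After a bijective reindexing of the pair $(m,n)$ by a single index, this matches the form $\sum_n X_n(x) Y_n(y) T_n(t)$ in the statement. Smoothness of $n_0$ yields rapid decay of $c_{mn}$, hence uniform convergence of the series. Uniqueness follows either from completeness of the basis or from a standard energy estimate on the difference of two solutions, using the Robin conditions to control the boundary flux terms. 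Uniform boundedness in $(x,y,t)$ comes from the same first-eigenvalue estimate as in Lemma \ref{lem:existence 1D}: $\mu^x_1,\mu^y_1 \geq 0$, with the zero eigenvalue corresponding to the conserved steady state analogous to \eqref{steady_state}, so every mode is non-increasing in $t$ and $\|n(\cdot,\cdot,t)\|_\infty \lesssim \|n_0\|_\infty$.

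The main obstacle I anticipate is the bookkeeping for the boundary conditions and the self-adjoint reduction: verifying that \eqref{Robin_condition_2D} genuinely decouples into the two 1D Robin conditions requires the separability $V_1 = V_1(x)$, $V_2 = V_2(y)$ (this is precisely where the hypothesis is used), and the weighted $L^2$ framework must be set up so that Lemma \ref{lem:existence 1D} transfers verbatim. Once the two 1D Sturm--Liouville problems are in hand, the tensor-product assembly and spectral bound are routine.
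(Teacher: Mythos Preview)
Your proposal is correct and matches the paper's approach: the paper simply states that the proof follows from Lemma~\ref{lem:existence 1D} via the separability assumptions on $V_1$ and $V_2$, which is exactly the tensor-product reduction you describe. Your write-up supplies the details that the paper omits, but the underlying strategy is the same.
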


\subsection{Steady state solution of advection-diffusion equation with zero flux boundary conditions}\label{steady-state-2D}

In a similar way to explaining the direction of bacterial migration in Section \ref{steady-state-1D}, we explore properties of the steady state solution of the advection-diffusion equation \eqref{main_equation_example_2d} and predict the direction of bacteria. To do this, we choose $S_{1}$, $S_{2}$ and $\gamma$ so
that $V_{1}(x,y)$ and $V_{2}(x,y)$ satisfy the conditions given in Lemma \ref{lem:existence 2D}. 

To compute the steady state solution of the advection-diffusion equation \eqref{main_equation_example_2d} with zero flux boundary conditions \eqref{Robin_condition_2D}, we let the flux at $x$ direction and the flux at $y$ direction be zero, i.e., 
\begin{align*}
   J_x(x,y)&:= D\dfrac{\partial n}{\partial x} - \chi V_1(x,y)n=0, \\
    J_y(x,y)&:=D\dfrac{\partial n}{\partial y} - \chi V_2(x,y)n=0, 
\end{align*}
which yield
\begin{align}\label{grad-log-n}
\begin{pmatrix}
\frac{\partial }{\partial x} \log n\\
\frac{\partial }{\partial y} \log n 
\end{pmatrix}
=
\dfrac{\chi}{D}
\begin{pmatrix}
V_1(x,y)\\
V_2(x,y)
\end{pmatrix}.
\end{align}
Note that this equation cannot be satisfied for any arbitrary $V_1$ and $V_2$.  Since the LHS is a gradient, a necessary and sufficient condition for the equation to hold is 
\begin{equation}\label{condition-V1-V2}
\dfrac{\partial V_1}{\partial y}=
\frac{\partial V_2}{\partial x}\;.
\end{equation}
Note that if $V_1$ and $V_2$ satisfy the conditions of Lemma \ref{lem:existence 2D}, they automatically satisfy \eqref{condition-V1-V2}.
Under this condition, the steady state solution can be obtained by simple integration of  \eqref{grad-log-n}:
\begin{equation}
    \Phi(x,y) = \Phi({c}_1,{c}_2) \exp\left\{\dfrac{\chi}{D}\left(\displaystyle\int_{{c}_1}^x V_1(z,y)\;dz + \int_{{c}_2}^y V_2({c}_1,z)\;dz\right)\right\},
\end{equation}
where $({c}_1, {c}_2)\in[0,L_1]\times [0,L_2]$ are chosen such that  $\Phi({c}_1,{c}_2)$ is a positive constant. 
Similar to what we discussed in Section \ref{steady-state-1D}, if $\partial V_1/\partial x\leq0$ and $\partial V_2/\partial y\leq0$, then  the signs of $V_1$ and $V_2$ at the initial point $(x_0,y_0)$ can determine the direction of the motion of bacteria. We let $\gamma_1^*$ be the bifurcation value that determines the right/left direction (i.e., $V_1(x_0,y_0,\gamma_1^*)=0$) and $\gamma_2^*$ be the bifurcation value that determines the up/down direction (i.e., $V_2(x_0,y_0,\gamma_2^*)=0$). Then, three scenarios are possible: 
(i) for $\max\{\gamma^*_1,\gamma^*_2\}<\gamma$, the bacteria move to the northeast and accumulate in  $A_1:=\{ x_0<x<L_1, y_0<y<L_2\}$; 
(ii) for $\min\{\gamma^*_1,\gamma^*_2\}<\gamma<\max\{\gamma^*_1,\gamma^*_2\}$ the bacteria either move to the southeast and accumulate in  $A_4:=\{x_0<x<L_1, 0<y<y_0\}$ or move to the northwest and accumulate in  $A_2:=\{0<x<x_0, y_0<y<L_2\}$;
(iii) for $\gamma<\min\{\gamma^*_1,\gamma^*_2\}$, the bacteria move to the southwest and accumulate in  $A_3:=\{0<x<x_0, 0<y<y_0\}$. 

In the following section, we consider three sets of stimuli, which their corresponding $V_1$ and $V_2$ satisfy the condition of Lemma \ref{lem:existence 2D} (and hence   \eqref{condition-V1-V2}). For each set we find the bifurcation values which determine the direction of bacteria.

\section{Monte-Carlo agent-based simulations in two-dimensional space}\label{simulation:2dim}

To validate our two-dimensional macroscopic approximation \eqref{main_equation_example_2d}, we run a Monte-Carlo simulation for microscopic equation  \eqref{transport_two_dim_general}. 
Our numerical experimental set up is very similar to that of Section \ref{simulation:1dim}, which we generalize to a two-dimensional space as follows. Note that since this work is motivated by  \cite{kalinin2010responses}, we choose a computational setting to be qualitatively similar to the experimental set up of \cite{kalinin2010responses} as well.

\begin{description}[leftmargin=*]
\item[Spatial Domain.] A channel of area of $400 \mu m$ by $1600 \mu m$ ($x \in [0,400],$ $y \in [0,1600]$). 
\item[Stimuli.]
Along the two sides of the channel $x=0$ and $x=400$,
two opposing chemical signals $S_{1}(x,y)$  and $S_{2}(x,y)$, which respectively represent the concentrations of MeAsp and serin at $(x,y)$, 
flow and diffuse across the channel.
Three  sets of stimuli will be considered in Sections \ref{lin_ligands:2D}-- \ref{mix_ligands:2D}, below. 

\item[Initial Condition.] At $t=0$ (sec), an ensemble of 100,000 agents is located in the center of the channel ($x=200$ and $y=800$). 

\item[Boundary Condition.] 
We use reflecting boundary conditions at $x=0,400$ and $y=0,1600$ so the cells stay in the domain for all time.

\item[Simulation Duration.]
We simulate the bacterial behavior for $t \in [0,200]$. In Sections \ref{lin_ligands:2D} and \ref{exp_ligands:2D} we observed that the solutions of the Monte-Carlo simulation and the numerical solutions of \eqref{main_equation_example_2d} become stationary at $t=200$.
\end{description}

The distributions of the solutions are displayed by using histograms with 2500 equal-sized bins. 
To solve the advection-diffusion equation \eqref{main_equation_example_2d} with  boundary conditions \eqref{Robin_condition_2D}, we use an explicit finite difference method.
The summary of input data is given in Table \ref{Table:Simulation} (see Appendix \ref{Appendix2}), and more details can be also found in Section \ref{simulation:1dim}.

In what follows, we show some numerical results for three different choices of the stimuli combinations: Linear--Linear in Section \ref{lin_ligands:2D}, 
Exponential--Exponential in Section \ref{exp_ligands:2D}, and
Linear $\times$ Exponential--Linear $\times$ Exponential in Section \ref{mix_ligands:2D}. 
We will show that (i) for some $\gamma^{*}$, when 
$\gamma>\gamma^*$, the bacteria move to the the gradient of increasing MeAsp and when $\gamma<\gamma^*$, the bacteria move to the gradient of increasing serine; and (ii) under the condition of Lemma \ref{shallow_condition_example_2d}, 
the Monte-Carlo agent-based simulations and the numerical  solutions of \eqref{main_equation_example_2d} agree well.  

\subsection{Chemotaxis in response to two linear gradients}\label{lin_ligands:2D}

Let 
$S_{1}(x,y)=0.5 x + 130$ and $S_{2}(x,y)=-0.03x + 20$
be two opposing linear gradients for MeAsp and serine, respectively. 
Note that the stimuli are constant with respect to $y$. In this case, $V_1(x,y)= V(x)$, as defined in Section \ref{lin_ligands:1D}, and $V_2(x,y)=0$. Therefore, the condition of Lemma \ref{lem:existence 2D} and hence \eqref{condition-V1-V2} hold and the bacteria only move to the right or left (no up or down movement). Furthermore, the bifurcation value is equal to $\gamma^*\approx 0.985$, as computed in  Section \ref{lin_ligands:1D}. 

For the given linear gradients, Figures \ref{2d_lin_1p1}(a, b) (respectively, Figures \ref{2d_lin_0p9}(a, b)) display the distributions of the normalized density of bacteria obtained from the Monte-Carlo agent-based simulation and numerical simulation of \eqref{main_equation_example_2d} for $\gamma=1.5$ (respectively, $\gamma=0.5$). 
The simulations are shown in three snapshots at times $t=0$ (left), $t=60$ (middle), and $t=200$ (right).
Figure \ref{2d_lin_1p1}(c)(respectively, Figure \ref{2d_lin_0p9}(c)) displays the corresponding CMCs in $x-$direction and $y-$direction. 
\begin{figure}[ht!]
  \centering
\subfloat[Monte-Carlo simulation for $\gamma=1.5$ and $p=1$]{\includegraphics[width=\textwidth]{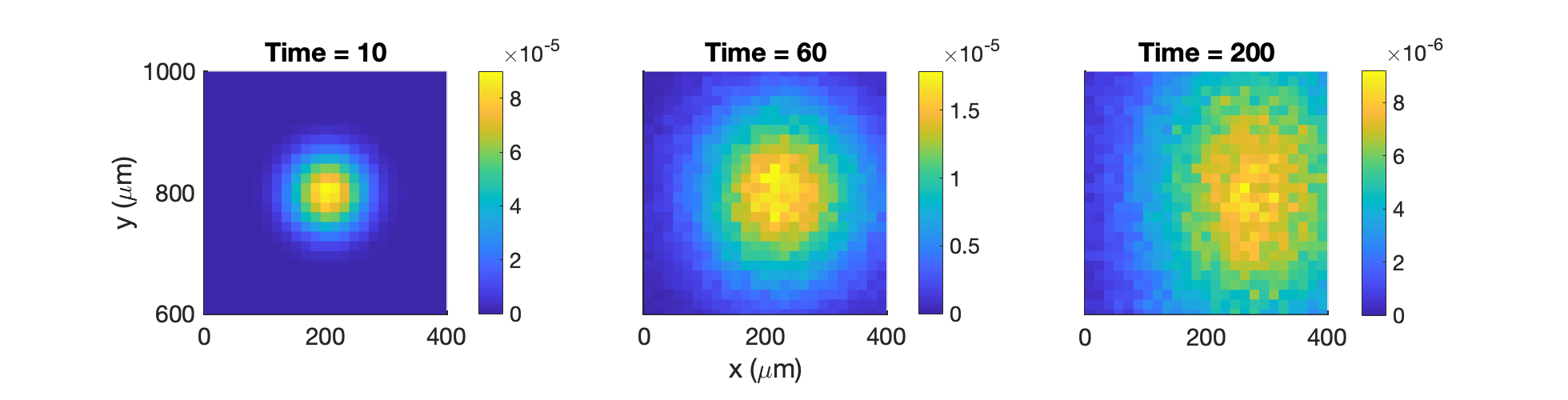}}\\
\vspace{-0.35cm}
\subfloat[Numerical solutions of  \eqref{main_equation_example_2d} for $\gamma=1.5$ and $p=1$]{\includegraphics[width=\textwidth]{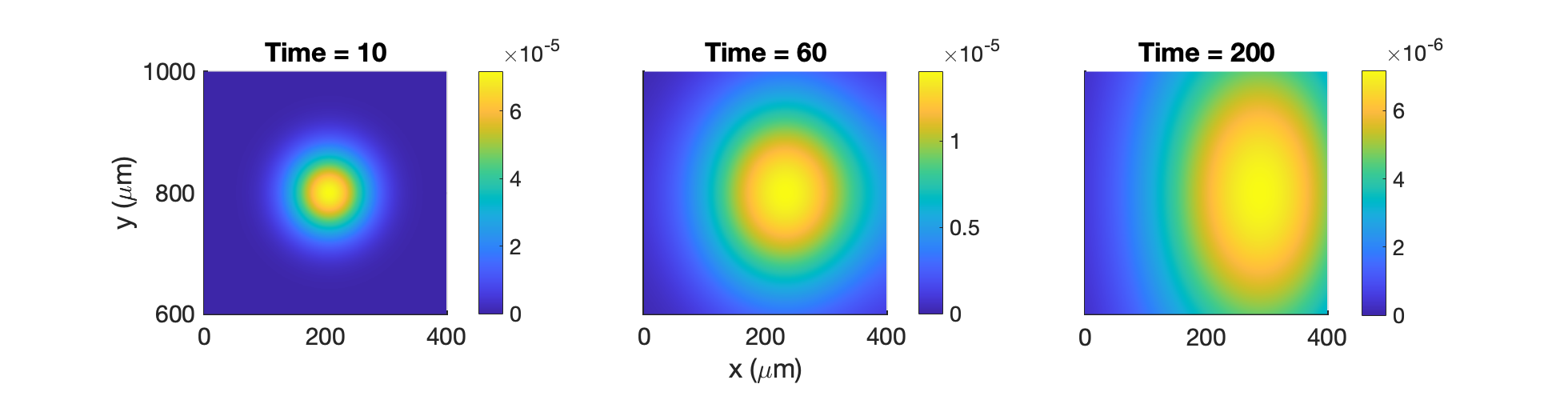}}\\
\vspace{-0.4cm}
\subfloat[$\text{CMC}_x$ (left) and $\text{CMC}_y$ (right)  ]{\includegraphics[width=0.8\textwidth]{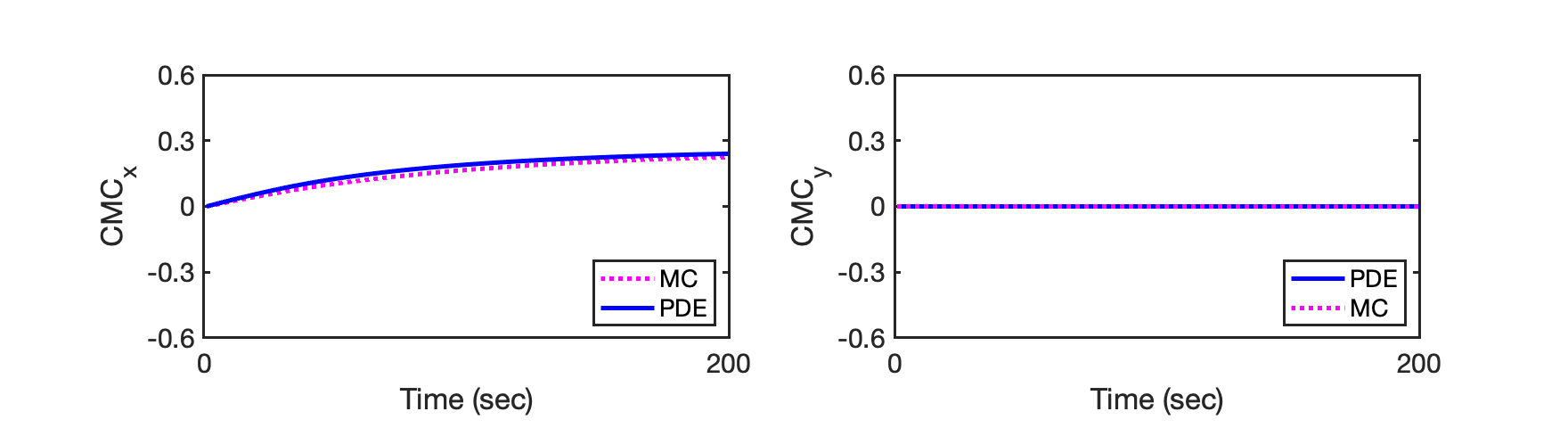}}
\caption{
	(a) and (b): Comparisons of the Monte-Carlo simulations and numerical solutions of  \eqref{main_equation_example_2d} in response to two linear gradients, when $\gamma =1.5$. In this case the bacteria move to the right,  the  gradient of increasing MeAsp.  (c): Comparisons of the corresponding CMCs.}
\label{2d_lin_1p1}
\end{figure}
 \begin{figure}[ht!]
  \centering
\subfloat[Monte-Carlo simulation for $\gamma=0.5$ and $p=1$]{\includegraphics[width=\textwidth]{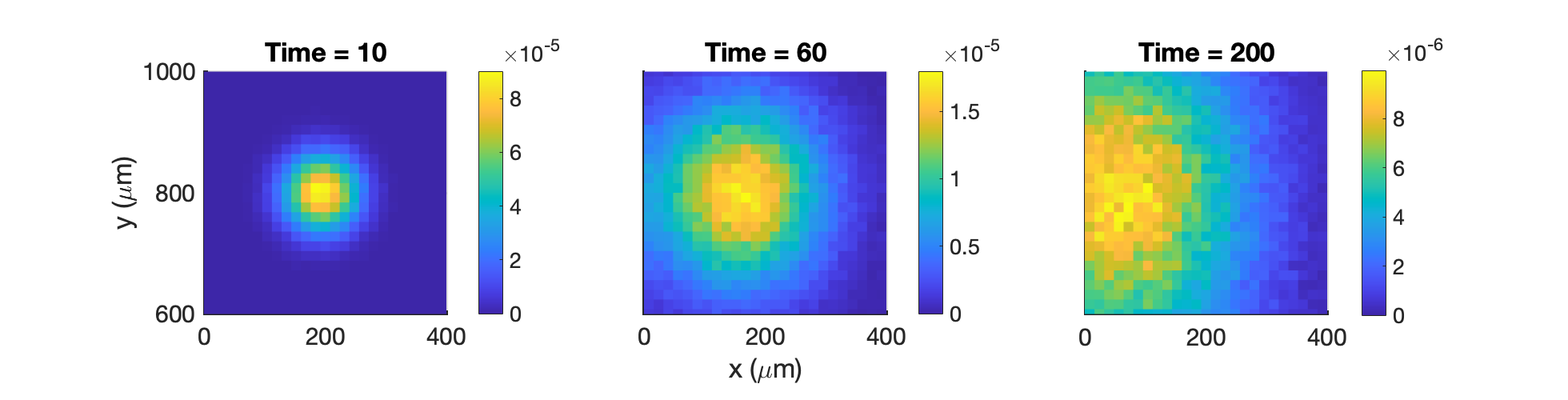}}\\
\vspace{-0.35cm}
\subfloat[Numerical solutions of  \eqref{main_equation_example_2d} for $\gamma=0.5$ and $p=1$]{\includegraphics[width=\textwidth]{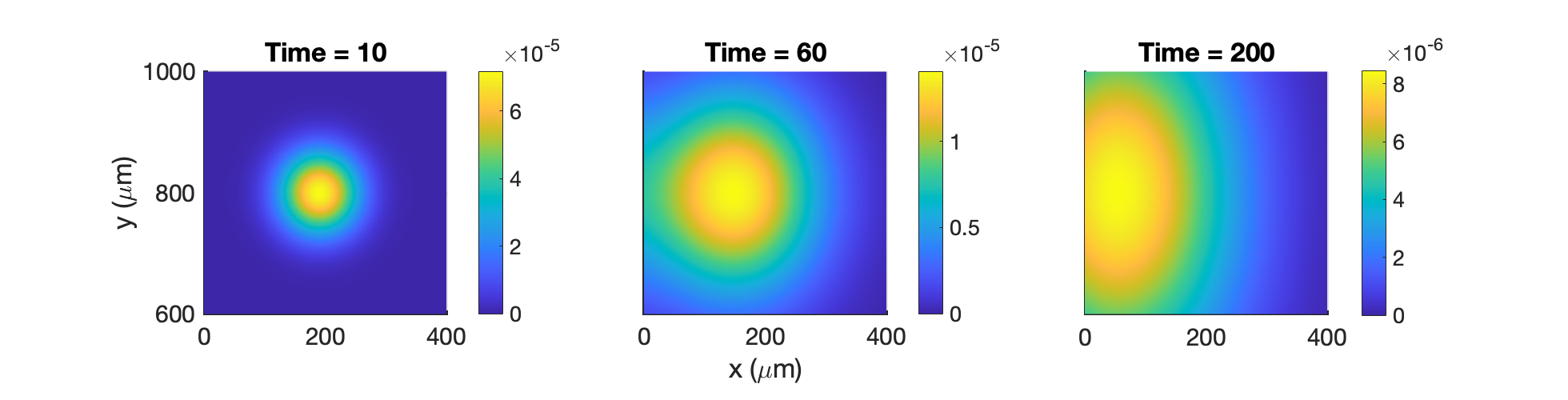}}\\
\vspace{-0.4cm}
\subfloat[$\text{CMC}_x$ (left) and $\text{CMC}_y$ (right)  ]{\includegraphics[width=0.8\textwidth]{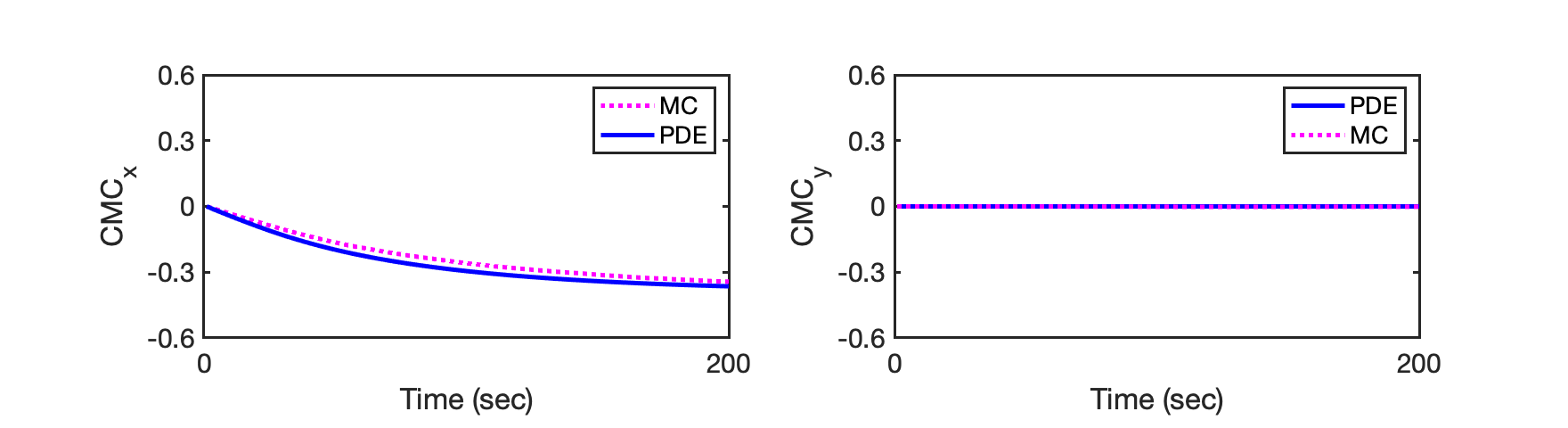}}
\caption{(a) and (b): Comparisons of the Monte-Carlo simulations and numerical solutions of  \eqref{main_equation_example_2d} for two linear gradients in  \eqref{linear_linear_gradients},  $\gamma =0.5$, and $p=1$.  In this case the bacteria move to the left,  the  gradient of increasing serine.
  Plots in (a) and (b) are displayed only for $(x,y) \in [0, 400] \times [600, 1000]$. (c): Comparisons of the corresponding CMCs.}
\label{2d_lin_0p9}
\end{figure}
 
 In Figures \ref{2d_lin_1p1} and \ref{2d_lin_0p9}, 
 the numerical solutions of  \eqref{main_equation_example_2d} are in good agreement with the results of the agent-based simulation. The snapshots of the distribution move to the gradient of increasing MeAsp in Figure \ref{2d_lin_1p1} or serine in Figure \ref{2d_lin_0p9}.
 Recalling the bifurcation value of $\gamma^{*} \approx 0.985$ in Section \ref{lin_ligands:1D}, these figures confirm that the chemotactic preference of bacteria depends on the relative abundances of receptors, i.e., when $\gamma=1.1>\gamma^*$, the bacteria move to the  gradient of increasing MeAsp ($\text{CMC}_x>0$ and increasing) and when $\gamma=0.9<\gamma^*$, the bacteria move to the  gradient of increasing serine ($\text{CMC}_x<0$ and decreasing). 
 Note that these numerical examples qualitatively reproduce the bacterial behaviors observed in
 \cite{kalinin2010responses}. 

Since $S_{1}$ and $S_{2}$ are independent of $y$, the bacteria move in the $y$-direction very slightly, as evidenced by $\text{CMC}_y\approx0$. Thus, although we run all the simulations on the domain $[0,400] \times [0, 1600]$, we  display a smaller domain,  $[0,400] \times [600, 1000]$.

\subsection{Chemotaxis in response to two exponential gradients}\label{exp_ligands:2D}

We assume that bacteria are exposed to two opposing exponential gradients 
\[S_{1}(x,y) = 130 e^{0.0023 x}\quad  \text{and} \quad S_{2}(x,y) = 8 e^{-0.0023(x-400)}.\]
In this case, $V_1(x,y)= V(x)$, as defined in Section \ref{exp_ligands:1D}, and $V_2(x,y)=0$. Therefore, condition \eqref{condition-V1-V2} holds and the bacteria only move to the right or left (no up or down movement). Furthermore, the bifurcation value is equal to $\gamma^*=1$, as computed in  Section \ref{exp_ligands:1D}. 

In Figures \ref{2d_exp_1p1} and \ref{2d_exp_0p9}, we compare the results of the Monte-Carlo simulation with numerical solution of  \eqref{main_equation_example_2d} and their corresponding CMCs. From the plots, we can see that \eqref{main_equation_example_2d} captures the behavior of individuals well. Recalling the bifurcation value $\gamma^{*} =1$ of the ratio of Tar to Tsr in Section \ref{exp_ligands:1D}, as expected, the individuals travel to the right when $\gamma=1.1>\gamma^*$ as in Figure \ref{2d_exp_1p1} and move to the left when $\gamma=0.9<\gamma^*$ as in Figure \ref{2d_exp_0p9}.

\begin{figure}[H]
  \centering
\subfloat[Monte-Carlo simulation for $\gamma=1.1$ and $p=0.1$]{\includegraphics[width=\textwidth]{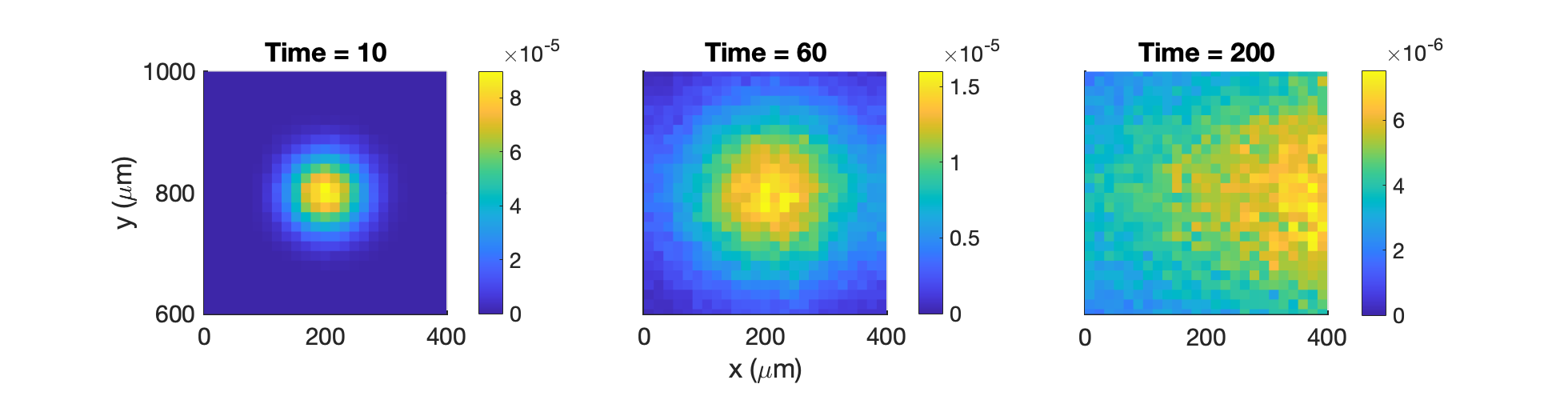}}\\
\vspace{-0.35cm}
\subfloat[Numerical solutions of  \eqref{main_equation_example_2d} for $\gamma=1.1$ and $p=0.1$]{\includegraphics[width=\textwidth]{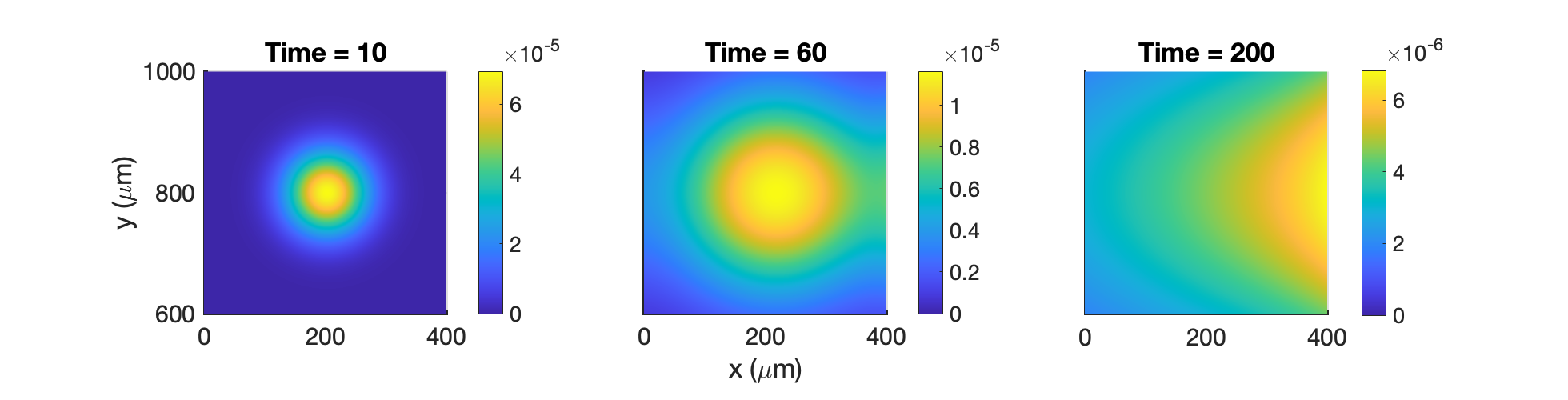}}\\
\vspace{-0.4cm}
\subfloat[$\text{CMC}_x$ (left) and $\text{CMC}_y$ (right)  ]{\includegraphics[width=0.8\textwidth]{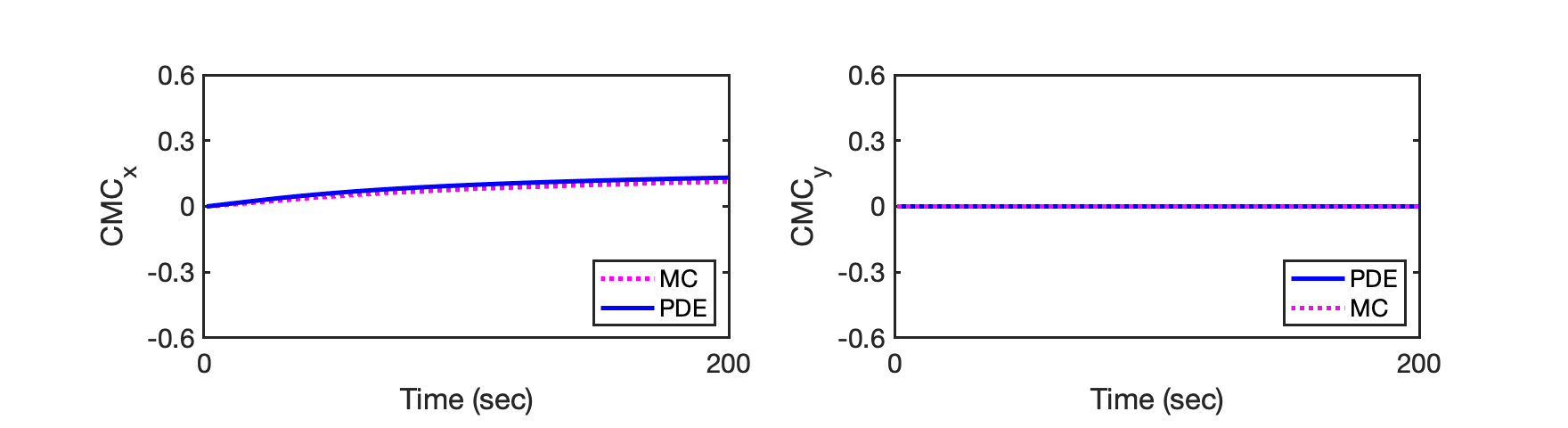}}
\caption{(a) and (b): Comparisons of Monte-Carlo simulation and numerical solutions of  \eqref{main_equation_example_2d} in response to two exponential gradients when $\gamma =1.1$. Plots in (a) and (b) are displayed only for $(x,y) \in [0, 400] \times [600, 1000]$. (c): Comparisons of the corresponding CMCs.}
\label{2d_exp_1p1}
\end{figure}

\begin{figure}[ht!]
  \centering
\subfloat[Monte-Carlo simulation for $\gamma=0.9$ and $p=0.1$]{\includegraphics[width=\textwidth]{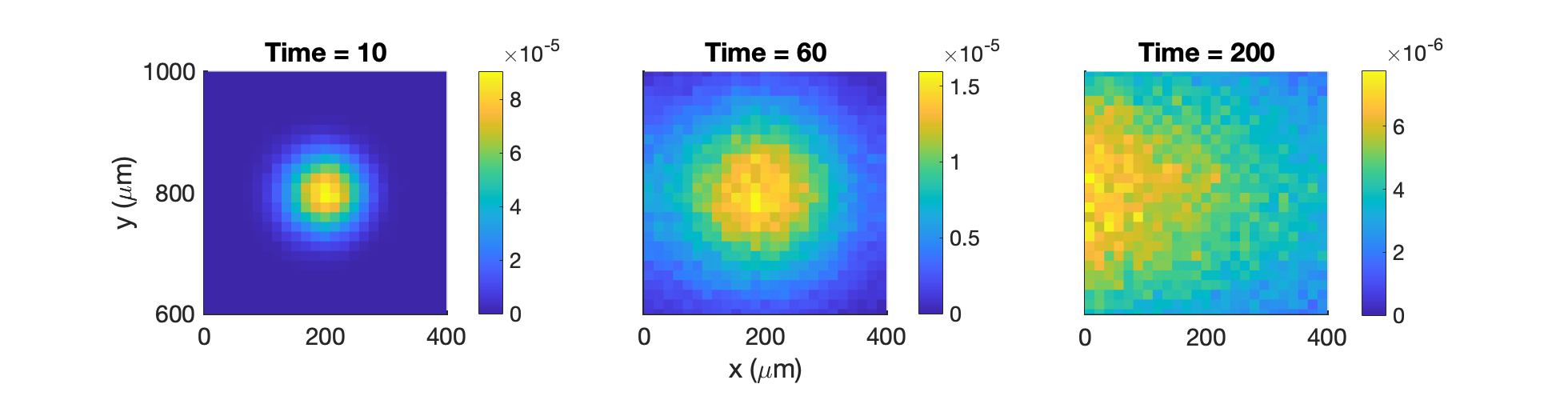}}\\
\vspace{-0.35cm}
\subfloat[Numerical solutions of  \eqref{main_equation_example_2d} for $\gamma=0.9$ and $p=0.1$]{\includegraphics[width=\textwidth]{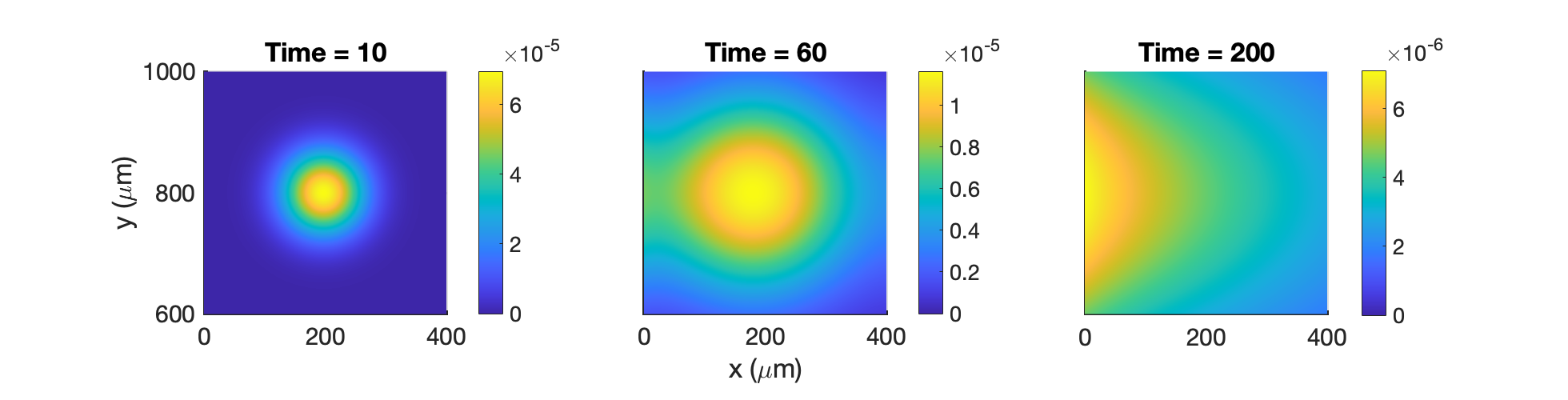}}\\
\vspace{-0.4cm}
\subfloat[$\text{CMC}_x$ (left) and $\text{CMC}_y$ (right)  ]{\includegraphics[width=0.8\textwidth]{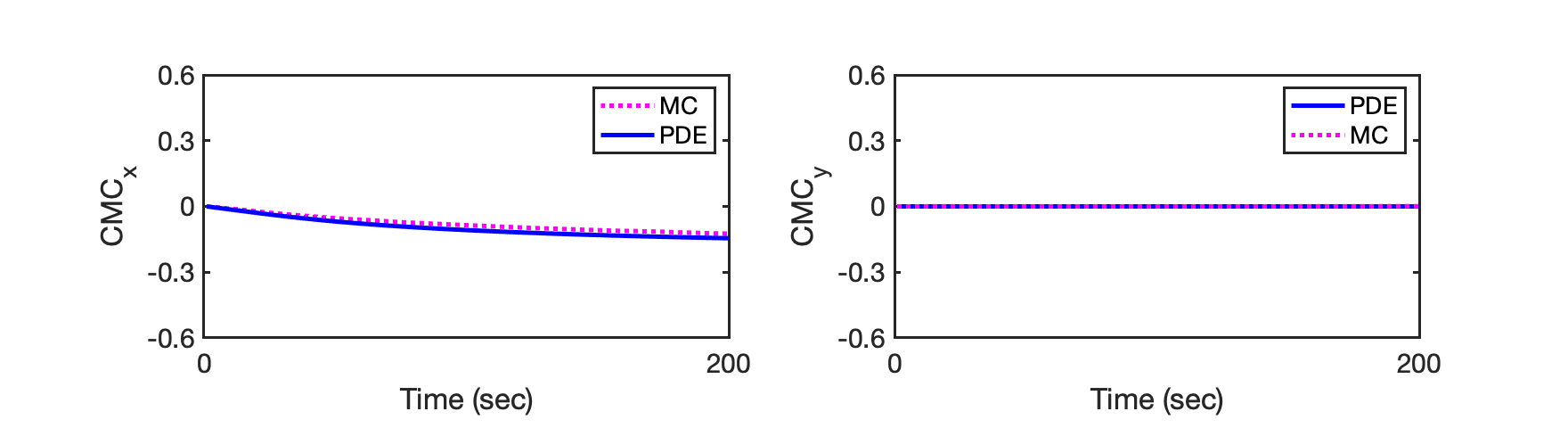}}
\caption{(a) and (b): Comparisons of Monte-Carlo simulation and numerical solutions of \eqref{main_equation_example_2d} in response to two exponential gradients when $\gamma =0.9$.  Plots in (a) and (b) are displayed only for $(x,y) \in [0, 400] \times [600, 1000]$. (c): Comparisons of the corresponding CMCs.}
\label{2d_exp_0p9}
\end{figure}

\subsection{Chemotaxis in response to mixed signals }\label{mix_ligands:2D}

In Sections \ref{lin_ligands:2D} and \ref{exp_ligands:2D}, we used two opposing gradients, independent of $y$, to reproduce chemotaxis experiments in the literature. In what follows, we assume that two opposing gradients MeAsp ($S_{1}$) and serine ($S_{2}$) satisfy
\begin{align}\label{mixed_gradients}
    S_{1}(x,y) = ( 0.5 x + 130)e^{0.005(y-800)} \quad \mbox{and} \quad S_{2}(x,y) = ( -0.03 x + 20)e^{-0.005(y-800)}.
\end{align}
Note that each gradient increases toward the corners $(0,0)$ and $(400, 1600)$ on the boundary of the domain, and reaches a peak at the corners. 
In this case, $V_1(x,y)= V(x)$, as defined in Section \ref{lin_ligands:1D}, and $V_2(x,y)=0.005\;\frac{\gamma-1}{\gamma+1}$. Therefore, condition \eqref{condition-V1-V2} holds. Furthermore, the bifurcation values are  $\gamma_1^*\approx 0.985$, as computed in  Section \ref{lin_ligands:1D}, and $\gamma_2^*=1$. Therefore, three scenarios occur:
(i) for $\gamma>1$ the bacteria move to the northeast, (ii) for $0.985<\gamma<1$ the bacteria move to northwest, and (iii) for $\gamma<0.985$ the bacteria move to southwest. 
As expected, the plots in Figure \ref{2d_Mix} show that bacteria accumulate toward the corner $(400,1600)$, when $\gamma =1.5>1$. Also, the solution of  \eqref{main_equation_example_2d} agrees well with the result of the Monte-Carlo simulation. 
\begin{figure}[ht!]
  \centering
   \begin{minipage}{.5\linewidth}
\centering
\subfloat[Monte-Carlo simulation for $\gamma=1.5$ and $p=1$]{\includegraphics[width=\textwidth]{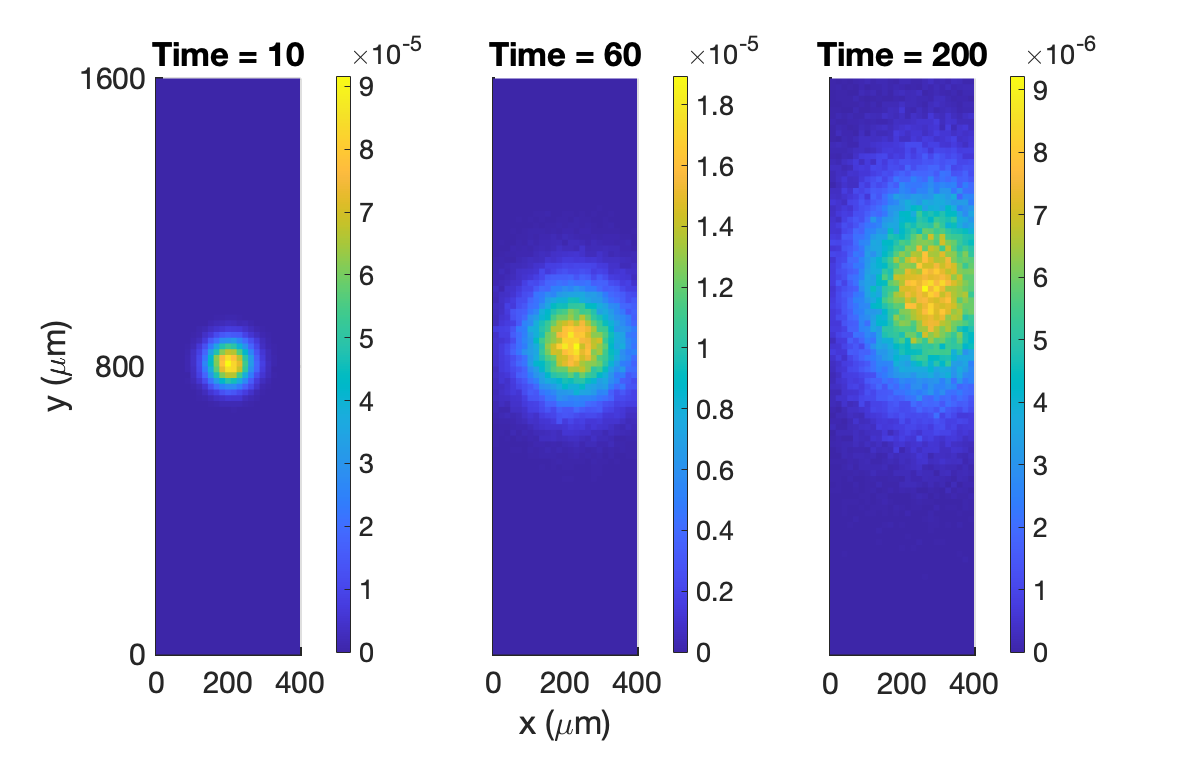}}
\end{minipage}%
\begin{minipage}{.5\linewidth}
\centering
\subfloat[Numerical solutions of  \eqref{main_equation_example} for $\gamma=1.5$ and $p=1$]{\includegraphics[width=\textwidth]{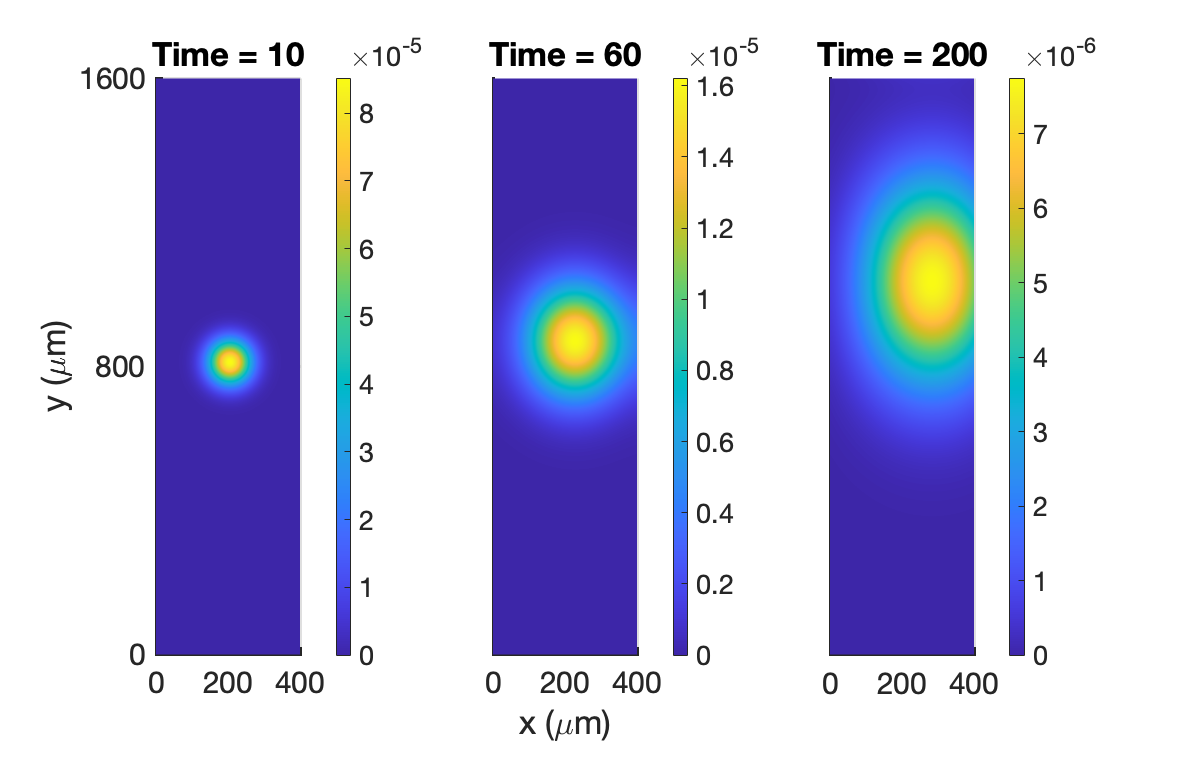}}
\end{minipage}\par
\vspace{-0.3cm}
\subfloat[$\text{CMC}_x$ (left) and $\text{CMC}_y$ (right)  ]{\includegraphics[width=0.8\textwidth]{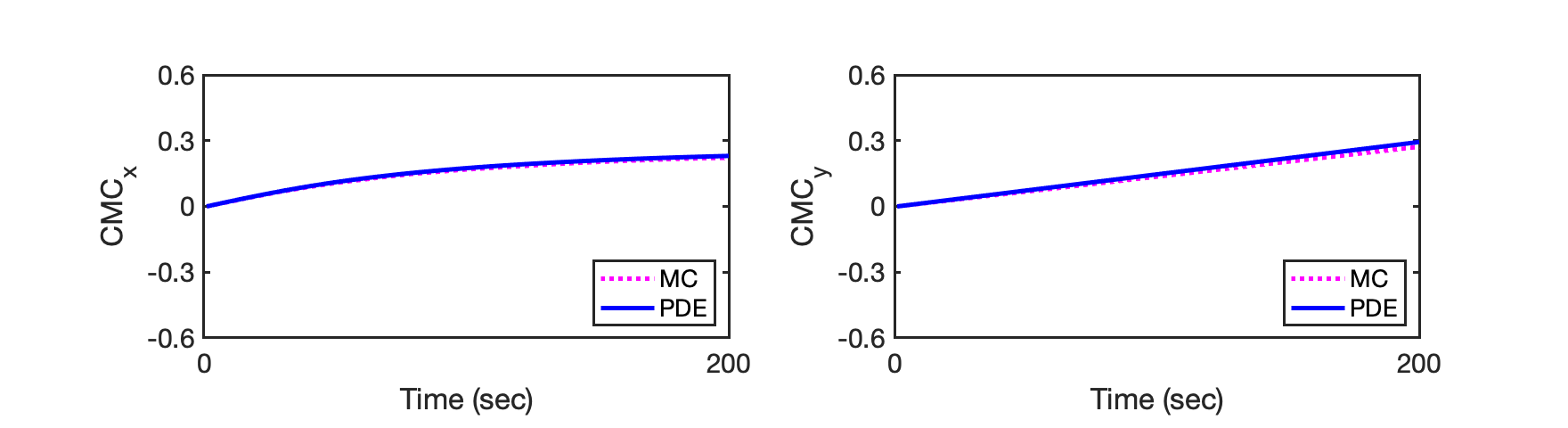}}
\caption{(a) and (b): Comparisons of Monte-Carlo simulation and numerical solutions of \eqref{main_equation_example_2d} for gradients \eqref{mixed_gradients} for $\gamma =1.5$. (c): Comparisons of the corresponding CMCs.}
\label{2d_Mix}
\end{figure}


\section{Discussion}\label{Discussion}
In this work, we studied the movement of a population of \emph{E.~coli} bacteria in response to two stimuli in a one- and a two-dimensional environment. Experimental results \cite{kalinin2010responses} show that the bacterial chemotactic preference to serine and MeAsp  depends on the ratio of their  chemoreceptors, namely $\gamma=\text{Tar}/\text{Tsr}$. In a shallow-gradient regime, we analytically found a threshold $\gamma^*$ that determines the bacterial preference, i.e., if $\gamma>\gamma^*$,  the bacteria move toward the gradient of MeAsp, and if $\gamma<\gamma^*$,  the bacteria move toward the gradient of serine. We examined our results in an environment where one stimulus  is dominant  everywhere and observed that in such a situation, a bigger force $\gamma^*$ might be needed to change the preference of the bacteria.

We started with a microscopic model for a population of bacteria carrying a one-dimensional internal dynamics. Indeed the microscopic equation is the forward Fokker-Planck equation of a stochastic model which describes bacterial chemotaxis \cite{stroock1974some}. Then, we approximated the microscopic Fokker-Planck equation by a macroscopic advection-diffusion equation which is more tractable mathematically. We compared the numerical solution of the advection-diffusion equation with a Monte-Carlo simulation of the bacterial chemotaxis to validate the approximation in
a shallow-gradient regime.

{The analysis in deriving the advection-diffusion equations is valid under the shallow-gradient condition. However, we numerically observed that even if the shallow-gradient condition does not hold, some of our results remain valid. For example, Figure \ref{fig:CMC} shows that under the condition of Section \ref{lin_ligands:1D}, the behavior of the bacteria does not change even when the adaptation rate $p$ does not satisfy the sallow-gradient condition (gray region). We also observed that $p$ does not affect the preference of bacteria. In fact, cells are often exposed to rapidly changing signals in vitro experiments and natural environments (see \cite{xue2015macroscopic,xue2016moment} and references therein), and great progress has been made in relaxing shallow gradient assumption 
\cite{xue2009multiscale, xue2015macroscopic, xue2016moment, rousset2013individual, gosztolai2020cellular}. Our work can be improved by considering a more general class of stimuli.
}

In \cite{long2017cell}, the authors found that \emph{E.~coli} cells respond to the gradient of chemoattractant not only by biasing their own random-walk swimming pattern through the intracellular pathway, but also by actively secreting a chemical signal into
the extracellular medium, possibly through a communication signal transduction pathway. 
The extracellular signaling molecule is a strong chemoattractant that attracts distant cells to the food source. They showed that 
cell-cell communication enhances bacterial chemotaxis toward external attractants. 
Incorporating such chemoattractant into microscopic model is one of the main areas of our future investigation. This cell-cell communication can be modeled as an external force to each cell and described by an extra term into the LHS of \eqref{transport}, see \cite{xue2009multiscale}.  

In this work, we only considered a one-dimensional internal dynamics. 
To obtain the internal dynamics of \emph{E.~coli} in response to multiple stimuli, we applied the heterogeneous MWC model \eqref{output:a} \cite{hu2014behaviors,mello2005allosteric,keymer2006chemosensing}, which can capture the total activity level of bacterium affected by the stimuli and  mathematically is tractable. 
In this model, all receptors within the cluster are assumed to turn on and off simultaneously, and therefore, only  the total kinase activity and total methylation level are considered.   However, in a mixed-receptor cluster, it was found that  receptor methylation dynamics is ligand specific. Hence, a local adaptation model, such as the Ising-type model, can better explain the adaptation dynamics of the mixed-receptor cluster, see e.g., \cite{keymer2006chemosensing} and \cite{hu2013precision}. 
Such models require higher dimensional equations to describe the internal dynamics. 
In our future works, we generalize our result to two-dimensional internal dynamics and for each receptor Tar and Tsr, we will consider  separate activity levels $\aa_1$ and $\aa_2$ instead of $\aa$ in \eqref{output:a} and separate methylation dynamics $dm_1/dt$ and $dm_2/dt$ instead of \eqref{methylation:m}.

\begin{figure}[ht]
\floatbox[{\capbeside\thisfloatsetup{capbesideposition={right,top},capbesidewidth=6cm}}]{figure}[\FBwidth]
{\includegraphics[width=8cm]{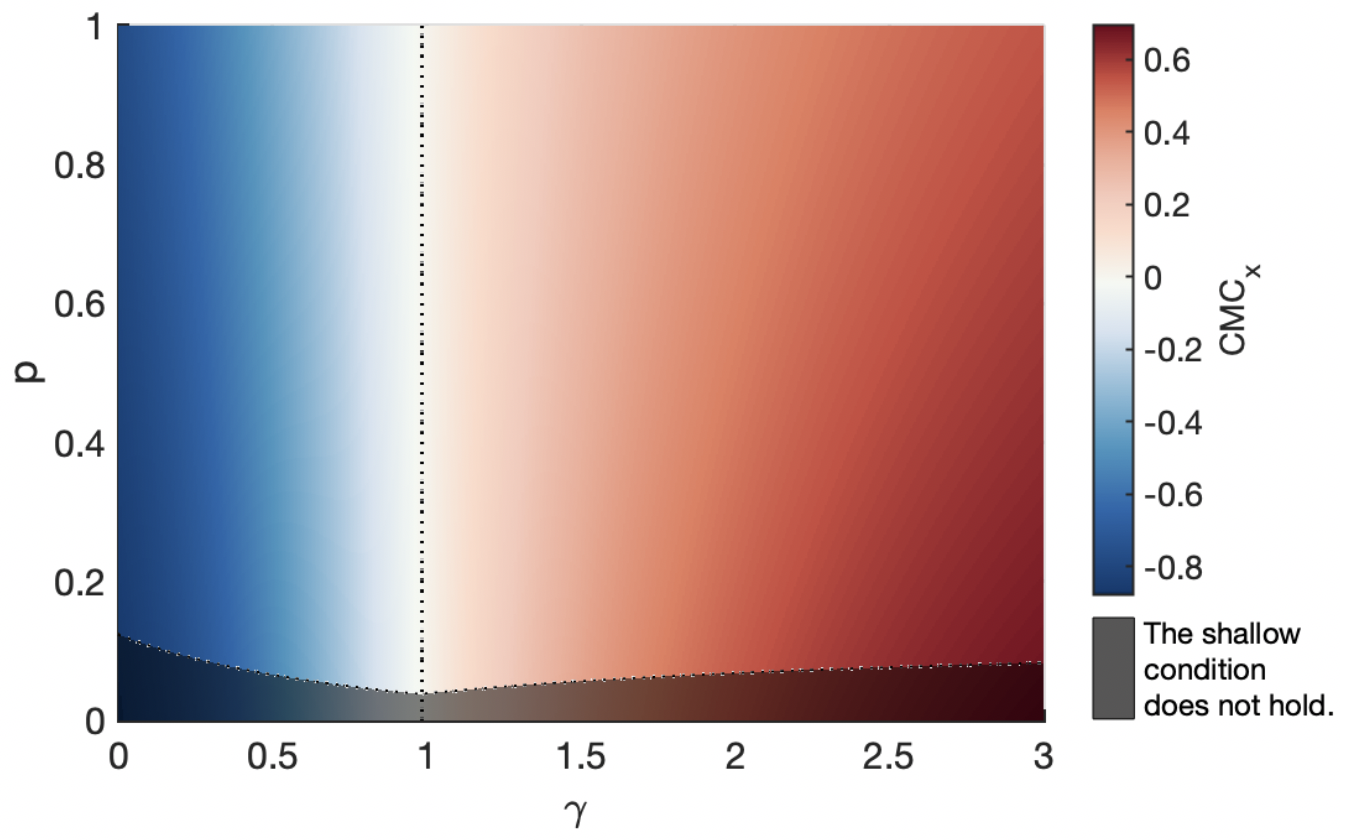}}
{\hspace{0.2cm}
\caption{\small{
How fast the signal changes or the adaptation speed does not affect the bacterial chemotactic preference.
${\rm CMC}_{x}$ of the steady state (\ref{steady_state}) 
for $S_{1}$ and $S_{2}$ in (\ref{linear_linear_gradients}) for $x_{0}=200$. 
For $(\gamma, p)$ in the dark red (respectively, blue) region, ${\rm CMC}_{x}$ becomes positive (respectively, negative)  as shown in the color bar. For $(\gamma,p)$ in the dark grey region, the shallow condition (\ref{Example:shallow:condition:1D}) is not satisfied. The dotted line represents $\gamma\approx 0.985$. }
}\label{fig:CMC}}
\end{figure}

\section{Acknowledgement }\label{Acknowledgement}
The authors would like to thank Professor Eduardo Sontag for sharing the Matlab codes for one-dimensional space (used in \cite{aminzare2013remarks}) and Professor Hans Othmer for helpful discussions. This work is partially supported by the  University of Iowa Old Gold Fellowship and Simons Foundation (712522) to ZA.


\appendix
\section{Appendix}\label{Appendix}

\subsection{Proof of Lemma \ref{lem:existence 1D}}\label{Appendix0}

The equation of our interest is
\begin{align*}
    n_{t} = \big( D n_{x} - \chi V(x) n\big)_{x}  = D n_{xx} - \chi V(x) n_{x} - \chi V'(x) n
\end{align*}
with boundary conditions
\begin{align*}
   D n_{x}(0,t) = \chi V(0) n(0,t), \qquad D n_{x}(L,t) =  \chi V(L) n(L,t).
\end{align*}
Assume $n(x,t)= \varphi(x) \psi(t)$. Then, it is satisfied
\begin{align*}
    \frac{\psi'(t)}{\psi(t)} = \frac{D \varphi''(x) - \chi V(x) \varphi'(x) - \chi V'(x) \varphi(x)}{\varphi(x)} =: - \lambda.
\end{align*}
To show that the solution $n(x,t)$ is bounded, we prove that if $\lambda$ exists, it is non-negative. 

Consider the following eigenvalue problem:
\begin{align}\label{EP}
 \mathcal{L}\varphi(x) :=   D\varphi''(x) - \chi V(x) \varphi'(x) - \chi V'(x) \varphi(x) = - \lambda \varphi(x)\tag{EP}
\end{align}
satisfying
\begin{align}\label{BC}
    D \varphi'(0)= \chi V(0) \varphi(0), \qquad D \varphi'(L) = \chi V(L) \varphi(L) .\tag{BC}
\end{align}
Putting (\ref{EP}) into the Sturm-Liouville operator, we have
\begin{align}\label{SL}
\mathcal{L}_{p(x)}  : = \frac{d}{dx} \Big( p(x) \frac{d}{dx}\Big) +q(x) 
 = - \lambda \sigma(x) 
,\tag{SL}
\end{align}
where 
\begin{align*}
    p(x) = e^{-\int \frac{\chi V(x)}{D} dx} >0, \qquad 
    q(x) = -\frac{\chi V'(x)}{D} p(x) \geq 0, \qquad \sigma (x) = \frac{1}{D} p(x) >0. 
\end{align*}

By Sturm-Liouville's Theory, 
 the problem (\ref{SL})-(\ref{BC}) is naturally posed on $H_{\rm bc}^{2},$ where
\begin{align*}
    H_{\rm bc}^{2}([0,L]) = \big\{ u \in H^{2}([0,L])  \;\ : \;\  D u_{x}(0,t) = \chi V(0) u(0,t), \quad D u_{x}(L,t) =  \chi V(L) u(L,t)
    \big\},
\end{align*}
and $\mathcal{L}_{p(x)}$ is self-adjoint
in the inner product
\[
<u, v> : = \int_{0}^{L} u(x) \overline{v(x)} dx.
\]
Moreover, the eigenvalues and the corresponding normalized eigenfunctions of (\ref{SL})-(\ref{BC}) satisfy the following properties:
\begin{enumerate}
    \item[(a)] All the eigenvalues are real, simple, and satisfy
   $ \lambda_{0} < \lambda_{1} < \lambda_{2} <\cdots$ and $\lim \limits_{n \rightarrow \infty}\lambda_{n}  = \infty. $
    \item[(b)] Each eigenfunction $\varphi_{n}(x)$ has $n$ simple zeros in the open interval $(0, L).$
    \item[(c)] $<\varphi_{n}, \varphi_{m}> = \delta_{nm}$.
    \item[(d)] $\{ \varphi_{n}(x)\}_{n=0}^{\infty}$ forms a complete orthonormal basis of $L^{2}([0,L])$. 
    \item[(e)] The smallest eigenvalue $\lambda_{0}$ is non-negative 
and satisfies
\begin{align*}
        &\qquad\mathcal{L}_{p(x)} \varphi_{0}(x) = - \lambda_{0} \sigma(x)  \varphi_{0}(x),\\
           &  \Rightarrow\; <\mathcal{L}_{p(x)} \varphi_{0}(x), \varphi_{0}(x)> \;=\; \underbrace{\int_{0}^{L} \frac{d}{dx} \Big( p(x) \frac{d}{dx} \varphi_{0}(x) \Big) \varphi_{0}(x) }_{=:\mathcal{I}_{1}}
        +
        \underbrace{q(x) \varphi_0^{2}(x) dx}_{=:\mathcal{I}_{2}}\\
        & \qquad \qquad \qquad \qquad  \;\ \quad \qquad = - \int_{0}^{L} \lambda_{0}  \sigma(x) \varphi_{0}^{2}(x) dx .
    \tag{$*$} \label{star}
    \end{align*}
    For simplicity we 
    replace $\varphi_{0}(x)$ and $\frac{d}{dx}$ by $u(x)$ and $'$, respectively. Then, by integration by parts, we have 
    \begin{align*}
    \mathcal{I}_{1} 
    = p(x) u(x) u'(x)\Big|_{0}^{L} - \int_{0}^{L} p(x)(u'(x))^{2}  dx, 
    \end{align*}
    and
    \begin{align*}
        \mathcal{I}_{2} 
    &= -\frac{\chi}{D}
    \int_{0}^{L} V'(x) e^{- \int \frac{\chi V(x)}{D} dx} u^{2}(x) \\
    &= - \frac{\chi}{D} V(x) e^{-\int \frac{\chi V(x)}{D} dx } u^{2}(x) dx \Big|_{0}^{L} 
    + \frac{\chi}{D}\int_{0}^{L} V(x) \Big(-\frac{\chi V(x)}{D} \Big) e^{-\int \frac{\chi V(x)}{D} dx} u^{2}(x) dx \\
    & \qquad + \frac{\chi }{D} \int_{0}^{L} V(x) e^{-\int \frac{\chi V(x)}{D} dx} 2 u(x) u'(x)dx\\
    & = - \frac{\chi}{D} V(x) p(x) u^{2}(x)  \Big|_{0}^{L} -  \int_{0}^{L} \frac{\chi^{2} V^{2}(x)}{D^{2}} p(x) u^{2}(x) dx + \int_{0}^{L} \frac{\chi V(x)}{D} p(x) 2 u(x) u'(x) dx.
    \end{align*}
    Note that $\eqref{star} = \mathcal{I}_{1} + \mathcal{I}_{2}$. Hence, 
    \begin{align*}
       \eqref{star} 
        &= p(L)u(L) \Big( u'(L) - \frac{\chi}{D} V(L) u(L)\Big) - p(0) u(0)\Big( u'(0) - \frac{\chi}{D} V(0) u(0)\Big)\\
        & \qquad + \int_{0}^{L}p(x) \Big( -u'^{2}(x)  -\frac{\chi^{2} V^{2}(x)}{D^{2}} u^{2}(x) +2 \frac{\chi V(x)}{D} u(x) u'(x) \Big)dx,
    \end{align*}
    where the first two terms on the right hand side disappear due to (\ref{BC}), and
    the integrand of the integral is non-positive since $p(x) >0$ and 
    \begin{align*}
        u'^{2} - 2 \frac{\chi V}{D} u u' + \frac{\chi^2 V^{2}}{D^{2}} u^{2} = \Big( u' - \frac{\chi V}{ D} u \Big)^{2} \geq 0.
    \end{align*}
    Therefore, from (\ref{star}), we arrive at
    \begin{align*}
          \lambda_{0} = -\dfrac{< \mathcal{L}_{p(x)} \varphi_{0},\varphi_{0}>}{\int_{0}^{L}  \sigma(x) \varphi_{0}^{2}(x) dx } \geq 0.
    \end{align*}
    
\end{enumerate}

\begin{landscape}
\subsection{Parameters in \emph{E. coli} internal dynamics}\label{Appendix1}

\begin{table}[ht!]
\centering
\captionof{table}{Parameters used in intracellular signaling pathway of \emph{E. coli}}
\label{Table:Parameters}
\scalebox{0.8}{
\begin{threeparttable}
\begin{tabular}{lclll}
	\toprule[0.5mm]
\multicolumn{1}{c}{ {\bf Equation}} 
& {\bf Parameter}&
\multicolumn{1}{c}{{\bf Description}}
& \multicolumn{1}{c}{{\bf Value}}
& \multicolumn{1}{c}{{\bf References}}
\\
\midrule[0.3mm]
\multirow{9}{*}{MWC model \eqref{output:a}}     
& $N$   & Number of receptors in a cluster, composed of Tar and Tsr & 6 & \cite{jiang2010quantitative,mello2007effects}
\\
& $r_{1}$    & Fraction of receptor Tar to MeAsp\tnote{a}   &      &  
\\ 
& $r_{2}$    & Fraction of receptor Tsr to serine\tnote{a}   &      & 
\\
& $\alpha$  & Free energy per added methylation group   & 1.7 & \cite{jiang2010quantitative,sourjik2002receptor, shimizu2006monitoring, endres2006precise}
\\
& $m_{0}$   & Reference methylation level in the free energy  & 1 &  \cite{jiang2010quantitative,sourjik2002receptor, shimizu2006monitoring, endres2006precise}     
\\
& $K_{A}^{1}$   & Dissociation constant of MeAsp to the active receptor Tar   & 18.2 $\mu M$    & \cite{kalinin2009logarithmic,jiang2010quantitative,mello2007effects,lan2011adapt}
\\
& $K_{A}^{2}$   & Dissociation constant of serine to the active receptor Tsr & 3 $mM$    & \cite{kalinin2009logarithmic, mello2007effects, lan2011adapt}
\\
& $K_{I}^{1}$   & Dissociation constant of MeAsp to the inactive receptor Tar   & 6 $\mu M$ &    \cite{kalinin2009logarithmic,jiang2010quantitative,lan2011adapt}           
\\
& $K_{I}^{2}$   & Dissociation constant of serine to the active receptor Tsr & 30 $\mu M$     &  \cite{kalinin2009logarithmic,lan2011adapt}
\\
\hline
\multirow{2}{*}{Adaptation model \eqref{methylation:m}}      
& $a_{0}$   & Adaptation level  & 0.5   & \cite{jiang2010quantitative,berg1975transient}
\\
& $\tau_{a}$    & Adaptation time    &  varies\tnote{b}  & \cite{vladimirov2009chemotaxis}
\\ 
\hline
\multirow{4}{*}{Run and Tumble motion \eqref{tumble_rate}} 
& $\lambda_{0}$ & Rotational diffusion   & 0.28 $rad^{2}s^{-1}$ & \cite{sourjik2002receptor, hu2013precision}        
\\
& $H$   & Hill coefficient of motor's response curve    & 10  &  \cite{jiang2010quantitative,sourjik2002receptor}                          
\\
& $\tau$    & Run average time  & 0.8 s    &   \cite{jiang2010quantitative,sourjik2002receptor}   
\\
& $\nu$ & Run velocity  & 16.5 $\mu m s^{-1}$   &   \cite{jiang2010quantitative,sourjik2002receptor}    
\\ \hline
\multirow{3}{*}{Transformed internal dynamics  \eqref{Example:translation}}      
& $q$   & $a_{0}$  & 0.5   & \cite{kalinin2009logarithmic}
\\
& $p$    & $\alpha \tau_{a}^{-1}$   &  varies\tnote{b} & 
\\ 
& $\gamma$    & Ratio between Tar and Tsr receptors, $r_{1} r_{2}^{-1}$    & varies\tnote{a}  & 
\\
\hline
\multirow{1}{*}{Transformed tumbling rate   \eqref{Exampletumbling:hat:2D}}      
& $r$   & $(\tau a_{0}^{H})^{-1}$  & 1280  & \cite{kalinin2009logarithmic}
\\
\bottomrule[0.5mm]
\end{tabular}
\begin{tablenotes}
\item[a] In this work, we are interested in the ratio of $r_{1}$ and $r_{2}$ satisfying $r_{1} + r_{2}=1$. Instead of the range of $r_{1}$ and $r_{2}$, we present the range of $\gamma=r_1/r_2$.

\item[b] 
Bacterial adaptation time varies, and it depends on the strength of signals \cite{vladimirov2009chemotaxis}.
In this work, we vary $p$ by choosing different $\tau_{\aa}$ between $1.7$ and $34$ as in  \cite{bray1995computer,terwilliger1986kinetics,simms1987purification}.
\end{tablenotes}
\end{threeparttable}
}
\end{table}

\newpage

\subsection{An overview of numerical simulations}\label{Appendix2}
\small{\textbf{A brief description of Monte-Carlo simulation:}
In a one-dimensional (respectively, two-dimensional) channel, we locate an ensemble of 100,000 agents in the center of the channel $x=200$ (respectively, $(x,y) = (200, 800)$) at time $t=0$. At each time step, the individuals choose a direction +1 or -1 (respectively, $(\cos (\theta), \sin (\theta))$, $\theta \in [0, 2\pi)$) at random, and move in that direction with a constant speed $\nu >0$. At each time step, the internal dynamics of each individual are 
computed by Euler method. At the end of each time step, we choose a number between 0 and 1 randomly
and compare the number with the probability of change from run to tumble in interval of length $d t$, namely $\lambda (\aa) dt$. If the turn occurs, the cell moves in the opposite direction with a probability of $0.5$ (respectively, rotates by $\theta \in [0, 2\pi)$, where $\theta$ is chosen at random). If a cell is located outside the spatial domain, we relocate the cell by imposing reflecting boundary conditions.
}
\normalsize

\begin{table}[ht!]
\centering
\captionof{table}{Input data used in numerical simulation}
\label{Table:Simulation}
\scalebox{0.7}{
\begin{threeparttable}
\begin{tabular}{lll}
\toprule[0.5mm]
\multicolumn{1}{c}{\multirow{2}{*}{\bf{Expression}}}                                           & \multicolumn{2}{c}{\bf{Value}}  \\
\cline{2-3} 
\multicolumn{1}{c}{} & \multicolumn{1}{c|}{\qquad \qquad \qquad \qquad \quad \bf{Monte-Carlo simulation} \qquad \qquad \qquad \qquad \quad } 
& \multicolumn{1}{c}{\bf{Numerical partial differential equations}\tnote{a}}
\\ 
\midrule[0.3mm]
Spatial domain ($\bf{x}$) 
& \multicolumn{2}{l}{\begin{tabular}[c]{@{}l@{}}1D: $0 \leq x \leq 400 $ ($\mu m$) \\ 2D: $0 \leq x \leq 400 $ ($\mu m$) and $0 \leq y \leq 1600$ ($\mu m$)\end{tabular}} \\ \hline
Time domain ($t$)  & \multicolumn{2}{l}{$0 \leq t \leq 200$ (sec)\tnote{b} } 
\\ \hline
Initial data   & \multicolumn{1}{l|}{An ensemble of 100,000 agents poses in the center of each domain. }   
&
\begin{tabular}[c]{@{}l@{}}
1D: $\exp\{- (x-200)^{2}/(2\varepsilon^{2})\}/\sqrt{2 \pi} \varepsilon, \quad \varepsilon=10^{-6}$
\\
2D: $\exp\{- ((x-200)^{2}+(y-800)^{2})/(2\varepsilon^{2})\}/2 \pi \varepsilon^{2}, \quad \varepsilon=10^{-6}$.
\end{tabular}          \\ \hline
\\ \hline
Boundary conditions
& \multicolumn{1}{l|}{No flux boundary conditions }                           & \begin{tabular}[c]{@{}l@{}}
No flux boundary conditions: \eqref{Robin_condition} for 1D and \eqref{Robin_condition_2D} for 2D
\end{tabular}          \\ \hline
Spatial step size ($\Delta x, \Delta y$)
& \multicolumn{1}{l|}{\begin{tabular}[c]{@{}l@{}}1D: $\Delta x = 0.01$ ($\mu m$)\\ 2D: $\Delta x=\Delta y=0.001$ ($\mu m $)\end{tabular}}
& \begin{tabular}[c]{@{}l@{}}1D: $\Delta x = 0.8$ ($\mu m$)\tnote{c}\\ 2D: $\Delta x=\Delta y=0.8$ ($\mu m)$\tnote{c} \end{tabular}                                \\ \hline
Time step size ($\Delta t$)                                                              & \multicolumn{1}{l|}{$\Delta t=0.0001$ (sec)\tnote{d}}                  & 
\begin{tabular}[c]{@{}l@{}}
1D: $\Delta t = 0.001$ (sec) 
\\
2D: $\Delta t = 0.0008$ for Sections \ref{lin_ligands:2D}, \ref{exp_ligands:2D} and $\Delta t = 0.00025$ for Section \ref{mix_ligands:2D}\tnote{c} 
\end{tabular}     
                   \\ \hline
Ligand function ($S_{1}, S_{2}$) 
& \multicolumn{2}{l}{\begin{tabular}[c]{@{}l@{}}
$\cdot$ Dual linear gradients in Sections \ref{lin_ligands:1D} and \ref{lin_ligands:2D}: $S_{1}(x) = 0.5 x + 130$ and $S_{2}(x) = -0.03 x + 20$ 
\\
$\cdot$ Dual exponential gradients in Sections \ref{exp_ligands:1D} and \ref{exp_ligands:2D}: $S_{1}(x) = 130 e^{0.0023 x}$ and $S_{2}(x) = 8 e^{-0.0023 (x-400)}$ 
\\
$\cdot$ Mixed opposing gradients in Section \ref{mix_ligands:2D}: 
$S_{1}(x,y)=(0.5 x + 130 ) e^{0.005(y-800)}$ and $S_{2}(x,y)=(-0.03x + 20 )e^{-0.005(y-800)}$ 
\vspace{0.1cm}
\end{tabular}}                                     \\ \hline
\begin{tabular}[c]{@{}l@{}}Tar/Tsr ratio ($\gamma$)\\ Adaptation\tnote{e} \; ($p$) \end{tabular} & \multicolumn{2}{l}{\begin{tabular}[c]{@{}l@{}}
$\cdot$ Sections \ref{lin_ligands:1D} and \ref{lin_ligands:2D}:
$\gamma=1.5$ and $\gamma = 0.5$ with $p=0.4$ (1D), 1 (2D).
\\ 
$\cdot$ Sections \ref{exp_ligands:1D} and \ref{exp_ligands:2D}:
$\gamma=1.1$ and $\gamma=0.9$ with $p=0.05$ (1D), 0.1 (2D)\\
$\cdot$ Section \ref{mix_ligands:2D}: $\gamma=1.5$ and $p=1$
\end{tabular}}
\\
\bottomrule[0.5mm]
\end{tabular}
\begin{tablenotes}
\item[a] {Finite difference method is used.}
\item[b] 
The solutions of the Monte-Carlo simulation and advection-diffusion equations shown in Sections \ref{lin_ligands:1D} to \ref{exp_ligands:2D} become stationary at $t=200.$ 
\item[c] {The time step size and the space step size in the finite difference formula
satisfy
Courant-Friedrichs-Lewy (CFL) condition and von Neumann stability analysis, respectively.
It is confirmed that using smaller step sizes does not affect our results as long as the stability conditions are satisfied.
}
\item[d] {We choose small value for $\Delta t$ to solve the dynamics of methylation by Euler method.}
\item[e] {The values of $p$ and $\gamma$ satisfy the shallow condition.}
\end{tablenotes}
\end{threeparttable}
}
\end{table}

\end{landscape}

\end{document}